\tikzstyle{texthere}=[inline text, font={\footnotesize}]
\tikzstyle{small box}=[rectangle, inline text, fill=white, draw, minimum height=7.5mm, yshift=-0.5mm, minimum width=7.5mm, font={\small}]
\tikzstyle{arrow}=[->]
\DeclarePairedDelimiter\bra{\langle}{\rvert}
\DeclarePairedDelimiter\ket{\lvert}{\rangle}
\DeclarePairedDelimiterX\braket[2]{\langle}{\rangle}{#1 \delimsize\vert #2}
\newcommand{\tb}[1]{\text{\textnormal{\textbf{\texttt{#1}}}}}
\newcommand{\sem}[1]{\llbracket #1 \rrbracket}
\newcommand{\p}{\mathtt{p}}
\newcommand{\q}{\mathtt{q}}
\newcommand{\qr}{\mathtt{r}}
\newcommand{\qc}{\mathtt{c}}
\newcommand{\qt}{\mathtt{t}}
\newcommand{\pfin}[1]{\mathcal{P}_{\textup{fin}}(#1)}
\newcommand{\smx}[1]{\left(\begin{smallmatrix}#1\end{smallmatrix}\right)\!}
\newcommand{\Hilb}{\mathcal{H}}
\newcommand{\vertiii}[1]{{\left\vert\kern-0.25ex\left\vert\kern-0.25ex\left\vert #1
    \right\vert\kern-0.25ex\right\vert\kern-0.25ex\right\vert}}
\theoremstyle{acmdefinition}
\newtheorem{remark}[theorem]{Remark}}
\title{\!Quantum Control and General Recursion beyond the Unitary Case}
\author{Kathleen Barsse}
\affiliation{
  \institution{Université de Lorraine, CNRS, Inria, LORIA}
  \country{France}
}
\author{Romain Péchoux}
\affiliation{
  \institution{Université de Lorraine, CNRS, Inria, LORIA}
  \country{France}
}
\author{Simon Perdrix}
\affiliation{
  \institution{Université de Lorraine, CNRS, Inria, LORIA}
  \country{France}
}
\begin{document}

\begin{abstract}
Coherent control, \emph{aka} quantum control, is a central concept in quantum computing that is attracting increasing attention 
from both the quantum foundations and quantum software communities. 
Defining coherent control in the presence of recursion and measurement has long been known to be a major challenge. In particular, no-go results have been established for standard semantical domains like completely positive maps. We address this problem by introducing the first quantum programming language with recursion that allows for the coherent control of arbitrary quantum operations. We equip this language with both an operational  and a denotational semantics that we prove to be adequate. To design these semantics, we  show that combining coherent control, recursion, and measurement crucially requires describing the evolution of  subprograms in the absence of input.
To address this, the operational semantics takes into account a default evolution branch, while the denotational semantics uses the concept of coherent quantum operation, based on vacuum extensions.
We strengthen the validity of our approach by developing an observational equivalence: two programs are equivalent if their probability of termination is the same in any context.
The denotational semantics is shown to be fully abstract with respect to this observational equivalence.
\end{abstract}

\maketitle

\section{Introduction}

\subsection{Context and Motivations}

In quantum computing, the execution flow of operations acting on quantum states is generally carried out in a fixed or classically controlled manner, based on classical information such as the classical outcomes of quantum measurements. 
This is the ``quantum data, classical control'' paradigm, which is known to enable a speedup in comparison to classical algorithms. One step further, one can also consider the superposition of quantum processes themselves, depending on a quantum state. This notion is referred to as \emph{coherent control} or \emph{quantum control}. A fundamental example of coherent control is the \emph{quantum case}.  This process can be thought of as a quantum version of the classical ``\tb{if}'' statement where the two branches can run in a superposition.
Namely, in the case of unitary evolutions, the quantum case acts on a pair of control and target qubits as follows: if the control qubit is in state $\ket{0}$, a unitary $U$ is applied to the target qubit, if the control qubit is in state $\ket{1}$, a unitary $V$ is applied to the target qubit, otherwise $U$ and $V$ are applied in a coherent superposition depending on the state of the control qubit.

In comparison to quantum processes whose execution flow is fixed or classically controlled, models with coherent control are known to provide some advantages in computational complexity~\cite{unitary2012,unitary2014,araujo,kristjansson_exponential} and communication tasks~\cite{ebler,grenoble_coherent_control,kristjansson}.
For this reason, the possibility of including coherent control in quantum programming languages is attracting a great deal of interest (see~\cite{valiron} for an overview).

The main challenge is to combine  \emph{quantum control}  with the key features of a quantum programming language: \emph{general recursion}, and  the capability to handle not only unitary evolutions but also \emph{measurements}. Notice that there are already solid contributions that combine two of these three features, that we briefly review bellow.

Without quantum control, it is well-known that completely positive maps, or quantum channels, can be used to represent both unitary evolutions and measurements. 
Moreover, Selinger \cite{QPL} showed that, when equipped with the Löwner partial ordering, completely positive maps form a DCPO, thereby enabling the use of domain-theoretic fixed‑point techniques for general recursion.
Nevertheless, incorporating quantum control into this framework is far from straightforward, 
as the quantum case operation is incompatible with the Löwner ordering \cite{badescu}.

To circumvent these no-go results, a compelling line of research considers pure evolutions only, which entails dropping measurement capabilities and adopting quantum-controlled recursion. Indeed, whereas recursion usually relies on measurements to determine whether recursive calls should be made, one can consider quantum controlled recursion in which recursive calls are performed in superposition. However, suitable definitions of quantum recursion, such as a quantum while loop, involves not only to consider systems of infinite dimension, but also to define the outcomes of programs as limits of infinite computations~\cite{ying_recursion,ying_book,qloop2025}. This makes its practical significance questionable, as for instance the notion of termination is unclear in this context.

Finally, in the absence of general recursion, combining quantum control with measurement raises the issue of controlling general quantum operations (beyond the unitary case), which is non trivial.
Indeed, although the quantum case  is well defined on unitary maps, namely as $(U,V)\mapsto \ket{0}\bra{0}\otimes U +\ket{1}\bra{1}\otimes V$, its extension to completely positive maps is ill-defined~\cite{badescu}.
In order to support control of non unitary operations, it is necessary to choose a different semantic domain.
This is achieved in~\cite{ying_control_flow,ying_alternation,badescu}, with semantics relying on Kraus decompositions.
Notice that in recent years, the question of coherently controlling arbitrary quantum operations (in the absence of general recursion) has been intensively studied in foundations of quantum mechanics, where it is pointed out that coherent control of quantum operations requires to specify additional information on the implementation of each controlled channel~\cite{kristjansson,grenoble_coherent_control}. Various frameworks have been introduced to this end, including \emph{vacuum-extended operations}~\cite{kristjansson,kristjansson_1}, \emph{routed channels}~\cite{V21} and Kraus decompositions with input environment states~\cite{grenoble_coherent_control}.

In summary,  the development of a full-fledged programming language that combines quantum control with general recursion and measurement has remained an open problem for over a decade, as existing frameworks supporting general recursion either lack quantum control or quantum measurement. Solving it is nevertheless essential to the development of high-level quantum programming languages in their full generality.

\subsection{Contributions}

In this paper, we solve the above longstanding problem by providing the first semantically well-behaved programming language (Figure~\ref{fig:syntax}) allowing for measurements, recursion and quantum control beyond the unitary case.
Quantum control is formulated as a quantum case primitive called ``\tb{qcase}'', which we present as a quantum version of the classical conditional statement. The novelty of our programming language is the possibility for both coherent control of arbitrary quantum processes (i.e., not restricted to unitary maps) and recursion via a classically controlled while loop.

In order to assign a formal meaning to programs, we construct an operational semantics, consisting in a probabilistic big-step transition system, and a denotational semantics. Despite the apparent issues surrounding the definition of the quantum case~\cite{badescu}, this operation is physically meaningful.
Intuitively, it requires some additional information that describes the behavior of the program in the absence of inputs (see Figure~\ref{fig:qc}).

In the operational semantics, this additional information takes the form of a default transition for each probabilistic statement (Figure~\ref{fig:semantics_operational}).
In the denotational semantics, the additional information is expressed using the framework of \emph{coherent quantum operations} (Definition~\ref{def:cqo} and Figure~\ref{fig:semantics_denotational}), which take inspiration from vacuum-extended operations~\cite{kristjansson,kristjansson_1}. Coherent quantum operations are shown to form a pointed DCPO (Proposition~\ref{prop:cqoDCPO}) allowing for domain-theoretic fixed‑point techniques for general recursion in the presence of quantum control and arbitrary quantum operations.
Hence, this enables us to define the semantics of the quantum case in a physically meaningful way, without limiting which operations are considered controllable,  while ensuring the usual expected properties such as compositionality.

We prove the following main properties of the language:
\begin{itemize}
\item Universality (Theorem~\ref{thm:universality}): every quantum operation can be implemented by a program in the language. Precisely, every quantum operation can be implemented by a coherent quantum operation, and every coherent quantum operations can be realized as the interpretation of some program.
Moreover, approximate universality is shown when the set of built-in unitary gates is restricted to the Hadamard gate $H$ and the $T$-gate (Proposition~\ref{prop:approx_universality}).
\item Adequacy (Theorem~\ref{thm:adequacy}): the independently-defined operational and denotational semantics are adequate, meaning that they describe the same behavior of programs. Consequently, this links the default transition approach of the operational semantics to the coherent quantum operation approach of the denotational semantics.
\item Full abstraction (Theorem~\ref{thm:fullabstraction}): we introduce a notion of observational equivalence (Definition~\ref{def:observational_equiv}), which states that two programs are observationally equivalent if their probability of termination is the same whatever the context. This formalizes the idea that the observable behavior of two programs is indistinguishable. We show that the denotational semantics  is fully abstract for observational equivalence. Hence, the denotational semantics exactly captures the observable behavior of programs.
\end{itemize}
All these concepts are illustrated by numerous examples scattered throughout the paper, giving an idea of how expressive the language can be.

\subsection{Overview}

We now give a brief overview in order to convey the intuition and explain our design choices. The language includes single-qubit unitaries, quantum measurement, qubit discarding, and while loops, as well as a primitive for quantum control, written as $\tb{qcase}$ (quantum case).
The language uses only single-qubit unitary gates because multi-qubit gates can be simulated by programs of the language. Its syntax can be viewed as that of QPL~\cite{QPL} extended by a quantum case. Given statements $S_0$ and $S_1$,
\[
 \tb{qcase}\ \q\ (0\rightarrow S_0,1\rightarrow S_1)
\]
is the statement that executes $S_0$ and $S_1$ in a coherent superposition, depending on the state of the control qubit $\q$. Importantly,  qubit $\q$ is not measured during this process. This ensures that the resulting process is a superposition of the two possible statements rather than a probabilistic mixture. The standard measurement operation is instead written with the $\tb{meas}$ primitive:
\[
 \tb{meas}\ \q\ (0\rightarrow S_0,1\rightarrow S_1)
\]
The control qubit $\q$ is measured in the computational basis: if the outcome is $0$, then $S_0$ is executed, and if the outcome is $1$, then $S_1$ is executed.
For both $\tb{qcase}$ and $\tb{meas}$, neither of the statements $S_0$ or $S_1$ is required to be unitary. Regarding recursion, we include a $\tb{while}$ primitive whose syntax is the following:
\[
 \tb{while}\ \q\ \tb{do}\ S
\]
$\tb{while}$ is a measurement-based loop: with each iteration, the control qubit q is measured in the computational basis, and the loop is exited once output 0 is obtained.
Integrating loops into a quantum programming language brings up considerations that are not encountered on lower-level models such as quantum circuits.
For example, the principle of deferred measurement~\cite{nielsen}, which states that all measurements can be postponed to the end of the computation, no longer holds in the presence of classically controlled while loops.

The operational semantics defines the evolution of an input state $\ket{\psi}$ induced by a program. To explain the semantics of $\tb{qcase}$, consider the statement $\tb{qcase}\ \q \ (0\rightarrow S_0,1\rightarrow S_1)$. We begin by examining the base cases. If the input has the form $\ket{0}_\q \otimes \ket{\psi}$, since the control qubit is in state $\ket{0}$, the target input $\ket{\psi}$ is passed to $S_0$, and $S_1$ does not receive any input, as illustrated by Figure~\ref{sfig0}.
\begin{figure}[!h]
\begin{subfigure}[t]{0.2\textwidth}
\[\scalebox{1}{\tikzfig{full0}}\]
 \caption{No input on $S_1$}
        \label{sfig0}
\end{subfigure}
\begin{subfigure}[t]{0.2\textwidth}
\[\scalebox{1}{\tikzfig{full1}}\]
 \caption{No input on $S_0$}
        \label{sfig1}
\end{subfigure}
\caption{Quantum Case on Base Cases}\label{fig:qc}
\end{figure}
And conversely, if the input has the form $\ket{1}_\q \otimes \ket{\psi}$, the target input $\ket{\psi}$ is passed to $S_1$, and $S_0$ does not receive any input (see Figure~\ref{sfig1}).
The action on superpositions is obtained by linearity. Therefore, it is necessary to define the action of programs both on regular inputs and on the empty input. Given that the transition system is probabilistic, it suffices to assign to each program a default transition: when the input is empty, this is the transition that occurs. Accordingly, transitions will have the form $[S,\ket{\psi}]_\Gamma \stackrel{\nu}{\to} \ket{\psi'}$, for some  program statement $S$, input state $\ket{\psi}$, and output state $\ket{\psi'}$. The bit $\nu \in \{0,1\}$ indicates whether this transition is the default one, and the transition occurs with probability $\frac{\|\ket{\psi'}\|^2}{\|\ket{\psi}\|^2}$, if $\ket{\psi}\neq 0$ (probability $0$ otherwise).

The denotational semantics is described using coherent quantum operations.
A coherent quantum operations is a pair $(\mathcal{C},F)$, where $\mathcal{C}$ is a quantum operation (i.e., a completely positive trace non-increasing linear map), and $F$ is a linear map called a \emph{transformation matrix}.
Although it is not possible to define the quantum case operation on quantum operations only, the adjunction of a transformation matrix (interpreted as necessary information about the implementation of the quantum channel in \cite{grenoble_coherent_control,kristjansson}) carries sufficient additional information for well-defined quantum control.
Therefore, the denotational semantics assigns to each program a pair $(\mathcal{C},F)$.

Adequacy (Theorem~\ref{thm:adequacy}) states that the operational and denotational semantics describe the same behavior of programs. Consequently, this result establishes a link between the addition of default transitions in the operational semantics, and the addition of transformation matrices in the denotational semantics. Full abstraction result (Theorem~\ref{thm:fullabstraction}) takes this correspondence further by showing that the denotational semantics matches the notion of observational equivalence.
As an example of observational equivalence, we will show that for all statements $S_0,S_0',S_1,S_1'$, the programs $\tb{qcase }\q\ (0\rightarrow S_0,\ 1\rightarrow S_1); \ \tb{qcase }\q\ (0\rightarrow S_0',\ 1\rightarrow S_1')$ and $\tb{qcase }\q\ (0\rightarrow S_0;S_{0}',\ 1\rightarrow S_1;S_1')$ are observationally equivalent, which is essentially an interchange law for the quantum case (Example~\ref{ex:interchange}). Conversely, we will show that $\tb{qcase} \ \q \ (0\rightarrow S,1\rightarrow S)$ is generally not equivalent to $S$ (Example~\ref{ex:qcase_of_coin}).

\subsection{Related Work}

There exist a variety of programming languages that support coherent control.
It can take the form of
quantum $\tb{if}$ statements~\cite{altenkirch,FBQP,ying_laws,ying_verification,silq}, generalizations of pattern matching~\cite{sabry,qunity,dave2025combining}, superpositions of terms in approaches based on the $\lambda$-calculus~\cite{lineal,realizability,lambda_s1}, superpositions of the wiring between quantum gates~\cite{addressable_gates}, or superpositions in the state of the program counter~\cite{quantum_control_machine,zhang}.
 The universality result on the restriction of our language to $H$ and $T$-gates  (Proposition~\ref{prop:approx_universality}) can be seen as an alternative to the language~\cite{HLMR25} which achieves universality by combining a global phase with a pattern-matching construct on the computational and Hadamard bases.
In all the above examples, quantum control exclusively concerns unitary operations.
Languages with control of non unitary operations were developed in~\cite{ying_control_flow,ying_alternation,badescu}, where the semantics is based on Kraus decompositions. However, these languages do not include primitives for recursion, be it classical or quantum.

Though not directly studied here, the question of defining quantum controlled recursion is closely related to our work. The first definition of quantum controlled recursion was proposed in~\cite{ying_recursion}, using Fock spaces to represent the state of an arbitrarily large number of control qubits. However, the interpretation of programs is not Scott-continuous. More recently, in~\cite{ying_verification}, the authors define recursive procedures parameterized by classical expressions, in the presence of a quantum conditional. In~\cite{qloop2025}, the authors define a semantics of quantum loops as linear maps over infinite-dimensional Hilbert spaces. None of the three references above feature measurement; to our knowledge, the state of the art does not provide any framework for quantum recursion in the presence of measure.

From another perspective, the coherent control of arbitrary operations has been studied in the physics literature~\cite{universal_control,dong2019controlled,grenoble_coherent_control,kristjansson,kristjansson_1,V21}.
These models do not include recursion, or only finite recursion \cite{PBS}. 
It is useful to distinguish two types of quantum control, as was done in~\cite{kristjansson}. The first kind is the control of choice of quantum channels—where several alternative channels are applied in a superposition. This is the kind of control that is studied in this paper. The second is the control of causal orders, where instead of controlling which operation to apply, we control the order in which a set of operations are applied. A well-known example of control of orders is the \emph{quantum switch}~\cite{chiribella}, where two operations are performed in a superposition of orders. The former kind of control was shown to demonstrate an advantage in a communication task involving superpositions of depolarizing channels~\cite{grenoble_coherent_control}.

This paper is structured as follows. In Section~\ref{section:syntax}, we present the syntax and well-formedness rules of our language. In Sections~\ref{section:operational_semantics} and~\ref{section:denotational_semantics}, we describe the operational and denotational semantics, respectively. Then we present our main results: universality of the language for coherent quantum operations (Section~\ref{section:universality_all}), adequacy between the operational and denotational semantics (Section~\ref{section:adequacy_2}), and full abstraction (Section~\ref{section:observational_equivalence}). The detailed proofs of these results are given in the appendices. In Section~\ref{section:interpretation_design}, we discuss our design choices and their relation to the physical interpretation of the language. We give our summary and conclusion in Section~\ref{section:ccl}.

\section{A Programming Language for Quantum Control}
\label{section:syntax}

\subsection{Syntax}

We present a quantum language mixing classical and quantum control. Let $(\mathcal{X}, \prec)$ be a totally strictly ordered and countably infinite set of variables $\p,\q, \ldots$
The \emph{statements} of the language are described by the grammar in Figure~\ref{fig:syntax}.
We provide some brief explanations on statements.
$\tb{skip}$ is a no-op statement. $\tb{new qbit }\q$ initializes the qubit variable $\q$ in the state $\ket{0}$, and $\tb{discard}$ traces out its argument. The statement $\q {*}{=}U$ applies the single-qubit unitary gate $U$ to $\q$.\footnote{As we will see in Section~\ref{section:universality_approx}, the considered gates will be restricted to $H$ and $T$.}
$S_0;S_1$ applies $S_0$ then $S_1$ in a sequence. The statement $\tb{meas}\ \q\ (0\rightarrow S_0,1\rightarrow S_1)$ performs a standard computational basis measurement and executes $S_0$ or $S_1$ depending on the outcome. The statement $\tb{while\,}\q\ \tb{do\,}S$ is a classically-controlled loop: with each iteration, the control qubit $\q$ is measured in the computational basis, and the loop is exited once outcome $0$ is obtained. The statement $\tb{qcase}\ \q \ (0\rightarrow S_0,1\rightarrow S_1)$ is a quantum conditional without measurement of the control qubit $\q$.
Var$(S)$ is inductively defined as the set of variables appearing in the statement $S$.

\begin{figure}
 \hrulefill
 \begin{align*}
 S,S_0,S_1 ::=\ & \ \tb{skip}\ |\  \tb{new qbit }\q\ |\ \tb{discard }  \q \ |\ \q{*}{=}U\ |\ S_0;S_1
 \\
 |\ & \ \tb{meas}\ \q\ (0\rightarrow S_0,1\rightarrow S_1)   \ | \  \tb{while} \ \q\ \tb{do}\ S
 \\
 |\ & \ \tb{qcase}\ \q \ (0\rightarrow S_0,1\rightarrow S_1)
\end{align*}
         \caption{Syntax of Statements }
         \label{fig:syntax}
     \hrulefill
\end{figure}

An \emph{environment} $\Gamma\in \pfin{\mathcal X}$ is a finite set of variables, written as $\Gamma \triangleq \q_1,\ldots,\q_n$, where each variable is implicitly assigned the type qubit. Let $\emptyset$ denote the empty environment. $|\Gamma|$ is the cardinal of $\Gamma$.
Let
$\cdot,\cdot:\pfin{\mathcal X}\times \pfin{\mathcal X}\to \pfin{\mathcal X}$ be the partial function such that $\Gamma,\Delta$ is equal to $\Gamma\cup \Delta$ if $\Gamma \cap \Delta = \emptyset$, and is undefined otherwise.

A \emph{program} is a pair $(\Gamma;S) $ consisting of a statement $S$ and an input environment $\Gamma$.

\subsection{Well-Formedness}\label{ss:wf}

We introduce a well-formedness condition in order to guarantee that programs can be given a physical meaning. A \emph{judgment} is an expression of the form
\[
 \Gamma \vdash S\triangleright \Delta
\]
where $S$ is a statement and $\Gamma$ as well as $\Delta$ are environments. Well-formedness is defined by the set of inference rules in Figure~\ref{fig:wellformedness}.
A program $(\Gamma;S)$ is said to be \emph{well formed} if  there exists an environment $\Delta$ such that $\Gamma \vdash S \triangleright \Delta$ is derivable; and a statement $S$ is said to be \emph{well formed} if there exist $\Gamma$ such that $(\Gamma;S)$ is a well-formed program. Throughout the paper, we will only consider well-formed programs.

\begin{figure*}
\hrulefill
\begin{minipage}{\textwidth}
\[
\begin{prooftree}
 \infer0[
 ]{\Gamma \vdash \tb{skip} \triangleright \Gamma}
\end{prooftree}
\qquad
\begin{prooftree}
 \infer0[
 ]{ \Gamma \vdash \tb{new qbit }\q\  \triangleright \q , \Gamma }
\end{prooftree}
\qquad
\begin{prooftree}
 \infer0[
 ]{ \q,\Gamma \vdash \tb{discard }\q\  \triangleright \Gamma}
\end{prooftree}
\qquad
\begin{prooftree}
 \infer0[
 ]{\q,\Gamma \vdash \q{*}{=}U \triangleright \q,\Gamma}
\end{prooftree}
\qquad
\begin{prooftree}
 \hypo{\Gamma \vdash S_0 \triangleright \Gamma'}
 \hypo{\Gamma'\vdash S_1 \triangleright \Gamma''}
 \infer2[
 ]{\Gamma \vdash S_0;S_1 \triangleright\Gamma''}
\end{prooftree}
\]
\[
\begin{prooftree}
 \hypo{\q,\Gamma \vdash S_0  \triangleright \Gamma'}
 \hypo{\q,\Gamma\vdash S_1 \triangleright\Gamma'}
 \infer2[
 ]{\q,\Gamma \vdash \tb{meas }\q\ (0\rightarrow S_0,1\rightarrow S_1) \triangleright \Gamma'}
\end{prooftree}
\qquad
 \begin{prooftree}
 \hypo{\q,\Gamma\vdash S \ \triangleright \q,\Gamma}
  \infer1[
  ]{\q,\Gamma\vdash \tb{while }\q \tb{ do }S \triangleright \q,\Gamma}
 \end{prooftree}
 \qquad
\begin{prooftree}
 \hypo{\Gamma \vdash S_0 \triangleright \Gamma'}
 \hypo{\Gamma \vdash S_1 \triangleright \Gamma'}
 \hypo{\q\notin \text{Var}(S_0)\cup\text{Var}(S_1)}
\infer3[
]{\q,\Gamma \vdash \tb{qcase}\ \q\ (0\rightarrow S_0,1\rightarrow S_1) \triangleright \q,\Gamma'}
\end{prooftree}
\]
\end{minipage}
\caption{Well-Formedness Rules}

\label{fig:wellformedness}
\hrulefill
\end{figure*}

In particular, well-formedness forbids the branches of a $\tb{qcase}$ statement from accessing the control qubit. Moreover, it excludes statements such as  $\tb{meas }\p \ (0\rightarrow \tb{new qbit }\q,1\rightarrow \tb{skip})$ where the branching statements lead to inconsistent sets of variables in the output.

For a well-formed program $(\Gamma;S)$, there exists exactly one admissible environment $\Delta$ for which $\Gamma\vdash S \triangleright \Delta$ is derivable. We write this unique output environment as $\Gamma^S$. However, the input environment $\Gamma$ is not unique. 
For instance, if $\Gamma \vdash S \triangleright\Delta$ is derivable and $\q\notin \Gamma\cup \text{\emph{Var}}(S)$, then $\q,\Gamma \vdash S \triangleright \q,\Delta$ is derivable.

To ease the definition of the semantics, we do not consider alpha conversion, and as a consequence the statement $\q{*}{=}U;$ $ \tb{new qbit }\q;$ $ \tb{discard } \q$ is not well formed. We thus define the set of \emph{bound variables} of a statement,
which intuitively correspond to the set of variables that are initialized then deleted inside the program, appearing in neither its input nor its output environment.
Formally, the set of bound variables of a well-formed statement $S$ is defined as BV$(S)\triangleq \text{\emph{Var}}(S)\setminus (\Gamma\cup\Gamma^S)$, where $\Gamma$ is an environment such that $(\Gamma;S)$ is a well-formed program.
The set BV$(S)$ is well defined because it does not depend on the choice of $\Gamma$.

\section{Operational Semantics}
\label{section:operational_semantics}

We now present a big-step operational semantics for well-formed programs of the language. The operational semantics describes the evolution of an input state induced by a program statement in terms of a probabilistic transition system.

\subsection{Quantum States and Configurations}

We recall some standard definitions from quantum computing and introduce the necessary notations.
A quantum system is represented by a Hilbert space $\Hilb$, called its \emph{state space}. In this paper, Hilbert spaces will always be assumed to be finite-dimensional, and we assume that standard notations such as the inner product and the tensor product $\otimes$ are familiar to the reader. The state of that quantum system is given by a vector $\ket{\psi}\in \Hilb$ whose norm $\|\ket{\psi}\|$ is at most 1.\footnote{We consider subnormalized states rather than the usual normalized states in order to encode the probabilistic behavior of the language in the norms of states.} We write the set of states of $\Hilb$ as $St(\Hilb)$.

Each program variable in $\mathcal{X}$ will correspond to a distinct qubit system, which is given by a 2-dimensional Hilbert space. Accordingly, to each environment $\Gamma$ we assign the Hilbert space $\Hilb_{\Gamma}\triangleq\mathbb{C}^{2^{|\Gamma|}}$, so that each variable name of $\Gamma$ corresponds to a separate 2-dimensional subsystem. 
The set of \emph{states} is defined by $\tb{States}\triangleq \cup_\Gamma St(\Hilb_{\Gamma})$.

As environments are sets (and so unordered), the order of subsystems is determined by the total ordering $\prec$ on $\mathcal{X}$. For instance, if $\Gamma=\p,\q$ with $\p\prec \q$, we have $\Hilb_{\Gamma}=\mathbb{C}^4$ and we assume that $\p$ refers to the first qubit and $\q$ refers to the second qubit.

Given two environments $\Gamma$ and $\Delta$ s.t. $\Gamma \cap \Delta = \emptyset$, the intertwining  map $\iota_{\Gamma;\Delta}:St(\Hilb_{\Gamma})\otimes St(\Hilb_{\Delta}) \to St(\Hilb_{\Gamma,\Delta})$ is defined inductively as the linear map satisfying:
\begin{align*}
\iota_{\emptyset,\emptyset} &\triangleq id \\
\iota_{(\q_{min}, \Gamma); \Delta}(\ket{ax},\ket y) &\triangleq \ket a\otimes \iota_{\Gamma;\Delta}(\ket x,\ket y)\\
 \iota_{\Gamma;(\q_{min},\Delta)}(\ket{x},\ket {ay}) &\triangleq \ket a\otimes \iota_{\Gamma;\Delta}(\ket x,\ket y),
 \end{align*}
  with  $\forall \q \in \Gamma \cup \Delta,\ \q_{min} \prec \q$ and $a\in \{0,1\}$.  
  
For $\ket{\psi}\in St(\Hilb_{\Gamma})$ and $\ket{\phi}\in St(\Hilb_{\Delta})$, define:
\[
 \ket{\psi}_{\Gamma}\otimes \ket{\phi}_{\Delta} \triangleq \iota_{\Gamma;\Delta}(\ket{\psi},\ket{\phi})
\]

Intuitively, this notation indicates that the variables of $\Gamma$ and $\Delta$ are put into the correct order according to $\prec$.
As a particular case, the expression $\ket{\psi}_\q\otimes \ket{\phi}_{\Gamma}$ inserts the qubit $\q$ into the correct position. Using these notations, the choice of ordering $\prec$ will remain implicit in the definition of the operational and denotational semantics.

Given $\ket{\psi}\in St(\Hilb_{\q,\Gamma})$,  we write $\bra{0}_\q\ket{\psi}$ to denote the unique state $\ket{\psi_0}$ such that $\ket{\psi}=\ket{0}_\q\otimes\ket{\psi_0}_\Gamma+\ket{1}_\q\otimes\ket{\psi_1}_\Gamma$.
For an environment $\q,\Gamma$, let $U_{\q}$ be defined as the unitary map $U$ applied to the qubit $\q$ and identity elsewhere.

Finally, a \emph{configuration} is a tuple $[S,\ket{\psi}]_{\Gamma}$ where $(\Gamma;S)$ is a well-formed program and $\ket{\psi}$ is a state in $St(\Hilb_\Gamma)$. Let $\tb{Conf}$ denote the set of configurations.

\begin{figure*}
\hrulefill
\begin{minipage}{\textwidth}
\[
 \scalebox{0.9}{
 \begin{prooftree}
  \infer0[(SK)]{[\tb{skip},\ket{\psi}]_\Gamma\stackrel{1}{\to} \ket{\psi}}
 \end{prooftree}}
 \qquad
   \scalebox{0.9}{
 \begin{prooftree}
  \infer0[(N)]{[\tb{new qbit }\q,\ket{\psi}]_{\Gamma}\stackrel{1}{\to} \ket{0}_{\q}\otimes \ket{\psi}_\Gamma}
 \end{prooftree}}
 \qquad
  \scalebox{0.9}{
 \begin{prooftree}
 \infer0[(D$_{0}$)]{[\tb{discard }\q,\ket{\psi}]_{\q,\Gamma}\stackrel{1}{\to} \bra{0}_{\q} \ket{\psi}}
 \end{prooftree}}
 \]

 \[
   \scalebox{0.9}{
 \begin{prooftree}
 \infer0[(D$_{1}$)]{[\tb{discard }\q,\ket{\psi}]_{\q,\Gamma}\stackrel{0}{\to} \bra{1}_{\q} \ket{\psi}}
 \end{prooftree}} 
 \qquad
 \scalebox{0.9}{
 \begin{prooftree}
  \infer0[(U)]{[\q{*}{=}U,\ket{\psi}]_{\q,\Gamma}\stackrel{1}{\to} U_{\q}\ket{\psi}}
 \end{prooftree}}
 \qquad
 \scalebox{0.9}{
 \begin{prooftree}
 \hypo{[S_0,\ket{\psi}]_\Gamma\stackrel{\nu_0}{\to} \ket{\psi'}}
 \hypo{[S_1,\ket{\psi'}]_{\Gamma^{S_0}}\stackrel{\nu_1}{\to} \ket{\psi''}}
  \infer2[(S)]{[S_0;S_1,\ket{\psi}]_{\Gamma}\stackrel{\nu_0\nu_1}{\to} \ket{\psi''}}
 \end{prooftree}} 
\]

\[
 \scalebox{0.9}{
 \begin{prooftree}
 \hypo{[S_0,\ket{0}\bra{0}_{\q}\ket{\psi}]_{\q,\Gamma}\stackrel{\nu}{\to} \ket{\psi'}}
 \infer1[(M$_0$)]{[\tb{meas}\ \q\ (0\rightarrow S_0,1 \rightarrow S_1),\ket{\psi}]_{\q,\Gamma}\stackrel{\nu}{\to} \ket{\psi'}}
 \end{prooftree}} 
\quad
 \scalebox{0.9}{
 \begin{prooftree}
 \hypo{[S_1,\ket{1}\bra{1}_{\q}\ket{\psi}]_{\q,\Gamma}\stackrel{\nu}{\to} \ket{\psi'}}
 \infer1[(M$_1$)]{[\tb{meas}\ \q\ (0\rightarrow S_0,1 \rightarrow S_1),\ket{\psi}]_{\q,\Gamma}\stackrel{0}{\to} \ket{\psi'}}
 \end{prooftree}} 
 \quad
 \scalebox{0.9}{
 \begin{prooftree}
 \infer0[(W$_0$)]{[\tb{while }\q \tb{ do}\ S,\ket{\psi}]_{\q,\Gamma}\stackrel{1}{\to} \ket{0}\bra{0}_{\q}\ket{\psi}}
 \end{prooftree}}
 \]

 \[
 \scalebox{0.9}{
 \begin{prooftree}
\hypo{[S;\tb{while} \ \q \ \tb{do}\ S,\ket{1}\bra{1}_{\q}\ket{\psi}]_{\q,\Gamma}\stackrel{\nu}{\to} \ket{\psi'}}
\infer1[(W$_1$)]{[\tb{while}\ \q\ \tb{do}\ S,\ket{\psi}]_{\q,\Gamma}\stackrel{0}{\to} \ket{\psi'}}
 \end{prooftree}}
\qquad
\scalebox{0.9}{
 \begin{prooftree}
 \hypo{[S_0,\bra{0}_{\q}\ket{\psi}]_\Gamma\stackrel{\nu_0}{\to} \ket{\psi_0}}
 \hypo{[S_1,\bra{1}_{\q}\ket{\psi}]_\Gamma\stackrel{\nu_1}{\to} \ket{\psi_1}}
 \infer2[(Q)]{[\tb{qcase }\q\ (0\rightarrow S_0,1 \rightarrow S_1),\ket{\psi}]_{\q,\Gamma}\stackrel{\nu_0\nu_1}{\to} \nu_1 \ket{0}_{\q}\otimes \ket{\psi_0}_{\Gamma^{S_0}}+\nu_0\ket{1}_{\q}\otimes\ket{\psi_1}_{\Gamma^{S_1}}}
 \end{prooftree}}
\]

\end{minipage}
\caption{Operational semantics}
\label{fig:semantics_operational}
\hrulefill
\end{figure*}

\subsection{Probabilistic Transition System}
\label{section:operational_semantics_def}

The big-step semantics is defined inductively  in Figure~\ref{fig:semantics_operational} as the probabilistic transition relation $\cdot \stackrel{\cdot}{\rightarrow} \cdot \subseteq \tb{Conf}\times \{0,1\}\times \tb{States}$.

The transition $[S,\ket{\psi}]_{\Gamma}\stackrel{\nu}{\to} \ket{\psi'}$ holds when the statement $S$ transforms the quantum state $\ket \psi$ into $\ket {\psi'}$. The Boolean value $\nu$
answers the question ``Is this the default transition?''.
That is, $\nu=1$ on default transitions, and $\nu=0$ otherwise.
The probability of a transition occurring is encoded in the norms of its input and output states, as is fairly standard (see for instance~\cite[Convention 3.3]{QPL}):
the transition $[S,\ket{\psi}]_\Gamma \stackrel{\nu}{\to} \ket{\psi'}$ occurs with probability $\frac{\|\ket{\psi'}\|^2}{\|\ket{\psi}\|^2}$ if $\ket{\psi}\neq 0$, and 0 otherwise.
Notice that the probability is independent of $\nu$.
In Figure~\ref{fig:semantics_operational}, the three probabilistic statements are $\tb{discard}$, $\tb{meas}$, and $\tb{while}$. They each give rise to two distinct rules, indexed by measurement outcomes. For each such statement, one of the two possible rules is set as the default one, namely (D$_0$), (M$_0$) and (W$_0$)\footnote{(D$_0$) and (W$_0$) are marked as default by the annotation $\nu=1$, and (M$_0$) inherits its default transition from the statement $S_0$, hence the annotation $\nu$.}.

In what follows, we will write $\pi\triangleright [S,\ket{\psi}]_\Gamma\stackrel{\nu}{\to}\ket{\psi'}$ to denote a derivation tree of root $[S,\ket{\psi}]_\Gamma\stackrel{\nu}{\to}\ket{\psi'}$ generated by applying the rules of Figure~\ref{fig:semantics_operational}.

We now provide some additional insights into the transition relation in Figure~\ref{fig:semantics_operational}.
Rule (SK) does nothing. Rule (N) rule prepares a new qubit in the state $\ket{0}$.
Rule (U) applies the unitary gate $U$ to the appropriate qubit. Rule (S) composes the action of statements $S_0$ and $S_1$. Moreover, its default transition is given by the default transitions for $S_0$ and $S_1$ in a sequence.
Rules (M$_0$) and (M$_1$) correspond to performing a computational basis measurement of the control qubit, and executing the branch corresponding to the measurement outcome.
Regarding the discard operation, as the operational semantics is defined on pure states, we follow a standard interpretation of $\tb{discard}~\q$ which consists in performing a standard basis measurement (equivalently to $\tb{meas}~\q\ (0\to \tb{skip}, 1\to \tb{skip})$) and then forbid the use of $\q$. Thus, Rules (D$_0$) and (D$_1$) project the qubit $\q$ onto their respective computational basis vectors.
Rules (W$_0$) and (W$_1$) implement a $\tb{while}$ loop by measuring the control qubit at every iteration, and exiting the loop when the outcome is $0$.
Finally, Rule (Q) for $\tb{qcase}$ statements reads as follows: for each statement $S_k$, $k \in \{0,1\}$, we are given a possible transition $[S_k,\bra{k}_{\q}\ket{\psi}]_{\Gamma}\stackrel{\nu_k}{\to} \ket{\psi_k}$.
Intuitively, the output state of  $\tb{qcase}\ \q\ (0\rightarrow S_0,1\rightarrow S_1) $ is a superposition of the cases where the input is given to $S_0$ or to $S_1$ depending on the state of $\q$. When $\q$ is in state $\ket 0$, the input is routed to $S_0$, and no input is given to $S_1$. Thus,  $S_0$ will produce the state $\ket{\psi_0}$ if $S_1$ follows the default transition. However, the current transition of $S_1$ might not be the default one (when $\nu_1=0$), in this case this branch of the superposition will not contribute to the final state, leading to the term $\nu_1\ket{0}_{\q}\otimes \ket{\psi_0}_{\Gamma^{S_0}}$. Similarly, when $\q$ is in state $\ket 1$ we get $\nu_0\ket{1}_{\q}\otimes \ket{\psi_1}_{\Gamma^{S_1}}$. Hence the final state $\nu_1 \ket{0}_{\q}\otimes \ket{\psi_0}_{\Gamma'}+\nu_0\ket{1}_{\q}\otimes\ket{\psi_1}_{\Gamma'}$.

When both $S_0$ and $S_1$  implement unitary evolutions (see Example~\ref{ex:cnot_op}), then $\nu_0=\nu_1=1$. More generally, $\nu_0$ and $\nu_1$ can take any value in $\{0,1\}$.

For instance,  with $S = \tb{qcase}~\q~ (0\to \tb{discard}~ \p, 1\to \tb{discard}~ \p)$ and $\ket \varphi = (\frac{\ket 0+\ket 1}{\sqrt 2})_\q \otimes (\frac{\ket 0+\ket 1}{\sqrt 2})_\p$, the following transitions hold:
\begin{align*}
[S,\ket\varphi]_{\{\q,\p\}} &\stackrel{1}{\to}  \frac{\ket 0+\ket 1}{2} &[S,\ket\varphi]_{\{\q,\p\}} &\stackrel{0}{\to}  \frac{\ket 0}{2}\\
[S,\ket\varphi]_{\{\q,\p\}} &\stackrel{0}{\to}  \frac{\ket 1}{2} &[S,\ket\varphi]_{\{\q,\p\}} &\stackrel{0}{\to}  0
\end{align*}
Thus, the state of qubit $\q$ at the end of the computation is $\frac{\ket 0 + \ket 1}{\sqrt 2}$ with probability $1/2$ (when $\nu_0=\nu_1=1)$; $\ket 0$ with probability $1/4$ (when $\nu_0=1$ and $\nu_1=0$); and   $\ket 1$ with probability $1/4$ (when $\nu_0=0$ and $\nu_1=1$). Notice that the case $\nu_0=\nu_1=0$ always occurs with probability $0$ by definition of Rule (Q).

Lastly, we define the probability of termination of programs. We consider all states attainable from a fixed configuration $[S,\ket{\psi}]_{\Gamma}$. These states will formally be described as a multiset to account for multiple derivations leading to the same state.

\begin{definition}[Multiset semantics]
\label{def:output_multiset}
The \emph{output multiset} of a configuration $[S,\ket{\psi}]_\Gamma \in \tb{Conf}$, is defined as the multiset
\[
 \mathscr{M}([S,\ket{\psi}]_\Gamma)\triangleq\biguplus_{\substack{\pi\triangleright [S,\ket{\psi}]_\Gamma\stackrel{\nu}{\to}\ket{\psi'}\\ (\ket{\psi'},\nu)\neq(0,0)}}\{(\ket{\psi'},\nu)\}
\]
where $\uplus$ denotes the disjoint union of multisets.
\footnote{The element $(0,0)$ is removed because it does not necessarily have finite multiplicity, as required by the definition of a multiset, and is irrelevant to program behavior.}
\end{definition}

\begin{definition}[Probability of termination]
\label{def:proba_termination}
For a configuration $[S,\ket{\psi}]_{\Gamma} \in \tb{Conf}$, such that $\ket{\psi}\neq 0$,
the \emph{probability of termination} of $S$ on an input state $\ket{\psi}$ is defined as $$p(S,\ket{\psi})_\Gamma\triangleq \sum_{(\ket{\psi'}, \nu ) \in \mathscr{M}([S,\ket{\psi}]_\Gamma)}  \frac{\|\ket{\psi'}\|^2}{\|\ket{\psi}\|^2}.$$
\end{definition}

From now on, we will sometimes omit the subscript $\Gamma$ in configurations $[S,\ket{\psi}]_{\Gamma}$ and probabilities $p(S,\ket{\psi})_\Gamma$ when it is clear from the context.

\subsection{Examples}
\label{section:ex_op}

\begin{example}[CNOT and SWAP]
\label{ex:cnot_op}
Let CNOT and SWAP be the standard 2-qubit unitary gates defined as follows on computational basis states 
\[
\text{CNOT}:\ket{x,y}\mapsto \ket{x,y\oplus x} \qquad
\text{SWAP}: \ket {x,y}\mapsto \ket{y,x}, 
\]
with $\oplus$ being the addition modulo 2. The CNOT acts by flipping the second qubit, called the target qubit, only when the first qubit, called the control qubit, is in state $\ket{1}$. The SWAP exchanges the states of the two qubits. Although the language only has single-qubit unitary gates built-in, the CNOT and SWAP gates can be implemented using the $\tb{qcase}$ primitive. For the CNOT, consider the following statement:
\begin{align*}
 S_{\text{CNOT}}^{\qc,\qt}\ \triangleq \ \tb{qcase}\ \qc \ (0 \rightarrow \tb{skip},1 \rightarrow \qt\ {*}{=} X)
\end{align*}
where $X$ is the Pauli matrix defined on computational basis states as $X:\ket x \mapsto \ket{1-x}$, $\qc$ represents the control qubit and $\qt$ represents the target qubit. Hence, one can derive the judgment
$\qc,\qt\vdash S_{\text{CNOT}}^{\qc,\qt} \triangleright \qc,\qt$. For all $\alpha\ket{00}+\beta\ket{01}+\gamma\ket{10}+\delta\ket{11}\in St(\Hilb_{\qc,\qt})$, assuming $\qc \prec \qt$ (i.e., $\qc$ is the first qubit), we have:
\[
\scalebox{0.9}{
 \begin{prooftree}
 \hypo{[\tb{skip},\alpha\ket{0}{+}\beta\ket{1}]
 \stackrel{1}{\to} \alpha\ket{0}{+}\beta\ket{1}
 }
 \hypo{[\qt \ {*}{=}X, \gamma\ket{0}{+}\delta\ket{1}]
  \stackrel{1}{\to} \gamma\ket{1}{+}\delta\ket{0}}
 \infer2[(Q)]{[S_{\text{CNOT}}^{\qc,\qt},\alpha\ket{00}{+}\beta\ket{01}{+}\gamma\ket{10}{+}\delta\ket{11}]
 \stackrel{1}{\to} \alpha\ket{00}{+}\beta\ket{01}{+}\gamma\ket{11}{+}\delta\ket{10}}
 \end{prooftree}}
\]
which matches the definition of the CNOT gate. Furthermore, since that SWAP gate can be decomposed into a sequence of three CNOT gates, it is implemented by the following statement:
\[
 S_{\text{SWAP}}^{\p,\q}\ \triangleq \ S_{\text{CNOT}}^{\p,\q}\ ;\ S_{\text{CNOT}}^{\q,\p}\ ;\ S_{\text{CNOT}}^{\p,\q}.
\]\end{example}

\begin{example}[Repeated coin flip]
\label{ex:coin_op}
Consider the program $(\q;\text{COIN})$, with:
\begin{align*}
 \text{COIN}\ \triangleq \   \tb{while} \ \q \tb{ do } \q {*}{=} H
 \end{align*}
and where $H\triangleq \ket{x} \mapsto \sfrac{1}{\sqrt{2}}(\ket{0}+(-1)^x \ket{1})$ is the Hadamard gate. The program performs a $\tb{while}$ loop on a qubit $\q$, applying $H$ with each iteration.
The judgment $\q \vdash \text{COIN}\triangleright \q$ is derivable.
The execution of the program proceeds as follows: with each iteration, the loop is exited with a certain probability, so that by the end of the execution, $\q$ is in state $\ket{0}$. More precisely, given an input state $\alpha \ket{0}+\beta\ket{1}\in St(\Hilb_\q)$, the first iteration is exited with probability $|\alpha|^2$, and every subsequent iteration with probability $\frac{1}{2}$. Formally, we consider the transitions from the configuration $[\text{COIN},\ket{\psi}]_\q$. First, by applying Rule (W$_0$) rule, we have
\[
\scalebox{0.9}{
\begin{prooftree}
 \infer0[(W$_0$)]{[\text{COIN},\alpha\ket{0}+\beta\ket{1}]\stackrel{1}{\to} \alpha \ket{0}_{\q}}
\end{prooftree}}
\]
which corresponds to exiting the loop immediately. By applying (W$_1$) once, then (W$_0$), we have
\[
\scalebox{0.9}{
 \begin{prooftree}
 \infer0[(U)]{\hspace{-0.05cm}[\q {*}{=} H,\beta \ket{1}]
 \stackrel{1}{\to} \sfrac{\beta}{\sqrt 2}(\ket 0{-}\ket 1)\hspace{-0.1cm}
 }
 \infer0[(W$_0$)]{\hspace{-0.05cm}[\text{COIN},\sfrac{\beta}{\sqrt 2}(\ket 0{-}\ket 1)]
\stackrel{1}{\to} \sfrac{\beta}{\sqrt{2}} \ket{0}
\hspace{-0.1cm}}
 \infer2[(S)]{[\q {*}{=} H ;\text{COIN},\beta\ket{1}]\stackrel{1}{\to} \sfrac{\beta}{\sqrt{2}}\ket{0}}
\infer1[(W$_1$)]{[\text{COIN},\alpha\ket{0}+\beta\ket{1}]\stackrel{0}{\to} \sfrac{\beta}{\sqrt{2}}\ket{0}}
 \end{prooftree}}
\]
which corresponds to exiting the loop after one iteration, and so on. Generally, the transition $[\text{COIN},\alpha\ket{0}+\beta\ket{1}]_\q\stackrel{0}{\to} -\beta\left(-\frac{1}{\sqrt{2}}\right)^k\ket{0}_\q $
can be obtained by applying (W$_1$) $k\geq 1$ times, for exiting after $k$ iterations.
Therefore:
\[
 \mathscr{M}([\text{COIN},\alpha\ket{0}+\beta\ket{1}]_\q)= \left\{\left(\alpha \ket{0},1\right)\right\}\uplus \left\{\left(-\beta\left(-\frac{1}{\sqrt{2}}\right)^k\ket{0},0\right)\right\}_{k\geq 1}
\]
Consequently, the probability of termination is $p(\text{COIN},\alpha\ket{0}+\beta\ket{1})_\q=1$, independently of $\alpha,\beta$.
\end{example}

\begin{example}[Measurement encoding]\label{ex:meas}
The statement
\begin{align*}
\text{$M_{S_0,S_1}$}\ \triangleq \ &
        \tb{new qbit } \q' \ ;\  S_{\text{CNOT}}^{\q,\q'}\ ;\\
        & \tb{qcase}\ \q'\ (0\rightarrow S_0,1\rightarrow S_1)\ ;\ \tb{discard }\q'
\end{align*}
implements the measurement $ \tb{meas}\ \q\ (0\rightarrow S_0, \ 1\rightarrow S_1)$. The statement works as follows: the $\tb{new qbit}$ and $S_{\text{CNOT}}^{\q,\q'}$ statements copy the qubit $\q$ onto a new qubit $\q'$ along computational basis states. This new qubit is then used as the control for the $\tb{qcase}$ statement, whose branches $S_0$ and $S_1$ are the same as those of the measurement statement we aim to implement. Note that $S_0$ and $S_1$ can act on $\q$. Afterwards, the auxiliary qubit $\q'$ is discarded. It can be shown that for all input state $\ket{\psi}$,
\[
 \mathscr{M}([M_{S_0,S_1},\ket{\psi}]_{\q,\Gamma})=\mathscr{M}([\tb{meas }\q\ (0\rightarrow S_0,1\rightarrow S_1),\ket{\psi}]_{\q,\Gamma}),
\]
justifying the equivalence of the two statements.\footnote{See details in Appendix~\ref{app:meas_build}.}
Consequently, the $\tb{meas}$ statement could be removed from the programming language without compromising its expressivity.
\end{example}

\section{Denotational Semantics}
\label{section:denotational_semantics}

We now give the denotational semantics for our language. The aim is to assign to each program some mathematical operation that fully describes its behavior. Standard denotational semantics for quantum programming languages are based on \emph{quantum operations} (i.e., completely positive trace non-increasing maps), which are known to be insufficient to describe the coherent control of arbitrary operations~\cite{badescu}. Therefore, we introduce the semantical domain of \emph{coherent quantum operations}, inspired by the so-called vacuum extension approach~\cite{kristjansson_1,kristjansson}, which we show to be well-suited to defining a denotational semantics in the presence of quantum control, general recursion, and general quantum evolutions.

We begin with an overview of the standard notions of quantum channels and quantum operations. Then we present the notion of vacuum extension and introduce the notion of coherent quantum operations.
Using this definition, we give the denotational semantics of programs.

\subsection{Quantum Operations and Channels}
\label{section:prelim_denotational}

Given two Hilbert spaces $\Hilb_A$ and $\Hilb_B$, let $\mathcal{L}(\Hilb_A, \Hilb_B)$ be the set of linear maps from $\Hilb_A$ to $\Hilb_B$. When $\Hilb_A=\Hilb_B$, we simply write $\mathcal{L}(\Hilb_A)$. The set of Hermitian maps over $\Hilb_A$ is written as $\text{Herm}(\Hilb_A)$.

To study probability distributions of quantum states, we use the density matrix representation, which captures both quantum superpositions and classical probability distributions (see for instance~\cite{nielsen}).
A density matrix over a finite-dimensional Hilbert space $\mathcal{H}$ is a positive semi-definite element of $\mathcal{L}(\Hilb)$ with trace at most one.
We write the set of density matrices over $\Hilb$ as $D(\Hilb)$.

The Löwner ordering $\leq$ is defined on the set of Hermitian matrices Herm($\Hilb_A$) as follows: $M\leq N$ if $M-N$ is positive semi-definite. The set of density matrices $D(\Hilb_A)\subseteq \text{Herm}(\Hilb_A)$, equipped with the Löwner ordering, is a pointed directed-complete partial order (pointed DCPO):

\begin{restatable}{proposition}{propdcpodensmat}
 \label{prop:dcpo_dens_mat}
$(D(\Hilb_A),\leq)$ is a pointed DCPO.
\end{restatable}

The most general evolutions of density matrices are given by \emph{quantum operations}.
A quantum operation from $\Hilb_A$ to $\Hilb_B$ is a higher-order linear map from $\mathcal{L}(\Hilb_A)$ to $\mathcal{L}(\Hilb_B)$ that is completely positive and trace non-increasing. These notions are defined as follows: $\mathcal{C}\in\mathcal{L}(\mathcal{L}(\Hilb_A), \mathcal{L}(\Hilb_B))$ is \emph{completely positive} if it preserves positive semi-definiteness when tensored with the identity.
That is, for all finite-dimensional Hilbert space $\Hilb_E$ and for all positive semi-definite $\rho\in \mathcal{L}(\Hilb_A\otimes \Hilb_E)$, the operator $(\mathcal{C}\otimes \mathcal{I}_E)(\rho)$ is also positive semi-definite, where $\mathcal{I}_E$ is the identity map over $\mathcal{L}(\Hilb_E)$. $\mathcal{C}$ is \emph{trace non-increasing}
if for all positive semidefinite $\rho\in \mathcal{L}(\Hilb_A)$, we have $Tr(\mathcal{C}(\rho))\leq Tr(\rho)$.
A \emph{quantum channel} from $\Hilb_A$ to $\Hilb_B$ is a quantum operation from $\Hilb_A$ to $\Hilb_B$ that is trace-preserving. Intuitively, quantum channels map probability distributions of quantum states (represented as density matrices) to probability distributions, whereas quantum operations map probability distributions of quantum states to probability subdistributions.

Throughout this paper, we will use the standard font $F$, $G$, $\ldots$ to denote linear maps, except when referring to higher-order maps such as quantum channels and operations, where we will use calligraphic letters $\mathcal{C}$, $\mathcal{D}$, $\mathcal{C}_0$, $\mathcal{C}_1$, $\dots$ In particular, we write the identity map over $\Hilb_A$ as $I_A$, and the identity channel over $\Hilb_A$ (i.e., the identity map over $\mathcal{L}(\Hilb_A)$) as $\mathcal{I}_A$.

Quantum operations and quantum channels form symmetric monoidal categories, which we write as $\tb{QO}$ and $\tb{QC}$, respectively. In both $\tb{QO}$ and $\tb{QC}$, objects are finite dimensional Hilbert spaces. In $\tb{QO}$, the morphisms $\tb{QO}(\Hilb_A, \Hilb_B)$ are given by quantum operations from $\Hilb_A$ to $\Hilb_B$; and in $\tb{QC}$, the morphisms $\tb{QC}(\Hilb_A, \Hilb_B)$ are given by quantum channels from $\Hilb_A$ to $\Hilb_B$. In both categories, the composition rule corresponds to the usual composition of linear maps, and the categorical operation $\otimes$ coincides with the linear-algebraic tensor product.

The Löwner ordering is extended to quantum operations as follows: for all $\mathcal{C},\mathcal{D}\in\tb{QO}(\Hilb_A, \Hilb_{B})$, $\mathcal{C}\leq \mathcal{D}$ if for all Hilbert space $\Hilb_E$ and for all $\rho\in D(\Hilb_A\otimes \Hilb_E)$, we have $(\mathcal{C}\otimes \mathcal{I}_E)(\rho)\leq(\mathcal{D}\otimes \mathcal{I}_E)(\rho)$.

\begin{restatable}{proposition}{propdcpoqoperations}
 \label{prop:dcpo_q_operations}
 $(\tb{QO}(\Hilb_A, \Hilb_{B}),\leq)$ is a pointed DCPO. Moreover, the supremum of a directed subset is given by the pointwise supremum, taken with respect to the Löwner ordering on Hermitian matrices.
\end{restatable}

For this reason, quantum operations are standard candidates to define the denotational semantics of quantum programs in the absence of quantum control. However, they are insufficient to defining the behavior of the $\tb{qcase}$ operation. To see this, consider the following. The quantum case is defined on unitary matrices as $(U,V)\mapsto \ket{0}\bra{0}\otimes U +\ket{1}\bra{1}\otimes V
$. To extend this function to quantum operations, we would like to define an operation that maps the corresponding unitary channels $\mathcal{C}_U \triangleq U(\cdot)U^{\dag}$ and $\mathcal{C}_V\triangleq V(\cdot)V^{\dag}$ to $W(\cdot)W^{\dag}$ with $W\triangleq\ket{0}\bra{0}\otimes U +\ket{1}\bra{1}\otimes V$.
However, it is not possible to define such a process~\cite{grenoble_coherent_control}. Indeed, with $U\triangleq I$ and $V\triangleq e^{i\theta}I$ for some phases $\theta\in[0,2\pi[$, the corresponding channels are $\mathcal{C}_U=\mathcal{C}_V=\mathcal{I}$, regardless of $\theta$. However, the channel $W(\cdot)W^{\dag}$ does depend on $\theta$. This is because the global phase of $V$ becomes a local phase of $W$, and thus becomes observable. Therefore it is not possible to define such a protocol on quantum operations.

In order to define an adequate quantum case operation, it is necessary to specify more information about the physical implementation of each input channel, similarly to the operational semantics. This can be achieved using the framework of \emph{vacuum extensions}, introduced in~\cite{kristjansson,kristjansson_1}.

\subsection{Vacuum Extensions and Coherent Quantum Operations}
\label{section:vacuum}

The idea of vacuum extensions is to augment the domain of quantum operations with an additional basis state $\ket{\text{vac}}$, that intuitively corresponds to the absence of input (vacuum). Thus, for any quantum operation $\mathcal C\in \tb{QO}(\Hilb_A, \Hilb_B)$, one can consider its vacuum extension $ \tilde{\mathcal C}\in \tb{QO}(\Hilb_A\oplus \text{Vac}, \Hilb_B\oplus \text{Vac})$ where $\text{Vac}$ is the one-dimensional Hilbert space spanned by $\ket {\text{vac}}$, and $\oplus$ is the direct sum (in particular, $\mathcal{H}_A\subseteq \mathcal{H}_A\oplus \text{Vac}$ and $\mathcal{H}_B\subseteq \mathcal{H}_B\oplus \text{Vac}$). Vacuum extensions are not arbitrary quantum evolutions between these larger Hilbert spaces—they must map the vacuum state to the vacuum state ($\tilde{\mathcal{C}}(\ket {\text{vac}}\bra {\text{vac}}) =\ket {\text{vac}}\bra {\text{vac}}$) and must coincide with $\mathcal{C}$ on non vacuum states ($\forall \ket \phi \in \Hilb_A, \tilde{\mathcal{C}}(\ket \phi\bra \phi) = \mathcal{C}(\ket \phi\bra \phi)$):

\begin{definition}\label{def:ve}
Let \tb{QO}$_\text{\bf vac}$ be the category of \emph{vacuum-extended quantum operations}, whose objects are finite dimensional Hilbert spaces, and whose morphisms are defined as follows. A vacuum-extended quantum operation $\mathcal{D}$ from $\Hilb_A$ to $\Hilb_B$ is a quantum operation $\mathcal D\in \tb{QO}(\Hilb_A\oplus \text{Vac}, \Hilb_B\oplus \text{Vac})$ such that $\mathcal{D}(\ket {\text{vac}}\bra {\text{vac}}) = \ket {\text{vac}}\bra {\text{vac}}$ and $\forall \ket{\phi} \in \Hilb_A, \bra{\text{vac}}\mathcal D(\ket\phi \bra \phi)\ket{\text{vac}} = 0$.\footnote{The composition of morphisms in \tb{QO}$_\text{\bf vac}$ is nothing but the composition in  \tb{QO}.}
\end{definition}

Given a quantum operation 
$\mathcal C\in \tb{QO}(\Hilb_A, \Hilb_B)$, its vacuum extension $\tilde{\mathcal{C}} \in \tb{QO}_\text{\bf vac} (\Hilb_A, \Hilb_B)$ is generally not unique--which might seem surprising as  its actions on the supplementary subspaces Vac and $\Hilb_A$ are fixed.
More precisely, for any $\mathcal C\in \tb{QO}(\Hilb_A, \Hilb_B)$ and any $F\in \mathcal L(\Hilb_A, \Hilb_B)$, one can define
\begin{align*}
\tilde{\mathcal{C}}^F  : & \left\{
 \begin{array}{ccl}
  \!\!\mathcal{L}(\mathcal{H}_A\oplus \text{Vac}) & \!\!\!\!\!\rightarrow\!\!\!\!\! & \mathcal{L}(\mathcal{H}_B\oplus\text{Vac}) \\
 \rho & \!\!\!\!\!\mapsto\!\!\!\!\!
 \ &\!\!
  \begin{array}{l}
  \mathcal{C}(P_A\rho P_A^\dag) + \ket {\text{vac}}\bra {\text{vac}}\rho\ket{\text{vac}}\bra{\text{vac}} \\
+ FP_A\rho \ket {\text{vac}}\bra {\text{vac}}+ \ket {\text{vac}}\bra {\text{vac}}\rho P_A^\dag F^\dagger
\end{array}
 \end{array}\right.
\end{align*}
where $P_A\in\mathcal{L}(\mathcal{H}_A\oplus\text{Vac},\mathcal{H}_A)$ is the orthogonal projector onto $\Hilb_A$.
We have $\tilde{\mathcal{C}}^F(\ket {\text{vac}}\bra {\text{vac}}) = \ket {\text{vac}}\bra {\text{vac}}$, and $\forall \ket{\phi} \in \Hilb_A$, \\ $\bra{\text{vac}}\tilde{\mathcal{C}}^F(\ket\phi \bra \phi)\ket{\text{vac}} = 0$.

While for arbitrary $\mathcal C$ and $F$, $\tilde{\mathcal{C}}^F$ is not necessarily a completely positive map, any vacuum-extended quantum operation is of that form: for any vacuum-extended quantum operation  $\mathcal D \in \tb{QO}_\text{\bf vac} (\Hilb_A, \Hilb_B)$, there exist a unique  $\mathcal{C} \in \tb{QO}(\Hilb_A, \Hilb_B)$ and a unique $F\in \mathcal L(\Hilb_A, \Hilb_B)$ (generally called transformations matrix) such that $\mathcal D = \tilde{\mathcal C}^F$~\cite{grenoble_coherent_control}. We call such a pair $(\mathcal C,F)$ a \emph{coherent quantum operation}:

\begin{definition}[Coherent quantum operations]\label{def:cqo}
Let $\tb{CQO}$ be the category of \emph{coherent quantum operations}, whose objects are finite dimension Hilbert spaces and whose morphisms are defined as follows.  A coherent quantum operation from $\Hilb_A$ to $\Hilb_B$ is a pair $(\mathcal C, F)$ with $\mathcal C\in   \tb{QO} (\Hilb_A, \Hilb_B)$ and $F\in \mathcal L(\Hilb_A, \Hilb_B)$ such that $\tilde{\mathcal{C}}^F\in  \tb{QO}_\text{\bf vac} (\Hilb_A, \Hilb_B)$ is a vacuum extension. Composition is defined pointwise: $(\mathcal C_1,F_1)\circ  (\mathcal C_0,F_0) \triangleq  (\mathcal C_1 \circ \mathcal C_0,F_1 F_0)$.
\end{definition}

We use coherent quantum operations to define the denotational semantics of our language. Indeed, a coherent quantum operation $(\mathcal C, F)$ provides an intuitive representation: $\mathcal C$ is the quantum operation representing the action of the program on non vacuum inputs and $F$, roughly speaking, corresponds to the evolution of the default transition of the operation semantics. This intuition will be made formal in the adequacy Theorem (Theorem~\ref{thm:adequacy}).

In the following, we consider a few illustrative examples of coherent quantum operations.

{\bf Unitary evolution.} It is well known that the quantum operation associated with a unitary map $U$ is $\mathcal{C}_U\triangleq \rho \mapsto U\rho U^\dagger$, or simply $U(\cdot) U^\dagger$. Thus a coherent quantum operation for a unitary map must be $(\mathcal C_U,F)$ for some $F$. One can show that the only $F$ that makes $\tilde{\mathcal{C}}_U^F$ a vacuum-extended quantum operation is $U$ (up to a global phase). Indeed, intuitively the default transition is the only possible evolution in this case, which consists in applying the unitary $U$. As a consequence, a coherent quantum operation corresponding to the application of a unitary map is of the form $(U(\cdot) U^\dagger, U)$.

{\bf Initialization.} Similarly to the unitary case, the initialization of a qubit in a $\ket \phi$ state is deterministic, as a consequence the transformation matrix must be $\ket \phi$ (up to a global phase). Thus the coherent quantum operations corresponding to state initialization are of the form $(\ket \phi\bra \phi, \ket \phi)$. 

{\bf Discard.} The quantum operation associated with discarding is the trace, so the corresponding coherent quantum operation is $(Tr(\cdot),\bra \phi)$ where $\ket \phi$ is any arbitrary normalized  state. It aligns with the intuition that discarding a qubit can be implemented as a measurement in any arbitrary orthonormal  basis $\{\ket {\phi}, \ket {\phi^\perp}\}$ (since $\bra \phi \rho\ket \phi + \bra {\phi^\perp}\rho\ket {\phi^\perp} = Tr(\rho)$) for which the first possible outcome ($\ket \phi$) is chosen as the default transition.

{\bf Measurement and classical control.} We now consider a scenario where a qubit $\q$ is measured  in the standard basis, then depending on the classical outcome, one of the two coherent quantum operations $(\mathcal C_0,F_0)$ or  $(\mathcal C_1,F_1)$ is applied. The corresponding quantum operation is $\mathcal C_0\circ \mathcal P_0 + \mathcal C_1\circ \mathcal P_1$ where $ \mathcal P_k \triangleq \ket k\bra k_\q(\cdot) \ket k\bra k_\q$ correspond to the two possible measurement outcomes. One can choose for instance $F_0\ket0\bra 0_\q$ as a transformation matrix, leading to the coherent quantum operation \begin{align}\overline{\tb{meas}}_{\q}[(\mathcal C_0,F_0),(\mathcal C_1,F_1)] \triangleq (\mathcal C_0\circ \mathcal P_0 + \mathcal C_1\circ \mathcal P_1,F_0\ket0\bra 0_\q)\end{align}

{\bf Quantum case with unitaries.}  We first consider a quantum case scenario on unitaries, which can be described using pure operations only. Indeed, the quantum case of two unitaries $U$ and $V$ with a control qubit $\q$ corresponds to the unitary $\ket{0}\bra{0}_\q\otimes U + \ket{1}\bra{1}_\q\otimes V$. We now describe this scenario using coherent quantum operations $(\mathcal C_U,U)$ and $(\mathcal C_V,V)$ that are representing the unitary evolutions $U$ and $V$ respectively (where $\mathcal C_W = \rho\mapsto  W\rho W^\dagger$). The resulting coherent quantum operation $\overline{\tb{qcase}}_{\q}[(\mathcal C_U,U),(\mathcal C_V,V)] = (\mathcal D,F)$ must represent the unitary evolution $\ket{0}\bra{0}_\q\otimes U + \ket{1}\bra{1}_\q\otimes V$. Thus we can choose $F = \ket{0}\bra{0}_\q\otimes U + \ket{1}\bra{1}_\q\otimes V$ (which is the only possible choice up to a global phase), and $\mathcal D$ must be equal to $\mathcal C_{\ket{0}\bra{0}_\q\otimes U + \ket{1}\bra{1}_\q\otimes V}$. Thus, for any $\rho$,
\begin{align*}
\mathcal D(\rho )=\ & (\ket{0}\bra{0}_\q{\otimes} U + \ket{1}\bra{1}_\q{\otimes} V)\rho (\ket{0}\bra{0}_\q{\otimes} U^\dagger  + \ket{1}\bra{1}_\q{\otimes} V^\dagger)\\
=\ & (\ket{0}\bra{0}_\q{\otimes} U)\rho( \ket{0}\bra{0}_\q{\otimes} U^\dagger )+( \ket{0}\bra{0}_\q{\otimes} U)\rho( \ket{1}\bra{1}_\q{\otimes} V^\dagger )\\
&+ (\ket{1}\bra{1}_\q{\otimes} V)\rho( \ket{0}\bra{0}_\q{\otimes} U^\dagger)+(\ket{1}\bra{1}_\q{\otimes} V)\rho (\ket{1}\bra{1}_\q{\otimes} V^\dagger)\\
=\ &  (\mathcal{P}_0^\q {\otimes} \mathcal{C}_U + \mathcal{P}_1^\q {\otimes} \mathcal{C}_V) (\rho)
+ (\ket{0}\bra{0}_\q{\otimes} U)\rho (\ket{1}\bra{1}_\q{\otimes} V^\dagger) \\
&+ (\ket{1}\bra{1}_\q{\otimes} V)\rho (\ket{0}\bra{0}_\q{\otimes} U^\dagger)
\end{align*}

We notice that $\mathcal D$ is the linear combination of $\mathcal{P}_0^\q \otimes \mathcal{C}_U + \mathcal{P}_1^\q \otimes \mathcal{C}_V$ with two extra terms depending on $U$ and $V$. To sum up, 
\begin{align*}
 \overline{\tb{qcase}}_{\q}[(\mathcal{C}_U,U),(\mathcal{C}_V,V)]\ & \\
 & \hspace{-3.2cm}\triangleq \ \bigl(\mathcal{P}_0^\q \otimes \mathcal{C}_U + \mathcal{P}_1^\q \otimes \mathcal{C}_V + (\ket{0}\bra{0}_\q \otimes U)(\cdot)(\ket{1}\bra{1}_\q\otimes V^{\dag}) \\
  &\hspace{-3.2cm}\quad+(\ket{1}\bra{1}_\q \otimes V)(\cdot)(\ket{0}\bra{0}_\q\otimes U^{\dag}), \ket{0}\bra{0}_\q\otimes U + \ket{1}\bra{1}_\q\otimes V\bigr)
\end{align*}

{\bf  Arbitrary quantum case.}  We now consider the quantum case of two arbitrary coherent quantum operations $(\mathcal C_0,F_0)$ and $(\mathcal C_1,F_1)$. The unitary case described above can be extended in the straightforward way as follows:
\begin{align*}
 \overline{\tb{qcase}}_{\q}[(\mathcal{C}_0,F_0),(\mathcal{C}_1,F_1)]\ & \\
 & \hspace{-3.2cm}\triangleq \ \bigl(\mathcal{P}_0^\q \otimes \mathcal{C}_0 + \mathcal{P}_1^\q \otimes \mathcal{C}_1 + (\ket{0}\bra{0}_\q \otimes F_0)(\cdot)(\ket{1}\bra{1}_\q\otimes F_1^{\dag}) \\
  &\hspace{-3.2cm}\quad+(\ket{1}\bra{1}_\q \otimes F_1)(\cdot)(\ket{0}\bra{0}_\q\otimes F_0^{\dag}), \ket{0}\bra{0}_\q\otimes F_0 + \ket{1}\bra{1}_\q\otimes F_1\bigr)
\end{align*}

Notice that if $(\mathcal C_0,F_0)$ and  $(\mathcal C_1,F_1)$ are  coherent quantum operations,  so is $ \overline{\tb{qcase}}_{\q}[(\mathcal{C}_0,F_0),(\mathcal{C}_1,F_1)]$.

For more clarity, if we write the input state $\rho$ as a block matrix $\rho=
 \begin{bmatrix}
  \rho_1 & \!\!\!\rho_2 \\
  \rho_3 & \!\!\!\rho_4
 \end{bmatrix}
$, where the first qubit is assumed to be the control qubit $\q$, then the $\overline{\tb{qcase}}_\q$ operation can be rewritten as:
\[
 \overline{\tb{qcase}}_{\q}[(\mathcal{C},F),(\mathcal{D},G)] = \left(\begin{bmatrix}
  \rho_1 & \!\!\!\rho_2 \\
  \rho_3 & \!\!\!\rho_4
 \end{bmatrix} \!\mapsto\! \begin{bmatrix}
  \mathcal{C}(\rho_1) & \!\!\!F\rho_2G^{\dag} \\
  G\rho_3F^{\dag} & \!\!\!\mathcal{D}(\rho_4)
 \end{bmatrix},
 \begin{bmatrix}
  F & \!\!0 \\
  0 & \!\!G
 \end{bmatrix}
 \right)
\]

Notice that the quantum case satisfies the following interchange property:
\begin{proposition}[Sequentiality]
\label{prop:factorize_qcase}
For any coherent quantum operations $(\mathcal{C}_0,F_0),(\mathcal{C}_1,F_1)\in\tb{CQO}(\mathcal{H}_\Gamma,\mathcal{H}_\Gamma')$ and $(\mathcal{D}_0,G_0)$, $(\mathcal{D}_1,G_1)\in\tb{CQO}(\mathcal{H}_{\Gamma'},\mathcal{H}_{\Gamma''})$ and $\q\notin \Gamma\cup\Gamma'\cup\Gamma''$, we have:
\[
 \begin{array}{c}
  \overline{\tb{qcase}}_{\q}((\mathcal{D}_0,G_0),(\mathcal{D}_1,G_1))\circ \overline{\tb{qcase}}_{\q}((\mathcal{C}_0,F_0),(\mathcal{C}_1,F_1))
  \\
  =\overline{\tb{qcase}}_{\q}((\mathcal{D}_0,G_0)\circ (\mathcal{C}_0,F_0),(\mathcal{D}_1,G_1)\circ (\mathcal{C}_1,F_1))
 \end{array}
\]
\end{proposition}

{\bf General recursion.} Finally, for defining the semantics of  loops, we need general recursion,  we thus crucially show that  coherent quantum operations form a pointed DCPO:

\begin{restatable}{proposition}{propdcpovacext}
\label{prop:cqoDCPO}
For any Hilbert spaces $\Hilb_A$, $\Hilb_B$, $(\tb{CQO}(\Hilb_A,\Hilb_B), \sqsubseteq )$ is a pointed DCPO with least element $(0,0)$, where $(\mathcal C,F)\sqsubseteq (\mathcal D,G)$ iff $\tilde C^F \leq  \tilde D^G$,  $\leq$ being the Löwner partial ordering.
\end{restatable}

\subsection{Interpretation of Programs}

We can now define the denotational semantics of the language. Each well-formed statement will be interpreted as a coherent quantum operation.
Formally, given a well-formed program $(\Gamma;S)$, we define the interpretation $\sem{S}_{\Gamma}$ as a coherent quantum operation in $\tb{CQO}(\Hilb_\Gamma, \Hilb_{\Gamma^S})$, or $\tb{CQO}_{\Gamma;\Gamma^S}$ for short. The semantics of statements is defined inductively in Figure~\ref{fig:semantics_denotational}.
The operation $Tr_{\q}$ denotes the partial trace over qubit $\q$, defined as $Tr_{\q}:\rho \mapsto \bra{0}_\q \rho \ket{0}_q +\bra{1}_\q \rho \ket{1}_\q$. As before, $U_{\q}$ denotes the application of the unitary $U$ to the qubit $\q$, and identity elsewhere.
For the $\tb{meas}$ statement, the statements $S_0$ and $S_1$ in either branch can affect the control qubit $\q$, whereas those of $\tb{qcase}$ cannot. This is clear from the environment indices. In the interpretation of $\tb{while}$, lfp denotes the least fixed point operator and  $\mathscr{F}_{\q}^{S}$ is defined for all $(\mathcal{C},F)\in \tb{CQO}_{\q,\Gamma;\q,\Gamma}$ as:
\[
 \mathscr{F}_{\q}^{S}(\mathcal{C},F)\triangleq\overline{\tb{meas}}_{\q}\bigl[(\mathcal{I}_{\q,\Gamma},I_{\q,\Gamma}), (\mathcal{C},F)\circ \sem{S}_{\q,\Gamma}\bigr].
 \]
 This reads as follows:
 after obtaining outcome 0 the loop is exited, and therefore the program does nothing; and after obtaining outcome 1 the program executes $S$ then executes the loop.

\begin{figure}
\hrulefill
\begin{align*}
 \sem{\tb{skip}}_{\Gamma}\ &\triangleq\ \bigl(\mathcal{I}_{\Gamma},I_{\Gamma}\bigr)
 \\
 \sem{\tb{new qbit }\q}_{\Gamma}\ &\triangleq\ \bigl(\ket{0}\bra{0}_{\q}\otimes \mathcal{I}_{\Gamma},\ket{0}_\q\otimes I_{\Gamma}\bigr)
 \\
  \sem{\tb{discard }\q}_{\q,\Gamma}\ &\triangleq\ \bigl(Tr_{\q},\bra{0}_\q\otimes I_{\Gamma}\bigr)
  \\
  \sem{\q{*}{=}U}_{\q,\Gamma}\ &\triangleq\ \bigl(U_{\q}(\cdot) U_{\q}^{\dag},U_{\q}\bigr)
  \\
 \sem{S_0;S_1}_{\Gamma}\ &\triangleq\ \sem{S_1}_{\Gamma^{S_0}}\circ \sem{S_0}_{\Gamma}
 \\
 \sem{\tb{meas }\q \ (0\rightarrow S_0,1\rightarrow S_1)}_{\q,\Gamma}\ &\triangleq\ \overline{\tb{meas}}_{\q}[\sem{S_0}_{\q,\Gamma},\sem{S_1}_{\q,\Gamma}]
 \\
 \sem{\tb{while}\ \q \ \tb{do}\ S}_{\q,\Gamma}\ &\triangleq\ \text{lfp}(\mathscr{F}_{\q}^{S})
 \\
 \sem{\tb{qcase}\ \q\ (0\rightarrow S_0,1\rightarrow S_1)}_{\q,\Gamma} \ &\triangleq\ \overline{\tb{qcase}}_{\q}[\sem{S_0}_{\Gamma},\sem{S_1}_{\Gamma} ]
\end{align*}
\caption{Denotational semantics, where $\mathscr{F}_{\q}^{S}(\mathcal{C},F)\triangleq\ \overline{\tb{meas}}_{\q}\bigl[(\mathcal{I}_{\q,\Gamma},I_{\q,\Gamma}), (\mathcal{C},F)\circ \sem{S}_{\q,\Gamma}\bigr].$}
\label{fig:semantics_denotational}
\hrulefill
\end{figure}

It is not immediately clear that the denotational semantics is well defined. In particular, we must check that each operation preserves coherent quantum operations, and prove the existence of the least fixed point. The well-definedness is stated as the following proposition:
\begin{restatable}{proposition}{propdenotationalsemantics}
\label{prop:denotational_semantics}
 For all well-formed program $(\Gamma;S)$, the interpretation $\sem{S}_{\Gamma}\in\tb{CQO}_{\Gamma;\Gamma^S}$ is well defined.
\end{restatable}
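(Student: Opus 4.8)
The plan is to proceed by structural induction on the statement $S$, following the inductive clauses of Figure~\ref{fig:semantics_denotational_minimal}. For each clause I must check two things: that the constructed pair $(\mathcal{C},F)$ indeed lies in $\mathcal{K}_{\Gamma;\Delta}$ (equivalently, that $\tilde{\mathcal{C}}^F$ is a genuine quantum operation on $\Hilb_\Gamma\oplus\text{Vac}\to\Hilb_\Delta\oplus\text{Vac}$), and---for the $\tb{while}$ clause only---that the least fixed point $\text{lfp}(\mathscr{F}_\q^S)$ actually exists. The output environment $\Delta=\Gamma^S$ is already pinned down by the well-formedness rules, so only these two conditions require proof.

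For the base cases $\tb{skip}$, $\tb{new qbit }\q$, $\tb{discard }\q$ and $\q*=U$, I would unfold Equation~\ref{eq:vacuum_extension_sem} and verify directly that each $\tilde{\mathcal{C}}^F$ is completely positive and trace non-increasing; each is an isometric, partial-trace, or unitary vacuum-extension, so this is routine. For the composite cases I rely on the preservation results already established. For $S_0;S_1$ the interpretation is $\sem{S_1}\circ\sem{S_0}$, and Lemma~\ref{lem:expr_vac_ext} gives $\tilde{\mathcal{E}}^H=\tilde{\mathcal{D}}^G\circ\tilde{\mathcal{C}}^F$, a composite of quantum operations once the induction hypothesis places the factors in $\mathcal{K}$. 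For $\tb{meas}$, Lemma~\ref{lem:expr_vac_ext} expresses $\tilde{\mathcal{E}}^H$ as the sum $\tilde{\mathcal{C}_0}^{F_0}\circ\mathcal{T}_0^\q+\tilde{\mathcal{C}_1}^{F_1}\circ\mathcal{T}_1^\q$ of composites of quantum operations, again a quantum operation. For $\tb{qcase}$, closure follows from the fact recalled in Section~\ref{section:vacuum} that the underlying $\mathtt{QCASE}$ construction (specialized to $\lambda=1$) always returns a quantum operation independent of the chosen Kraus decompositions.

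The real work is the $\tb{while}$ clause, where I must exhibit $\text{lfp}(\mathscr{F}_\q^S)$. First I observe that $\mathscr{F}_\q^S$ maps $\mathcal{K}_{\q,\Gamma;\q,\Gamma}$ into itself, being built from $\overline{\tb{meas}}_\q$, composition, and $\sem{S}_{\q,\Gamma}\in\mathcal{K}$ (induction hypothesis), all of which preserve $\mathcal{K}$ by the previous paragraph. A crucial simplification is that the transformation-matrix component of $\mathscr{F}_\q^S(\mathcal{C},F)$ equals $I_{\q,\Gamma}\ket{0}\bra{0}_\q=\ket{0}\bra{0}_\q\otimes I_\Gamma$ independently of $(\mathcal{C},F)$, since the $0$-branch of the inner $\overline{\tb{meas}}_\q$ is the fixed pair $(\mathcal{I}_{\q,\Gamma},I_{\q,\Gamma})$. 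Hence along the iterates the transformation matrix is constant, and the fixed-point computation reduces to the operation component, whose value is $\mathcal{P}_0^\q+\mathcal{C}\circ\mathcal{C}_S\circ\mathcal{P}_1^\q$ where $\sem{S}_{\q,\Gamma}=(\mathcal{C}_S,F_S)$. This map is monotone and Scott-continuous for the Löwner order, because pre- and post-composition with fixed quantum operations preserve directed suprema and adding the constant $\mathcal{P}_0^\q$ is harmless. Since $(\tb{QO}(\Hilb_{\q,\Gamma},\Hilb_{\q,\Gamma}),\leq)$ is a pointed dcpo by Lemma~\ref{lem:dcpo_q_operations}, Kleene's theorem yields the least fixed point as the supremum $\mathcal{C}^\ast=\sup_n\mathcal{C}_n$ of the chain of iterates from the bottom element.

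It remains to check that $(\mathcal{C}^\ast,\ket{0}\bra{0}_\q\otimes I_\Gamma)$ genuinely lies in $\mathcal{K}$, i.e.\ that its vacuum-extension is still a quantum operation; this is where the order theory meets analysis. I use Lemma~\ref{rem:topological_limit_channel}: the supremum $\mathcal{C}^\ast$ is the topological limit of the iterates $\mathcal{C}_n$, each of which pairs with the same fixed matrix $F_0=\ket{0}\bra{0}_\q\otimes I_\Gamma$ to give a quantum operation $\tilde{\mathcal{C}_n}^{F_0}$ (for $n\geq 1$). Since $\tilde{(\cdot)}^{F_0}$ is continuous in its operation argument---being affine in $\mathcal{C}$ by Equation~\ref{eq:vacuum_extension_sem}---and the quantum operations form a closed subset in the finite-dimensional operator topology, the limit $\tilde{\mathcal{C}^\ast}^{F_0}$ is again a quantum operation, so $(\mathcal{C}^\ast,F_0)\in\mathcal{K}$. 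The main obstacle is precisely this $\tb{while}$ case: combining the order-theoretic fixed-point argument on the operation component with the analytic closure argument needed to keep the limit inside $\mathcal{K}$, all while carefully isolating the constant transformation-matrix component.
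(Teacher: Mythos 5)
Your proof is correct, and outside the $\tb{while}$ case it coincides with the paper's: structural induction, with the closure of $\mathcal{K}$ under composition, $\overline{\tb{meas}}_{\q}$ and $\overline{\tb{qcase}}_{\q}$ (established before the proposition) doing all the work. For $\tb{while}$ you take a genuinely different route. The paper equips each $\mathcal{K}_{\Gamma;\Delta}$ with the order $(\mathcal{C},F)\leq(\mathcal{D},G)\Leftrightarrow\tilde{\mathcal{C}}^F\leq\tilde{\mathcal{D}}^G$, proves it is a pointed dcpo (Lemma~\ref{lem:dcpo_vac_ext}), shows $\mathscr{F}_{\q}^{S}$ is Scott-continuous on it via Scott-continuity of composition on the left and measurement on the right (Lemma~\ref{lem:scott_cont}), and applies Kleene iteration directly in $\mathcal{K}$. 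You instead observe that the transformation-matrix component freezes at $F_0\triangleq\ket{0}\bra{0}_\q\otimes I_{\Gamma}$ after one iteration, run the Kleene iteration purely in the dcpo $\tb{QO}(\Hilb_{\q,\Gamma},\Hilb_{\q,\Gamma})$ of Lemma~\ref{lem:dcpo_q_operations} for the map $\mathcal{C}\mapsto\mathcal{P}_0^\q+\mathcal{C}\circ\mathcal{C}_S\circ\mathcal{P}_1^\q$ (where $\sem{S}_{\q,\Gamma}=(\mathcal{C}_S,F_S)$), and then re-enter $\mathcal{K}$ by an analytic closure argument: affinity of $\mathcal{C}\mapsto\tilde{\mathcal{C}}^{F_0}$, Lemma~\ref{rem:topological_limit_channel}, and closedness of the set of quantum operations. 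Both routes are valid; the paper's is more uniform and its dcpo structure on $\mathcal{K}$ is reused later (adequacy), while yours is more elementary and only ever compares pairs sharing the same transformation matrix, on which the order on $\mathcal{K}$ visibly reduces to the L\"owner order on the operation component. This is a real advantage: you never need $(0,0)$ to be a least element of the full order on $\mathcal{K}$, which is a delicate point, since $\tilde{\mathcal{C}}^F-\tilde{0}^{\,0}$ contains the cross terms $F(\cdot)\ket{\text{vac}}\bra{\text{vac}}+\ket{\text{vac}}\bra{\text{vac}}(\cdot)F^{\dag}$ and hence fails to be completely positive whenever $F\neq 0$. One step you leave implicit and should add: having obtained $(\mathcal{C}^{\ast},F_0)$ as the least fixed point of the operation-component map, note that every fixed point of $\mathscr{F}_{\q}^{S}$ necessarily has transformation matrix $F_0$, so that $(\mathcal{C}^{\ast},F_0)$ is indeed the \emph{least} fixed point of $\mathscr{F}_{\q}^{S}$ itself, matching the paper's definition $\text{lfp}(\mathscr{F}_{\q}^{S})$; this follows immediately from your own observation but is needed to close the argument.
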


The result is shown by structural induction on $S$.
A substantial part of the proof concerns the $\tb{while}$ case.
The function $\mathscr{F}_{\q}^{S}$ is shown to be Scott continuous; and since $(\tb{CQO}_{\Gamma;\Delta}, \sqsubseteq )$ is a pointed DCPO (Proposition \ref{prop:cqoDCPO}), this ensures the existence of the least fixed point, which is obtained as the supremum of the directed set $\{(\mathscr{F}_{\q}^{S})^n(0,0)\}_{n\in \mathbb{N}}$.
Importantly, the denotational semantics of programs is compositional. This means that for each constructor, the interpretation of a program is a function of the interpretations of its sub-programs. For example $\sem{\tb{meas }\q \ (0\rightarrow S_0,1\rightarrow S_1)}_{\q,\Gamma}$ is a function of $\sem{S_0}_{\q,\Gamma}$ and $\sem{S_1}_{\q,\Gamma}$. This property is key to proving full abstraction in Section~\ref{section:observational_equivalence}.

\subsection{Examples}

We revisit some of the examples from Section~\ref{section:ex_op}, and determine their denotational semantics.

\begin{example}[CNOT and SWAP]
 \label{ex:cnot_den}
 Consider the statement $S_{\text{CNOT}}^{\qc,\qt}\ = \ \tb{qcase}\ \qc \ (0 \rightarrow \tb{skip},1 \rightarrow \qt\ {*}{=} X)$ from Example~\ref{ex:cnot_op}. We have for all environment $\Gamma \not\ni \qc,\qt$:
\begin{align*}
 \sem{S_{\text{CNOT}}^{\qc,\qt}}_{\qc,\qt,\Gamma} =& \ \overline{\tb{qcase}}_{\qc}[\sem{\tb{skip}}_{\qt,\Gamma},\sem{\qt\ {*}{=}X}_{\qt,\Gamma}] \\
 =&\ \overline{\tb{qcase}}_{\qc}[(\mathcal{I}_{\qt,\Gamma},I_{\qt,\Gamma}),(X_{\qt}(\cdot)X_{\qt},X_{\qt})] \\
 =&\ \bigl(\mathcal{P}_0^\qc + \mathcal{P}_1^\qc \otimes X_{\qt}(\cdot)X_{\qt} + \ket{0}\bra{0}_\qc (\cdot)(\ket{1}\bra{1}_\qc\otimes X_\qt^{\dag}) \\
 & +(\ket{1}\bra{1}_\qc \otimes X_{\qt})(\cdot)\ket{0}\bra{0}_\qc, \ket{0}\bra{0}_\qc + \ket{1}\bra{1}_\qc\otimes X_{\qt}\bigr)\\
 =&(\text{CNOT}_{\qc,\qt}(\cdot)\text{CNOT}_{\qc,\qt}^{\dag},\text{CNOT}_{\qc,\qt})
\end{align*}
It follows for the statement $S_{\text{SWAP}}^{\p,\q}$ of Example~\ref{ex:cnot_op} that for $\Gamma \not\ni \p,\q$, $ \sem{S_{\text{SWAP}}^{\p,\q}}_{\p,\q,\Gamma} = (\text{SWAP}_{\p,\q}(\cdot)\text{SWAP}_{\p,\q}^{\dag},\text{SWAP}_{\p,\q})
$.
Therefore, the interpretations of statements $S_{\text{CNOT}}^{\qc,\qt}$ and $S_{\text{SWAP}}^{\p,\q}$ are given by coherent quantum operations for the usual CNOT and SWAP channels.
\end{example}

\begin{example}[Repeated coin flip]
\label{ex:coin_den}
We consider the statement $\text{COIN}\ =\  \tb{while} \ \q \tb{ do } \q{*}{=} H$ from Example~\ref{ex:coin_op}.
We have $\sem{\text{COIN}}_{\q}=\text{lfp}(\mathscr{F}_{\q}^{\q{*}{=}H})$, which is given by the supremum of iterated applications of $\mathscr{F}_{\q}^{\q{*}{=}H}$ to $(0,0)$. For all $(\mathcal{C},F)\in\tb{CQO}_{\q;\q}$, we have:
\[
 \mathscr{F}_{\q}^{\q {*}{=}H}(\mathcal{C},F)=\overline{\tb{meas}}_{\q}\bigl[(\mathcal{I}_{\q},I_{\q}), (\mathcal{C},F)\circ (H(\cdot)H^{\dag},H)\bigr].
\]
We determine the first few iterations, where  $\ket{\pm}\triangleq\sfrac{1}{\sqrt{2}}(\ket{0}\pm\ket{1})$:
\begin{align*}
 \mathscr{F}_{\q}^{\q {*}{=}H}(0,0) &= \overline{\tb{meas}}_{\q}\bigl[(\mathcal{I}_{\q},I_{\q}), (0,0) \bigr] = \left(\mathcal{P}_{0}^{\q} ,\ket{0}\bra{0} \right)
 \\
 \left(\mathscr{F}_{\q}^{\q {*}{=}H}\right)^2(0,0) &= \mathscr{F}_{\q}^{\q {*}{=}H}\left(\mathcal{P}_{0}^{\q} ,\ket{0}\bra{0}\right) \\
 &= \overline{\tb{meas}}_{\q}\left[(\mathcal{I}_{\q},I_{\q}), \left(\mathcal{P}_{0}^{\q} ,\ket{0}\bra{0}\right) \circ (H(\cdot)H^{\dag},H) \right] \\
 &= \overline{\tb{meas}}_{\q}\left[(\mathcal{I}_{\q},I_{\q}), \left(\ket{0}\bra{+} (\cdot)\ket{+}\bra{0} ,\ket{0}\bra{+} \right) \right] \\
 &= \left(\mathcal{P}_{0}^{\q} + \frac{1}{2} \ket{0}\bra{1} (\cdot)\ket{1}\bra{0},\ket{0}\bra{0}\right)
 \\
 \left(\mathscr{F}_{\q}^{\q {*}{=}H}\right)^3(0,0) &= \ \ldots
\\
&=  \left(\mathcal{P}_{0}^{\q} + \frac{1}{2}\ket{0}\bra{1}(\cdot)\ket{1}\bra{0} + \frac{1}{4} \ket{0}\bra{1} (\cdot)\ket{1}\bra{0},\ket{0}\bra{0}\right)
\end{align*}
The limit is:
\begin{align*}
 \sem{\text{COIN}}_{\q}\ &=\ \left(\mathcal{P}_{0}^{\q} + \left(\sum_{k\geq 1}\frac{1}{2^k}\right)\ket{0}\bra{1}(\cdot)\ket{1}\bra{0}, \ket{0}\bra{0}\right) \\
 &= \left(Tr(\cdot)\ket{0}\bra{0},  \ket{0}\bra{0}\right)
\end{align*}
\end{example}

\section{Universality}
\label{section:universality_all}

\subsection{General Case}

The language is shown to be universal, meaning that every coherent quantum operation $(\mathcal{C},F)$ can be realized as the interpretation of a program.

\begin{restatable}[Universality]{theorem}{propuniversality}
 \label{thm:universality}
 Let $\Gamma$ and $\Delta$ be environments and $(\mathcal{C},F)\in\tb{CQO}_{\Gamma;\Delta}$ be a coherent quantum operation. Then, there exists a program $(\Gamma;S)$ such that $\Gamma^S=\Delta$ and $\sem{S}_{\Gamma}=(\mathcal{C},F)$.
\end{restatable}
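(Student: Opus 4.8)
The plan is to realize $(\mathcal{C},F)$ by composing programs for the two factors appearing in the decomposition of Equation~\ref{eq:decomp_vac_ext}, namely $(\mathcal{C},F)=(Tr_{\Delta'},\bra{\psi}\otimes I_{\Delta})\circ(U(\cdot)U^{\dag},U)$ with $U$ sub-unitary and $\ket{\psi}$ normalized. Since the denotational semantics is compositional and interprets sequencing as $\circ$ (Figure~\ref{fig:semantics_denotational_minimal}), it suffices to build a program $S_{\mathrm{sub}}$ with $\sem{S_{\mathrm{sub}}}_\Gamma=(U(\cdot)U^{\dag},U)$ and a program $S_{\mathrm{tr}}$ with $\sem{S_{\mathrm{tr}}}_{\Delta',\Delta}=(Tr_{\Delta'},\bra{\psi}\otimes I_{\Delta})$, and then set $S\triangleq S_{\mathrm{sub}};S_{\mathrm{tr}}$. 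As an auxiliary fact used for both factors, I would first record that every multi-qubit unitary gate $W$ is implementable with interpretation the unitary vacuum-extension $(W(\cdot)W^{\dag},W)$: this is the standard universality of arbitrary single-qubit gates together with CNOT, where the single-qubit gates are built in via $\q*=U$ and CNOT (hence SWAP and qubit reordering) is obtained from $\tb{qcase}$ exactly as in Examples~\ref{ex:cnot_op} and~\ref{ex:cnot_den}.

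The traceout factor is then straightforward: choose a unitary $V$ on $\Hilb_{\Delta'}$ with $\bra{0\cdots0}_{\Delta'}V=\bra{\psi}$ (which exists since $\ket{\psi}$ is normalized), implement it via the auxiliary fact, and follow it by discarding every qubit of $\Delta'$. Using $\sem{\tb{discard }\q}_{\q,\Gamma}=(Tr_{\q},\bra{0}_\q\otimes I_\Gamma)$ and composing the discards gives channel $Tr_{\Delta'}$ and transformation matrix $\bra{0\cdots0}_{\Delta'}\otimes I_{\Delta}$; precomposing with $V$ yields exactly $(Tr_{\Delta'},\bra{\psi}\otimes I_{\Delta})$, since the partial trace over $\Delta'$ is insensitive to a unitary acting on $\Delta'$.

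The sub-unitary factor is the crux. The difficulty is that $(U(\cdot)U^{\dag},U)$ has a single Kraus operator $U$ which may be a strict contraction, so its channel is trace-decreasing; neither unitaries, nor $\tb{new qbit}$, nor $\tb{discard}$ (which sums both measurement branches) produce such a one-Kraus-operator filter. My key observation is that a projective filter can be obtained from a $\tb{qcase}$ with a non-terminating branch: since $\sem{\text{LOOP}}_\Gamma=(0,0)$ (Example~\ref{ex:loop_den}), the statement $\tb{qcase }a\ (0\rightarrow\tb{skip},1\rightarrow\text{LOOP})$ has interpretation $\overline{\tb{qcase}}_a[(\mathcal{I},I),(0,0)]=\bigl(\ket{0}\bra{0}_a(\cdot)\ket{0}\bra{0}_a,\ \ket{0}\bra{0}_a\otimes I\bigr)$, i.e. the single-Kraus projector onto $a=\ket{0}$. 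With this gadget in hand I would extend $U$ to a unitary $\hat U$ on a space $\Hilb_{E}\otimes\Hilb_\Gamma\cong \Hilb_{E'}\otimes\Hilb_{\Delta',\Delta}$ with $U=\bra{0}_{E'}\hat U\ket{0}_{E}$ (possible since $U$ is sub-unitary and the ancilla dimensions may be taken to be powers of two), and implement $S_{\mathrm{sub}}$ as: initialize the ancillas $E$ to $\ket{0}$, apply $\hat U$, project each qubit of the output ancilla $E'$ onto $\ket{0}$ using the gadget, and discard $E'$. Tracking both components through the composition gives transformation matrix $\bra{0}_{E'}\hat U\ket{0}_{E}=U$ and channel $U(\cdot)U^{\dag}$, as required.

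The main obstacle is precisely this last step: obtaining a genuinely trace-decreasing, single-Kraus sub-unitary operation, for which the non-termination-based projector is the essential device. The remaining work, namely unitary synthesis from single-qubit gates and the $\tb{qcase}$-CNOT, together with the bookkeeping of transformation matrices and channels across the compositions, is routine given the examples and the compositionality already established.
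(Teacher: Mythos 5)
Your proposal is correct and follows essentially the same route as the paper: the decomposition of Equation~\ref{eq:decomp_vac_ext} into a sub-unitary operation followed by a $\bra{\psi}$-traceout, unitary synthesis via $\tb{qcase}$-built CNOTs, and a non-terminating LOOP branch to realize the trace-decreasing single-Kraus projection onto the output ancillas. The only (immaterial) difference is that you obtain that projector from $\tb{qcase}\ a\ (0\rightarrow\tb{skip},1\rightarrow\text{LOOP})$ while the paper uses $\tb{meas}$ with an else-branch LOOP; both give $(\mathcal{P}_0^{a},\ket{0}\bra{0}_a)$, consistently with the encoding of $\tb{meas}$ by $\tb{qcase}$ in Example~\ref{ex:meas}, and the remaining bookkeeping (including the final renaming to land in the output environment $\Delta$) is handled exactly as in Lemma~\ref{lem:sub_unitary_prgm}.
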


The proof consists in the following steps.
First, we show how to implement unitary operations by invoking the universality of CNOT and single-qubit gates~\cite{barenco}.
Next, we show how to implement sub-unitary operations, i.e., operations is defined in terms of submatrices of unitary matrices.
Lastly, we show that any coherent quantum operation can be decomposed into a sub-unitary operation followed by a traceout function, and find a suitable program implementing each component.

A key feature of our quantum programming language is that primitive unitary operations are restricted to local gates, with non-local operations like the CNOT constructed using quantum control mechanisms. The presence of quantum control further enables restricting the unitary primitives to just the Hadamard gate and global phases, defined as follows on computational basis states:
\[
H\triangleq\ \ket{x} \mapsto \sfrac{1}{\sqrt{2}}(\ket{0}+(-1)^x \ket{1}) \qquad
P_\theta\triangleq\ \ket{x} \mapsto e^{i\theta}\ket{x}
\]
Here $P_\theta$ is presented as a one-qubit operation, but its action actually does not depend on the qubit on which $P_\theta$ is applied.
Quantum control enables construction of Z-rotations using global phases, thereby achieving full universality for the language. This yields a result extending that obtained for the unitary fragment in \cite{HLMR25,controlPROP}:

\begin{restatable}{proposition}{propuniverality}
 \label{prop:universality}
 For any coherent quantum operation $(\mathcal{C},F)\in\tb{CQO}_{\Gamma;\Delta}$, there exists a program $(\Gamma;S)$ such that $\Gamma^S=\Delta$ and $\sem{S}_{\Gamma}=(\mathcal{C},F)$, where $S$ uses only $H$ and global phases.
\end{restatable}

\subsection{Restriction to a Discrete Set of Qubit Gates}
\label{section:universality_approx}

So far, we have assumed that we have an infinite set of single-qubit unitaries at our disposal. To represent a more realistic setting, we now restrict the language to $HT$-\emph{programs}, programs $(\Gamma;S)$ whose statement $S$ uses only the single-qubit gates $H$ and $T\triangleq\ \ket{x} \mapsto e^{ix\frac{\pi}{4}}\ket{x}$.
The CNOT gate can be reconstructed using the $\tb{qcase}$ primitive (see Example~\ref{ex:cnot_op}), and using $H T^4 H$ for the Pauli $X$ gate. The CNOT, $H$, and $T$ gates are known to be a universal set of gates~\cite{boykin}.
Therefore, we can show that any coherent quantum operation can be approximated up to arbitrary accuracy. Formally, for an appropriate distance function $d: \tb{CQO}_{\Gamma,\Delta}\times \tb{CQO}_{\Gamma,\Delta}\to\mathbb{R}$:\footnote{Defined in Appendix~\ref{app:universality_approx}.}
\begin{restatable}[$HT$-Universality]{proposition}{propuniversalityapprox}
 \label{prop:approx_universality}
 Let $\Gamma$ and $\Delta$ be environments and $(\mathcal{C},F)\in\tb{CQO}_{\Gamma;\Delta}$ be a coherent quantum operation. Then, for all $\varepsilon\geq 0$, there exists an $HT$-program $(\Gamma;S)$ such that $\Gamma^S=\Delta$ and $d\bigl((\mathcal{C},F),\sem{S}_{\Gamma})\leq \varepsilon$.
\end{restatable}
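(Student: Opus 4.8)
The plan is to reduce the statement to the density, in operator norm, of the unitaries implementable over $\{H,T\}$ (together with the $\tb{qcase}$-encoded CNOT and $X=HT^4H$), and then to transport this density through the denotational semantics by controlling the error $E$ under composition. Concretely, I would start from the exact universality already available: by Theorem~\ref{thm:universality} together with the factorization of Equation~\ref{eq:decomp_vac_ext}, every target $(\mathcal{C},F)\in\mathcal{K}_{\Gamma;\Delta}$ can be written as $(Tr_{\Delta'},\bra{\psi}\otimes I_\Delta)\circ(U(\cdot)U^{\dag},U)$ with $U$ sub-unitary, and the second factor is itself a state preparation followed by a $\tb{discard}$ (choosing $W_\psi$ with $\bra{0}W_\psi^{\dag}=\bra{\psi}$). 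Thus it suffices to realize, up to arbitrarily small $E$, sub-unitary operations of the form $(V(\cdot)V^{\dag},V)$ by programs over $\{H,T\}+\tb{qcase}$, which becomes the single claim driving the whole argument.

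Second, I would establish the continuity statement that lets a small error propagate. Using Lemma~\ref{lem:expr_vac_ext}, the vacuum extension of a composite pair is the composition of the vacuum extensions; and since each $\tilde{\mathcal{C}}^F$ with $(\mathcal{C},F)\in\mathcal{K}$ is a genuine quantum operation, it has $\|\tilde{\mathcal{C}}^F\|_{\diamond}\le 1$. Submultiplicativity and tensor-stability of the diamond norm then yield a telescoping bound $E(\sem{S_1;\ldots;S_n}_\Gamma,\sem{S_1';\ldots;S_n'}_\Gamma)\le\sum_k E(\sem{S_k},\sem{S_k'})$, with analogous bounds for the $\overline{\tb{meas}}_{\q}$ and $\overline{\tb{qcase}}_{\q}$ constructors, which are assembled from composition, projectors and tensoring. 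A routine estimate shows that $\|V-V'\|\le\delta$ in the gate metric implies $E\bigl((V(\cdot)V^{\dag},V),(V'(\cdot)V'^{\dag},V')\bigr)\le c\,\delta$. Hence an operator-norm approximation of the sub-unitaries underlying a program yields a diamond-norm approximation of its interpretation.

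Third — and this is the main obstacle — I would prove the density claim while taking care of the \emph{global phase}. Ordinary Solovay–Kitaev/Boykin universality of $\{H,T,\text{CNOT}\}$~\cite{boykin} only approximates a gate up to a global phase, but here that phase is genuinely observable through the transformation matrix: this is exactly the $U=I$, $V=e^{i\theta}I$ phenomenon of Section~\ref{section:vacuum}, and since words in $H,T$ have determinants confined to a discrete set, a single gate cannot match an arbitrary phase of $F$. The resolution is to pass through ancillas and post-selection. Any contraction $V$, \emph{including its phase}, is the top-left block of an element of $SU(N)$ for $N$ a large enough power of two, and the image of $\{H,T,\text{CNOT}\}$ is dense in the projective unitary group; projecting the extra ancillas onto $\ket{0}$ — implemented with $\tb{discard}$, and with the non-terminating LOOP of Examples~\ref{ex:loop_op} and~\ref{ex:loop_den} absorbing the trace defect exactly as in the proof of Theorem~\ref{thm:universality} — realizes this block as the transformation matrix of a program, and the compression recovers precisely the global-phase degrees of freedom that the projective picture discards. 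This produces, for every $\delta>0$, a program over $\{H,T\}+\tb{qcase}$ whose sub-unitary interpretation lies within $\delta$ of $V$ in operator norm.

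Finally, I would assemble the pieces: approximate the sub-unitaries $U$ and $W_\psi$ of the first step at precision $\delta$ by the construction of the third step, form the corresponding program $S$, and invoke the telescoping continuity bound of the second step to conclude $E\bigl((\mathcal{C},F),\sem{S}_\Gamma\bigr)\le c'\,\delta$; for any prescribed $\varepsilon>0$ one then takes $\delta=\varepsilon/c'$. I expect the phase-matching via compression to be the delicate point, both conceptually — it is precisely where the coherent-control setting departs from standard quantum universality, since there global phase is unobservable and is freely discarded — and technically, in controlling how the operator-norm error of the enlarged word transfers to the error of the compressed block and its induced transformation matrix.
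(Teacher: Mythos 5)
Your proposal follows essentially the same route as the paper's proof: decompose $(\mathcal{C},F)$ via Equation~\ref{eq:decomp_vac_ext} into a sub-unitary operation followed by a (rotated) traceout, approximate the underlying unitaries by words in $\{H,T,\mathrm{CNOT}\}$, convert the gate-metric error into a diamond-norm error on vacuum extensions (the paper's Inequality~\ref{eq:approx_1} together with $E(U_1\oplus\ket{\text{vac}}\bra{\text{vac}},U_2\oplus\ket{\text{vac}}\bra{\text{vac}})=E(U_1,U_2)$), and propagate errors through sequential composition using submultiplicativity of $\|\cdot\|_{\diamond}$ and the bound $\|\tilde{\mathcal{C}}^F\|_{\diamond}\le 1$ for quantum operations. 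The one genuine divergence is your treatment of global phase. You are right that this point deserves attention: the phase of the transformation matrix is observable in this setting, whereas the universality of $\{H,T,\mathrm{CNOT}\}$ is naturally a statement about $PU(2^n)$, and Lemma~\ref{lem:approx_unitary_prgm} asserts $E(U,V)\le\varepsilon$ in the non-projective metric without comment. However, the mechanism you propose does not by itself close this gap: if the word $W$ is only \emph{projectively} close to the determinant-one completion $U'$, i.e.\ $\|W-e^{i\phi}U'\|$ is small for some unknown $\phi$, then the compressed block is close to $e^{i\phi}U$, and the determinant constraint only pins $\phi$ down to a multiple of $2\pi/2^{n}$ — compression does not by itself ``recover'' the phase. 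What does close the gap is that once enough ancilla qubits are present the embedded generators $H$, $T$ and CNOT all have determinant one, so the closure of the generated group is a closed subgroup of the connected group $SU(2^{n})$ whose image in $PU(2^{n})$ is everything, hence a subgroup of finite index, hence all of $SU(2^{n})$; density therefore holds in the operator norm and not merely projectively. With that observation (and choosing the unitary completion of the sub-unitary block to have determinant one, which is always possible by rephasing a column in the complement), your argument — and the paper's — goes through.
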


This result shows that we can reasonably restrict the language to only having a finite set of qubit gates, and retain a universality property. The proof has a similar structure to the proof of Theorem~\ref{thm:universality}: we start by constructing programs approximating unitary operations, then sub-unitary operations, finally coherent quantum operations.

\section{Adequacy}
\label{section:adequacy_2}

We have independently defined an operational semantics and a denotational semantics for our language. The operational semantics, on the one hand, defines the evolution of an input state induced by a program using a default transition. The denotational semantics, on the other hand, assigns to each program a coherent quantum operation describing its behavior. In this section, we show adequacy between the two semantics, which confirms that the two approaches describe the same program behavior.
\begin{restatable}[Adequacy]{theorem}{thmadequacy}
\label{thm:adequacy}
Let $[S,\ket{\psi}]_{\Gamma} \in \tb{Conf}$ be a configuration such that $\sem{S}_{\Gamma}=(\mathcal{C},F)$ and $\mathscr{M}([S,\ket{\psi}]_\Gamma)=\{(\ket{\psi'_i},\nu_i)\}_i$. Then:
\[
\mathcal{C}(\ket{\psi}\bra{\psi}) = \sum_i \ket{\psi'_i}\bra{\psi_i'} \quad \text{ and }\quad
F\ket{\psi} = \sum_i \nu_i\ket{\psi'_i}.
\]
\end{restatable}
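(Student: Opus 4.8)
The plan is to prove both identities simultaneously by structural induction on the statement $S$, strengthening the claim to hold for \emph{every} subnormalized input $\ket{\psi}\in St(\Hilb_\Gamma)$ (including the zero vector), since the inductive cases for sequencing, measurement and the loop feed intermediate, possibly unnormalized and possibly null, states to the sub-statements. Writing $\sem{S}_\Gamma=(\mathcal{C},F)$ and $\mathscr{E}([S,\ket{\psi}]_\Gamma)=\{(\ket{\psi'_i},\nu_i)\}_i$, I would note first that terms with $\ket{\psi'_i}=0$ contribute nothing to either $\sum_i\ket{\psi'_i}\bra{\psi'_i}$ or $\sum_i\nu_i\ket{\psi'_i}$, so the value $(0,1)$ may be kept or discarded freely in the sums, while it must still be counted when counting default derivations. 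The base cases ($\tb{skip}$, $\tb{new qbit}$, $\tb{discard}$, $\q*=U$) reduce to comparing the single (or, for $\tb{discard}$, the two) operational outputs against the corresponding pair in Figure~\ref{fig:semantics_denotational_minimal}; for instance the outputs $\bra{0}_\q\ket{\psi}$ (bit $1$) and $\bra{1}_\q\ket{\psi}$ (bit $0$) of $\tb{discard }\q$ reproduce $Tr_{\q}(\ket{\psi}\bra{\psi})$ and $(\bra{0}_\q\otimes I_\Gamma)\ket{\psi}$.

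For sequencing and measurement the argument is bookkeeping plus linearity. A derivation of $S_0;S_1$ factors uniquely as a derivation of $S_0$ producing some $(\ket{\phi_j},\nu_j)$ followed by a derivation of $S_1$ on $\ket{\phi_j}$ producing $(\ket{\chi_{jk}},\mu_{jk})$, with combined bit $\nu_j\mu_{jk}$; applying the IH to $S_0$ and then to $S_1$ at each $\ket{\phi_j}$, and using linearity of the map $\mathcal{D}$ and the matrix $G$ from $\sem{S_1}$, gives $(\mathcal{D}\circ\mathcal{C})(\ket{\psi}\bra{\psi})=\mathcal{D}(\sum_j\ket{\phi_j}\bra{\phi_j})=\sum_{j,k}\ket{\chi_{jk}}\bra{\chi_{jk}}$ and $GF\ket{\psi}=G\sum_j\nu_j\ket{\phi_j}=\sum_{j,k}\nu_j\mu_{jk}\ket{\chi_{jk}}$, matching the rule $(\mathcal{D}\circ\mathcal{C},GF)$. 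The $\tb{meas}$ case is analogous: rule $(M_0)$ copies the ensemble of $S_0$ on $\ket{0}\bra{0}_\q\ket{\psi}$ with its bits, while $(M_1)$ copies that of $S_1$ on $\ket{1}\bra{1}_\q\ket{\psi}$ with all bits zeroed, which reproduces the two terms of $\overline{\tb{meas}}_\q$, the transformation matrix $F_0\ket{0}\bra{0}_\q$ seeing only the $0$-branch default.

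The $\tb{qcase}$ case is the crux of the finite part. Writing $\ket{\psi}=\ket{0}_\q\otimes\ket{\psi^0}+\ket{1}_\q\otimes\ket{\psi^1}$ with $\ket{\psi^k}=\bra{k}_\q\ket{\psi}$, rule $(Q)$ puts the derivations in bijection with pairs $(i,j)$ of derivations of $S_0$ on $\ket{\psi^0}$ and of $S_1$ on $\ket{\psi^1}$, with output $\nu_{1,j}\ket{0}_\q\otimes\ket{\psi_{0,i}}+\nu_{0,i}\ket{1}_\q\otimes\ket{\psi_{1,j}}$ and bit $\nu_{0,i}\nu_{1,j}$. Expanding $\sum_{i,j}(\cdot)(\cdot)^{\dagger}$ into its four control-qubit blocks and using $\nu\in\{0,1\}$ (so $|\nu|^2=\bar\nu=\nu$), each block factors into a product of the IH sums; the diagonal blocks pick up a scalar $\sum_j\nu_{1,j}$ or $\sum_i\nu_{0,i}$, each equal to $1$ precisely because $S_0$ and $S_1$ admit exactly one default transition (Lemma~\ref{lem:prob_distr}), while the off-diagonal blocks become $F_0\ket{\psi^0}\bra{\psi^1}F_1^{\dagger}$ and its adjoint. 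This matches the block form of $\overline{\tb{qcase}}_\q$ term by term, and the transformation-matrix identity is the analogous but simpler computation, where the scalar factors again collapse to $1$.

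The main obstacle is the $\tb{while}$ loop, where a least fixed point must be reconciled with a possibly infinite ensemble; set $W\triangleq\tb{while }\q\tb{ do }S$. The transformation-matrix half is easy: since $(W_1)$ forces the bit to $0$, the only default derivation is $(W_0)$, so $\sum_i\nu_i\ket{\psi'_i}=\ket{0}\bra{0}_\q\ket{\psi}$, which agrees with $F=\ket{0}\bra{0}_\q\otimes I_\Gamma$ as pinned down directly by the $F_0\ket{0}\bra{0}_\q$ component of $\overline{\tb{meas}}_\q$ with $(\mathcal{C}_0,F_0)=(\mathcal{I},I)$. The channel half is where the work lies. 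Setting $(\mathcal{C}_n,F_n)=(\mathscr{F}_\q^S)^n(0,0)$, I would show by an inner induction on $n$ — using the $\tb{meas}$ and sequencing cases already treated together with the IH for the loop body $S$ — that $\mathcal{C}_n(\ket{\psi}\bra{\psi})=\sum\ket{\psi'}\bra{\psi'}$ summed over exactly those derivations of $[W,\ket{\psi}]$ that apply $(W_1)$ fewer than $n$ times. These partial sums form increasing sequences of positive operators of trace at most $1$, hence converge; on the denotational side $\sem{W}_{\q,\Gamma}=\sup_n(\mathcal{C}_n,F_n)$, and by Remark~\ref{rem:topological_limit_channel} with Lemma~\ref{lem:dcpo_q_operations} this supremum is the Euclidean limit of the $\mathcal{C}_n$. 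Passing to the limit on both sides, and using that every derivation applies $(W_1)$ only finitely often so the partial ensembles exhaust $\mathscr{E}([W,\ket{\psi}])$, yields $\mathcal{C}(\ket{\psi}\bra{\psi})=\sum_i\ket{\psi'_i}\bra{\psi'_i}$. Making the ``exit within $n$ iterations'' indexing precise and justifying the monotone-convergence interchange is the delicate step.
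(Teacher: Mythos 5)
Your proposal is correct and follows essentially the same route as the paper's proof: structural induction over all subnormalized inputs, the same bookkeeping-plus-linearity arguments for sequencing and $\tb{meas}$, the same use of Lemma~\ref{lem:prob_distr} to collapse the scalar factors in the $\tb{qcase}$ block expansion, and the same treatment of $\tb{while}$ via the iterates $(\mathscr{F}_{\q}^{S})^n(0,0)$ matched against derivations with a bounded number of $(W_1)$ applications, followed by a limit passage through Lemma~\ref{lem:dcpo_q_operations} and Remark~\ref{rem:topological_limit_channel}. The "delicate step" you flag is handled in the paper by introducing an explicitly indexed $\tb{while}_n$ statement with its own rules, which is exactly the formalization your sketch calls for.
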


\begin{example}
 By Examples~\ref{ex:coin_op} and~\ref{ex:coin_den}, we have for all $\smx{\alpha \\ \beta} \triangleq \alpha\ket{0}+\beta\ket{1}\in St(\Hilb_{\q})$:
\begin{align*}
\mathscr{M}([\text{COIN},\smx{\alpha \\ \beta}]_\q) &=\ \left\{\left(\alpha \ket{0},1\right)\right\}\uplus \left\{\left(-\beta\left(-\frac{1}{\sqrt{2}}\right)^k\ket{0},0\right)\ \middle|\ k\geq 1\right\}\\[1.5ex] 
\sem{\text{COIN}}_{\q} &=\ \left(Tr(\cdot)\ket{0}\bra{0}, \ket{0}\bra{0}\right)
\end{align*}
To demonstrate adequacy, we can check that
\begin{align*}
Tr\bigl(\smx{\alpha \\ \beta}^\dagger\smx{\alpha \\ \beta} \bigr)\ket{0}\bra{0}&= (|\alpha|^2+|\beta|^2)\ket{0}\bra{0}\\
 &=  \ |\alpha|^2 \ket{0}\bra{0} + \sum_{k\geq 1}|\beta|^2 \frac{1}{2^k}\ket{0}\bra{0} \\
 \ket{0}\bra{0} (\smx{\alpha \\ \beta})&=\alpha\ket{0}
\end{align*}
\end{example}

As a consequence, we also obtain a simple formula for the probability of termination of a program on a given input:

\begin{corollary}
\label{cor:proba}
 Given a configuration $[S,\ket{\psi}]_{\Gamma} \in \tb{Conf}$ such that $\sem{S}_{\Gamma}=(\mathcal{C},F)$, the probability of termination of $S$ on input state $\ket{\psi}$ is given by
 \[
  p(S,\ket{\psi})_\Gamma = \frac{Tr(\mathcal{C}(\ket{\psi}\bra{\psi}))}{Tr(\ket{\psi}\bra{\psi})}
 \]
\end{corollary}

\section{Full Abstraction}
\label{section:observational_equivalence}

The final result on the language concerns the notion of observational equivalence. Two programs are said to be observationally equivalent if their observable behavior, in any context, is identical.\footnote{We call this property ``observational equivalence'' rather than ``contextual equivalence'' to avoid confusion with the notion of contextuality in quantum mechanics.} We show that programs are observationally equivalent if and only if they have the same denotational semantics.

\subsection{Observational Equivalence}
We start by introducing a notion of context $C^{\Gamma\rightarrow \Delta}$ as  a statement with exactly one hole $[\cdot]^{\Gamma\rightarrow \Delta}$.
\begin{definition}[Contexts]
 For all environments $\Gamma$ and $\Delta$, contexts are defined by the following grammar:
 \begin{align*}
  C^{\Gamma\rightarrow \Delta}  ::=  & [\cdot]^{\Gamma\rightarrow \Delta} \ |\  S;C^{\Gamma\rightarrow \Delta}\ |\ C^{\Gamma\rightarrow \Delta};S \\
  | \ & \tb{meas}\ \q \ (0\rightarrow S,1\rightarrow C^{\Gamma\rightarrow\Delta})\ | \ \tb{meas}\ \q \ (0\rightarrow C^{\Gamma\rightarrow \Delta},1\rightarrow S)  \\
  |\ & \tb{while}\ \q\ \tb{do}\ C^{\Gamma\rightarrow \Delta} \ |\ \tb{qcase}\ \q \ (0\rightarrow S,1\rightarrow C^{\Gamma\rightarrow\Delta}) \\
  | \ & \tb{qcase}\ \q \ (0\rightarrow C^{\Gamma\rightarrow\Delta},1\rightarrow S)
 \end{align*}
where $S$ is a statement.
\end{definition}

We introduce the inference rule below in order to extend well-formedness to contexts:
\[
\scalebox{1}{
\begin{prooftree}
\infer0[(C)]{\Gamma,\Sigma\vdash [\cdot]^{\Gamma\rightarrow\Delta}\triangleright \Delta,\Sigma}
\end{prooftree}}
\]
 Setting $\text{Var}([\cdot]^{\Gamma\rightarrow \Delta})\triangleq\emptyset$, we recursively define the set of variables $\text{Var}(C^{\Gamma\rightarrow \Delta})$ appearing in $C^{\Gamma\rightarrow \Delta}$. For any statement $S$ and context $C^{\Gamma\rightarrow\Delta}$, let $C[S]$ be the statement obtained by substituting $S$ for $[\cdot]^{\Gamma\rightarrow\Delta}$ in $C^{\Gamma\rightarrow\Delta}$.
\begin{definition}[Compatibility]
A context $C^{\Gamma\rightarrow\Delta}$ is \emph{compatible} with a program $(\Gamma;S)$ if
\begin{enumerate}
\item $\Delta = \Gamma^S$,
\item $ \emptyset\vdash C^{\Gamma\rightarrow \Delta}\triangleright \emptyset$ is derivable, and
 \item BV$(S)\cap \text{Var}(C^{\Gamma\rightarrow \Delta})=\emptyset$.
\end{enumerate}
\end{definition}
If $C^{\Gamma\rightarrow\Delta}$ is compatible with $(\Gamma;S)$, then $C[S]$ is well formed; specifically $\emptyset\vdash C[S]\triangleright\emptyset$ is derivable. 
The definition of compatibility ensures that i) the context $C^{\Gamma\rightarrow \Delta}$ introduces all the variables of $\Gamma$ and deletes all the variables of $\Delta$, and ii) the context can introduce some extra variables that are not in $S$.
\begin{definition}[Observational equivalence]
\label{def:observational_equiv}
Two programs $(\Gamma;S)$ and $(\Gamma;S')$ are \emph{observationally equivalent}, written as $(\Gamma;S)\approx(\Gamma;S')$, if for all context $C^{\Gamma\rightarrow \Delta}$ compatible with both $(\Gamma;S)$ and $(\Gamma;S')$, we have $p(C[S],\ket{})_\emptyset=p(C[S'],\ket{})_\emptyset$.
\end{definition}
Here $\ket{}$ is a one-dimensional input state, distinct from the vacuum state $\ket{\text{vac}}$. Observational equivalence formalizes the idea that there is no difference between the observable behavior of two programs.

\subsection{Main Result and Illustrating Examples}
\begin{restatable}[Full abstraction]{theorem}{thmfullabstraction}
\label{thm:fullabstraction}
$$(\Gamma;S)\approx (\Gamma;S')\text{ if and only if  }\sem{S}_\Gamma = \sem{S'}_{\Gamma}.$$
\end{restatable}

We summarize the proof as follows.
To show that two programs with the same denotation are observationally equivalent, we use the compositionality of the denotational semantics. To show the reverse implication, we show its contrapositive: for all pair of programs whose denotation is different, we construct a context for which the probability of termination is different. Each program is interpreted as a coherent quantum operation (a quantum operation and a transformation matrix). If their quantum operations are different, we construct a context that prepares an input state for which the outputs are different, and performs a measurement that distinguishes them. If their quantum operations are the same, but their transformation matrices are different, we first perform a $\tb{qcase}$ operation, resulting in new programs with different quantum operations, and proceed as in the first case.

Therefore, the denotational semantics exactly captures the observable behavior of programs. This further justifies the use of coherent quantum operations for the denotational semantics.

We end this section with a few examples.

\begin{example}
Consider the statements
\begin{align*}
 S_1 \ &\triangleq \ \tb{skip} \\
S_2 \ &\triangleq\ \tb{new qbit }\q;\ \tb{discard}\ \q \\
 S_3\ & \triangleq \ \q {*}{=} I
\end{align*}
First, the judgments $\emptyset \vdash S_1 \triangleright \emptyset$ and $\emptyset\vdash S_2 \triangleright \emptyset$ are derivable, and we have $\sem{S_2}_\emptyset =  (Tr_\q,\bra{0}_\q)\circ (\ket{0}\bra{0}_\q,\ket{0}_\q)= (\mathcal{I}_{\emptyset},I_{\emptyset}) = \sem{S_1}_{\emptyset}$.
Therefore by Theorem~\ref{thm:fullabstraction}, $(\emptyset;S_1)\approx (\emptyset;S_2)$: initializing and deleting a qubit is equivalent to doing nothing.

Second, the judgments $\q \vdash S_1 \triangleright \q$ and $\q\vdash S_3 \triangleright \q$ are derivable, and we have $\sem{S_1}_{\q} = (\mathcal{I}_{\q},I_{\q}) = \sem{S_3}_{\q}$. Therefore, $(\q;S_1)\approx (\q;S_3)$. This example illustrates why it is necessary to specify the input environment when stating that two programs are observationally equivalent. In particular, there exists no environment $\Gamma$  such that $(\Gamma;S_2)\approx (\Gamma;S_3)$—at least one of these programs is ill formed.
\end{example}

\begin{example}
We consider the coin flip statement (see Examples~\ref{ex:coin_op} and~\ref{ex:coin_den}), $\text{COIN} \triangleq  \tb{while} \ \q \tb{ do } \q{*}{=} H$. We have shown in Example~\ref{ex:coin_den} that $\sem{\text{COIN}}_{\q} = \left(Tr(\cdot)\ket{0}\bra{0}, \ket{0}\bra{0}\right)$. In fact, COIN is observationally equivalent to a discard/prepare program DP defined by $\text{DP} \ \triangleq \ \tb{discard} \ \q; \tb{new qbit }\q$. Indeed,
\begin{align*}
\sem{\text{DP}}_\q \!=\! (\ket{0}\bra{0},\ket{0})\!\circ\! (Tr_\q,\bra{0}_\q) \!=\!\left(Tr(\cdot)\ket{0}\bra{0},\ket{0}\bra{0}\right) \!=\! \sem{\text{COIN}}_{\q}.
\end{align*}
Therefore $(\q;\text{COIN})\approx (\q;DP)$. This is because repeatedly measuring a qubit until we obtain the state $\ket{0}$ has the same effect as discarding the qubit and preparing it again in the state $\ket{0}$.
\end{example}

\begin{example}
\label{ex:interchange}
Consider the following statements:
\begin{align*}
S\ \triangleq \ &\tb{qcase }\q\ (0\rightarrow S_0,\ 1\rightarrow S_1); \ \tb{qcase }\q\ (0\rightarrow S_0',\ 1\rightarrow S_1') \\
S'\ \triangleq \ &\tb{qcase }\q\ (0\rightarrow S_0;S_{0}',\ 1\rightarrow S_1;S_1')
 \end{align*}
with judgments $\Gamma\vdash S_i \triangleright \Gamma'$ and $\Gamma' \vdash S_i' \triangleright \Gamma''$ being derivable, for $i \in \{0,1\}$, and 
such that $\q$ does not appear in $S_0,S_0',S_1,S_1',\Gamma,\Gamma',\Gamma''$. Suppose that $ \forall i \in \{0,1\},$ $\sem{S_i}_\Gamma=(\mathcal{C}_i,F_i)$ and $\sem{S_i'}_{\Gamma'}=(\mathcal{C}_i',F_i')$.
Then $\q,\Gamma \vdash S \triangleright\q,\Gamma''$ and $\q,\Gamma \vdash S' \triangleright \q,\Gamma''$ are derivable, and:
\begin{align*}
 \sem{S}_{\q,\Gamma}&=\overline{\tb{qcase}}_{\q}\left[(\mathcal{C}_0',F_0'),(\mathcal{C}_1',F_1')\right] \circ \overline{\tb{qcase}}_\q \left[(\mathcal{C}_0,F_0),(\mathcal{C}_1,F_1)\right] \\
 \sem{S'}_{\q,\Gamma}&=\overline{\tb{qcase}}_\q\left[(\mathcal{C}_0',F_0')\circ (\mathcal{C}_0,F_0),(\mathcal{C}_1',F_1')\circ (\mathcal{C}_1,F_1)\right]
\end{align*}
By Proposition~\ref{prop:factorize_qcase}, $\sem{S}_{\q,\Gamma}=\sem{S'}_{\q,\Gamma}$, and therefore the two programs are observationally equivalent: $((\q,\Gamma);S)\approx ((\q,\Gamma);S')$.
\end{example}

\begin{example}
\label{ex:qcase_of_coin}
We aim to study the action of the $\tb{qcase}$ primitive when its branches contain probabilistic statements. Consider the following statement implementing a single coin flip:
\begin{align*}
 \text{COIN}_1 \triangleq \tb{new qbit }\q\ ;\ \q{*}{=} H \ ;\ \tb{meas }\q \ (0\rightarrow \tb{skip},1\rightarrow \tb{skip})
\end{align*}
The qubit $\q$ will finish in state $\ket{0}$ or $\ket{1}$, each with probability $\frac{1}{2}$. Formally, we can derive $\emptyset \vdash \text{COIN}_1 \triangleright \q$, and show that:
\[
 \sem{\text{COIN}_1}_{\emptyset}=\left(\frac{1}{2}\left(\ket{0}\bra{0}+\ket{1}\bra{1}\right),\frac{1}{\sqrt{2}}\ket{0}\right)
\]
Then, consider the statement consisting of a $\tb{qcase}$ where each branch contains one coin flip:
\[
 \text{QCOIN}_1 \triangleq \tb{qcase}\ \p\ (0\rightarrow\text{COIN}_1 , 1 \rightarrow \text{COIN}_1)
\]

We can derive $\p\vdash \text{QCOIN}_1 \triangleright \p,\q$ and show that $\sem{\text{QCOIN}_1}_\p \neq \sem{\text{COIN}_1}_\p$.\footnote{See details in Appendix~\ref{app:qcoin}.} This might seem surprising, because QCOIN$_1$ is nothing but two instances of COIN$_1$ in superposition. However, this is a consequence of the probabilistic behavior of COIN$_1$. Indeed, the statement QCOIN$_1$ allows for superposition of both outcomes of the coin toss—one occurring in each branch, which would not be possible with a single copy of COIN$_1$. Therefore, programs whose statements are $S$ on the one hand, and $\tb{qcase} \ \q \ (0\rightarrow S, \ 1\rightarrow S)$ on the other hand are generally not observationally equivalent.
\end{example}

\section{Physical interpretation and Design choices}
\label{section:interpretation_design}

The introduction of a default transition is inspired by a series of works in quantum foundations \cite{oi2003interference,aaberg2004subspace,dong2019controlled,universal_control,grenoble_coherent_control} on quantum control of general quantum transformations. Typically, such transformations are represented as sets of possible evolutions—commonly referred to as Kraus operators.
While the controlled version of a quantum operation can be defined for a given Kraus representation, different choices of Kraus representations will yield different results~\cite{badescu}. Various extensions of quantum operations have been introduced to this end.
In particular, \emph{pinned Kraus operators}, introduced in \cite{universal_control}, consist in pinpointing a specific Kraus operator in order to unambiguously define quantum control. Similarly we identify, in the operational semantics,  a default transition that allows the definition of the semantics of $\tb{qcase}$.

In \cite{dong2019controlled}, the authors show that quantum control can be done when an additional complex number $\nu_i$ is associated to each possible evolution (i.e., each Kraus operator), such that $\sum_i |\nu_i|^2 = 1$. In our setting, we consider a particular case where these amplitudes are restricted to $\{0,1\}$, which amounts to selecting a default evolution—the only one for which the amplitude is $1$.

The use of bits for the amplitudes is a deliberate design choice for our language. It ensures deterministic transitions when no inputs are provided to the programs. Note that one could have opted for alternative default transitions—for example, by swapping the roles of (D$_0$) and (D$_1$). More generally, other admissible choices exist that may not yield determinism but still preserve essential properties of the language, such as universality and adequacy (see Sections~\ref{section:universality_all} and~\ref{section:adequacy_2}).
However in the presence of recursion, some choices should be avoided; for example, were we to select (W$_1$) as the default transition, termination on the empty input would fail.

 In quantum foundations, other approaches to the quantum control of general quantum evolutions have been considered, in particular by extending the domain and co-domain with an extra state $\ket{\text{vac}}$ representing the empty input \cite{kristjansson_1,kristjansson}. We follow this alternative interpretation for the denotational semantics of the language (See Section \ref{section:denotational_semantics}) and show the adequacy of the two semantics, underlining the equivalence of these various approaches to quantum control of general quantum evolutions.

Implementation of quantum control on arbitrary quantum operations has been discussed in \cite{grenoble_coherent_control}. In particular quantum computers with "flying qubits" (photon-based or electron-based for instance) offers natural implementations of quantum control of arbitrary quantum operation. Intuitively given 	a device $d_0$ implementing $S_0$ and a device $d_1$ implementing $S_1$, $\tb{qcase}\ \q\ (0\to S_0, 1\to S_1)$ can be implemented via a polarising beam splitter, that depending on the polarisation of the particule (i.e., the state of $\q$) routes the particule to $d_0$ or to $d_1$. Moreover, in such systems, general recursion like a while loop can be implemented by combining the implementation of the body of the loop with a measurement device (to decided whether the condition is satisfied) and a feedback loop made for instance by means of optic fibres.

\vspace*{-0.1cm}
\section{Conclusion}
\label{section:ccl}
We have developed a high-level programming language combining classical and quantum control in the presence of while loops, and succeeded in developing well-defined semantics, thus resolving a problem that has been open for a decade.
The language has strong properties: universality, adequacy, and full abstraction, which demonstrate the relevance of the semantic approaches used (default transitions and coherent quantum operations).

Future work concerns the extension of this study to a higher-order language and the compilation of programs to lower-level quantum models. 
There also remain some open questions regarding the speed-up offered by coherent control in the non-unitary setting.
For instance, quantum switch~\cite{chiribella} was shown to enable an exponential separation in comparison to models with classically ordered gates; and importantly, unitary gates alone are insufficient to demonstrate this advantage~\cite{kristjansson_exponential}.
An interesting question would be whether the qcase construct enables a similar speedup.

\printbibliography

\appendix

\section{Examples}
\label{app:examples}

This appendix provides extra details on some of the examples of the paper along with some additional examples.

\subsection{Details on Example~\ref{ex:meas}}
\label{app:meas_build}

We aim to show that the statement
\begin{align*}
\text{$M_{S_0,S_1}$}\ \triangleq \ &
        \tb{new qbit } \q' \ ;\  S_{\text{CNOT}}^{\q,\q'}\ ;\\
        & \tb{qcase}\ \q'\ (0\rightarrow S_0,1\rightarrow S_1)\ ; \ \tb{discard }\q'
\end{align*}
generates the same transitions as $ \tb{meas}\ \q\ (0\rightarrow S_0, \ 1\rightarrow S_1)$ in the operational semantics. Suppose that $\q,\Gamma\vdash S_0 \triangleright \Gamma'$ and $\q,\Gamma\vdash S_1 \triangleright\Gamma'$ are derivable and $\q' \notin \text{Var}(S_0) \cup\text{Var}(S_1) \cup\Gamma \cup \Gamma'$, in which case $\q,\Gamma \vdash M_{S_0,S_1} \triangleright\Gamma'$ is derivable. Let $\ket{\psi}=\ket{0}_{\q}\otimes \ket{\psi_0}_{\Gamma}+\ket{1}_\q\otimes \ket{\psi_1}_{\Gamma}\in St(\Hilb_{\q,\Gamma})$. Transitions from the configuration $[M_{S_0,S_1},\ket{\psi}]_{\q,\Gamma}$ can be derived as follows. For the two first instructions, we have:
\[\scalebox{0.9}{
 \begin{prooftree}
  \infer0[(N)]{[\tb{new qbit }\q',\ket{\psi}]\stackrel{1}{\to} \ket{0}_{\q'}\otimes \ket{\psi}_{\q,\Gamma}}
 \end{prooftree}}
\]
\[\scalebox{0.9}{
 \begin{prooftree}
 \infer0[(SK)]{\begin{array}{r}[\tb{skip},\ket{0}_{\q'}\otimes \ket{\psi_0}_{\Gamma}] \\
 \stackrel{1}{\to} \ket{0}_{\q'}\otimes \ket{\psi_0}_{\Gamma} \end{array}}
 \infer0[(U)]{\begin{array}{r}[\q'{*}{=}X,\ket{0}_{\q'}\otimes \ket{\psi_1}_{\Gamma}]\\
\stackrel{1}{\to} \ket{1}_{\q'}\otimes \ket{\psi_0}_{\Gamma}
\end{array}}
 \infer2[(Q)]{\begin{array}{r}[\tb{qcase}\ \q \ (0 \rightarrow \tb{skip},1\rightarrow \q'{*}{=}X),\ket{0}_{\q'}\otimes \ket{\psi}_{\q,\Gamma}] \\
\stackrel{1}{\to} \ket{00}_{\q',\q}\otimes \ket{\psi_0}_{\Gamma}+\ket{11}_{\q',\q}\otimes\ket{\psi_1}
\end{array}}
 \end{prooftree}}
\]
Then, for every pair of transitions $[S_k,\ket{k}_{\q}\otimes\ket{\psi}_{\Gamma}]\rightarrow (\ket{\phi_k}_{\Gamma'},\nu_k)$, with $k\in\{0,1\}$, we have:
\[
\scalebox{0.9}{
 \begin{prooftree}
 \hypo{\forall k\in\{0,1\}:[S_k,\ket{k}_{\q}\otimes\ket{\psi}_{\Gamma}]\stackrel{\nu_k}{\to} \ket{\phi_k}_{\Gamma'}}
 \infer1[(Q)]{\begin{array}{r}[\tb{qcase }\q'\ (0\rightarrow S_0,1 \rightarrow S_1),\ket{00}_{\q',\q}\otimes \ket{\psi_0}_{\Gamma}+\ket{11}_{\q',\q}\otimes\ket{\psi_1}_\Gamma]\\
 \stackrel{\nu_0\nu_1}{\to} \nu_1 \ket{0}_{\q'}\otimes \ket{\phi_0}_{\Gamma'}+\nu_0\ket{1}_{\q'}\otimes\ket{\phi_1}_{\Gamma'}
\end{array}
}
 \end{prooftree}}
\]
Lastly,
\[\scalebox{0.9}{
 \begin{prooftree}
 \infer0[(D$_{0}$)]{[\tb{discard }\q',\nu_1 \ket{0}_{\q'}\otimes \ket{\phi_0}_{\Gamma'}+\nu_0\ket{1}_{\q'}\otimes\ket{\phi_1}_{\Gamma'}]\stackrel{1}{\to} \nu_1\ket{\phi_0}_{\Gamma'}}
 \end{prooftree}}
\]
\[\scalebox{0.9}{
 \begin{prooftree}
 \infer0[(D$_{1}$)]{[\tb{discard }\q,\nu_1 \ket{0}_{\q'}\otimes \ket{\phi_0}_{\Gamma'}+\nu_0\ket{1}_{\q'}\otimes\ket{\phi_1}_{\Gamma'}]\stackrel{0}{\to} \nu_0\ket{\phi_1}_{\Gamma'}}
 \end{prooftree}}
\]

By applying the (S) rule several times, we obtain the transitions:
\begin{align*}
 [M_{S_0,S_1},\ket{\psi}]_{\q,\Gamma}&\stackrel{\nu_0\nu_1}{\to} \nu_1\ket{\phi_0} \\
 [M_{S_0,S_1},\ket{\psi}]_{\q,\Gamma}&\stackrel{0}{\to} \nu_0\ket{\phi_1}
\end{align*}
There always exists exactly one default transition, for which $\nu_k=1$ (see Lemma~\ref{lem:prob_distr}), therefore:
\[
 \mathscr{M}([M_{S_0,S_1},\ket{\psi}]_{\q,\Gamma})=\mathscr{M}([\tb{meas }\q\ (0\rightarrow S_0,1\rightarrow S_1),\ket{\psi}]_{\q,\Gamma})
\]

In conclusion, the statement $M_{S_0,S_1}$ generates exactly the same transitions as the corresponding classical measurement, for all input state $\ket{\psi}$ (ignoring occurrences of $(0,0)$ which have no effect).

\subsection{Details on Example~\ref{ex:qcase_of_coin}}
\label{app:qcoin}

Define:
\begin{align*}
 &\text{COIN}_1 \triangleq \tb{new qbit }\q\ ;\ \q{*}{=} H \ ;\ \tb{meas }\q \ (0\rightarrow \tb{skip},1\rightarrow \tb{skip}) \\
 &\text{QCOIN}_1 \triangleq \tb{qcase}\ \p\ (0\rightarrow\text{COIN}_1 , 1 \rightarrow \text{COIN}_1)
\end{align*}
On the one hand, we can derive $\p \vdash \text{COIN}_1 \triangleright \p,\q$, and show that:
\begin{align*}
 &\sem{\text{COIN}_1}_{\p}=\left(\mathcal{I}_\p\otimes \frac{1}{2}\left(\ket{0}\bra{0}+\ket{1}\bra{1}\right),\frac{1}{\sqrt{2}}\ket{0}\right) \\
 &\quad = \left(\begin{bmatrix}
          a&b\\
          c&d
\end{bmatrix}
\mapsto
\begin{bmatrix}
 \sfrac{a}{2}&0&\sfrac{b}{2}&0 \\
 0&\sfrac{a}{2}&0&\sfrac{b}{2}\\
 \sfrac{c}{2}&0&\sfrac{d}{2}&0\\
 0&\sfrac{c}{2}&0&\sfrac{d}{2}
\end{bmatrix}\ ,\
\begin{bmatrix}
 \sfrac{1}{\sqrt{2}} & 0 \\
 0&0\\
 0&\sfrac{1}{\sqrt{2}}\\
 0&0
\end{bmatrix}
\right).
\end{align*}

And one the other hand, we can derive $\p\vdash \text{QCOIN}_1 \triangleright \p,\q$ and show that:
\begin{align*}
 &\sem{\text{QCOIN}_1}_{\p} =\overline{\tb{qcase}}_\p\left[\sem{\text{COIN}_1}_\p,\sem{\text{COIN}_1}_\p\right] \\
&\quad = \left(\begin{bmatrix}
          a&b\\
          c&d
\end{bmatrix}
\mapsto
\begin{bmatrix}
 \sfrac{a}{2}&0&\sfrac{b}{2}&0 \\
 0&\sfrac{a}{2}&0&0\\
 \sfrac{c}{2}&0&\sfrac{d}{2}&0\\
 0&0&0&\sfrac{d}{2}
\end{bmatrix}\ ,\
\begin{bmatrix}
 \sfrac{1}{\sqrt{2}} & 0 \\
 0&0\\
 0&\sfrac{1}{\sqrt{2}}\\
 0&0
\end{bmatrix}
\right).
\end{align*}
Therefore $\sem{\text{QCOIN}_1}_\p \neq\sem{\text{COIN}_1}_\p$, and by Theorem~\ref{thm:fullabstraction},\\ $(\p;\text{QCOIN}_1)\not\approx (\p;\text{COIN}_1)$.

\subsection{Additional Examples}

\begin{example}[Variable renaming]
\label{ex:renaming}
Consider the statement below renaming a variable $\p$ to $\q \neq \p$:
\begin{align*}
 \tb{rename }\p \rightarrow \q \ \triangleq\ & \tb{new qbit }\q \ ;\ S_{\text{SWAP}}^{\p,\q}\ ; \ \tb{discard }\p
\end{align*}
where the statement $S_{\text{SWAP}}^{\p,\q}$ is defined in Example~\ref{ex:cnot_op}. For all $\Gamma$ containing neither $\p$ nor $\q$, we can derive $\p,\Gamma\vdash \tb{rename }\p \rightarrow \q\ \triangleright\q,\Gamma$. Then, using example~\ref{ex:cnot_den}:
\begin{align*}
 \sem{\tb{rename }\p \rightarrow \q}_\p =& \sem{ \tb{discard }\p}_{\p,\q} \circ \sem{S^{\p,\q}_{\text{SWAP}}}_{\p,\q} \circ \sem{\tb{new qbit }\q}_\p \\
 =&(Tr_{\p},\bra{0}_\p\otimes I_\q) \\
 & \circ (\text{SWAP}_{\p,\q}(\cdot)\text{SWAP}_{\p,\q}^{\dag},\text{SWAP}_{\p,\q}) \\
 & \circ (\mathcal{I}_\p\otimes \ket{0}\bra{0}_\q,I_\p\otimes \ket{0}_\q) \\
 =&(\mathcal{I}_{\p\rightarrow \q},I_{\p\rightarrow \q})
\end{align*}
where $I_{\p\rightarrow \q}=\ket{0}_\q\bra{0}_\p + \ket{1}_\q\bra{1}_\p$ and $\mathcal{I}_{\p\rightarrow \q}=I_{\p\rightarrow \q}(\cdot)I_{\q\rightarrow \p}$.
\end{example}

\begin{example}[Infinite loop]
\label{ex:loop_op}
The following statement implements an infinite loop:
 \begin{align*}
 \text{LOOP}\ \triangleq \  & \tb{new qbit }\qr\ ;\ \qr{*}{=}X\ ;\ \tb{while} \ \qr \tb{ do skip}\ ;\ \tb{discard} \ \qr
 \end{align*}
 with $X$ the Pauli matrix defined on computational basis states as $X:\ket x \mapsto \ket{1-x}$. The statement acts by initializing a new qubit $\qr$, setting its value to $\ket{1}$, and using the qubit to control a $\tb{while}$ loop. Since the loop does not modify the state of $\qr$, the program cannot terminate.

 We first study its operational semantics. For all environment $\Gamma$ such that $\qr \notin \Gamma$, we can derive $\Gamma\vdash \text{LOOP}\triangleright \Gamma$. We consider transitions from the configuration $[\text{LOOP},\ket{}]_{\emptyset}$, where $\ket{}$ denotes the state of a one-dimensional input system. Consider the following transitions:
 \[
 \scalebox{0.9}{
 \begin{prooftree}
 \infer0[(W$_0$)]{[\tb{while }\qr \tb{ do}\ \tb{skip},\ket{1}]_\qr\stackrel{1}{\to} 0}
 \end{prooftree}}
\]

\[
\scalebox{0.9}{
 \begin{prooftree}
 \infer0[(SK)]{[\tb{skip},\ket{1}]_{\qr}\stackrel{1}{\to} \ket{1}_\qr}
 \infer0[(W$_0$)]{[\tb{while} \ \qr \ \tb{do}\ \tb{skip},\ket{1}]_{\qr} \stackrel{1}{\to} 0}
\infer2[(S)]{[\tb{skip};\tb{while} \ \qr \ \tb{do}\ \tb{skip},\ket{1}]_{\qr}\stackrel{1}{\to} 0}
\infer1[(W$_1$)]{[\tb{while}\ \qr\ \tb{do}\ \tb{skip},\ket{1}]_{\qr}\stackrel{0}{\to} 0}
 \end{prooftree}}
\]
By induction, we have $\mathscr{M}([\tb{while}\ \qr\ \tb{do}\ \tb{skip},\ket{1}]_{\qr})=\left\{(0,1) \right\}$,
and therefore by applying Rule (S) several times, we obtain
$\mathscr{M}([\text{LOOP},\ket{}]_{\emptyset})=\left\{(0,1)\right\}$.
As expected, the probability of termination is $p(\text{LOOP},\ket{})=0$.

Regarding its denotational semantics, for all environment $\Gamma$ such that $\qr \notin \Gamma$ we have:
\begin{align}
 \label{eq:loop}
 \begin{split}
\sem{\text{LOOP}}_{\Gamma} = & \sem{\tb{discard }\qr}_{\qr,\Gamma} \circ \sem{\tb{while }\qr\tb{ do skip}}_{\qr,\Gamma} \\
& \circ \sem{r{*}{=}X}_{\qr,\Gamma}\circ \sem{\tb{new qbit }\qr}_{\Gamma}.
\end{split}
\end{align}
The fixpoint $\sem{\tb{while }\qr\tb{ do skip}}_{\qr,\Gamma} = \text{lfp}(\mathscr{F}_{\qr}^{\tb{skip}})$ is given by the supremum of iterated applications of $\mathscr{F}_{\qr}^{\tb{skip}}$ to $(0,0)$. For all $(\mathcal{C},F)\in\tb{CQO}_{\qr,\Gamma;\qr,\Gamma}$ we have:
\[
 \mathscr{F}_{\qr}^{\tb{skip}}(\mathcal{C},F)=\overline{\tb{meas}}_{\qr}\bigl[(\mathcal{I}_{\qr,\Gamma},I_{\qr,\Gamma}), (\mathcal{C},F)\bigr]
\]
Therefore:
\begin{align*}
 \mathscr{F}_{\qr}^{\tb{skip}}(0,0)&= \overline{\tb{meas}}_{\qr}\bigl[(\mathcal{I}_{\qr,\Gamma},I_{\qr,\Gamma}), (0,0) \bigr] = \left(\mathcal{P}_{0}^{\qr} ,\ket{0}\bra{0}_\qr \right)
 \\
 (\mathscr{F}_{\qr}^{\tb{skip}})^2(0,0) &= \mathscr{F}_{\qr}^{\tb{skip}}\left(\mathcal{P}_{0}^{\qr} ,\ket{0}\bra{0}_\qr\right)  \\
 & = \overline{\tb{meas}}_{\qr}\left[(\mathcal{I}_{\qr,\Gamma},I_{\qr,\Gamma}), \left(\mathcal{P}_{0}^{\qr} ,\ket{0}\bra{0}_\qr \right) \right]  \\
 & = \left(\mathcal{P}_0^\qr , \ket{0}\bra{0}_\qr \right)
= \mathscr{F}_{\qr}^{\tb{skip}}(0,0)
\end{align*}
Consequently, for all $k\geq 1$, $(\mathscr{F}_{\qr}^{\tb{skip}})^k(0,0)=\bigl(\mathcal{P}_0^\qr , \ket{0}\bra{0}_\qr \bigr)$,
giving us the least fixed point \[\sem{\tb{while }\qr\tb{ do skip}}_{\qr,\Gamma} =\left(\mathcal{P}_0^\qr , \ket{0}\bra{0}_{\qr} \right).\]
Combined with Equation~(\ref{eq:loop}), we obtain:
\begin{align*}
\sem{\text{LOOP}}_{\Gamma} =& (Tr_\qr,\bra{0}_\qr\otimes I_{\Gamma}) \circ \left(\mathcal{P}_0^\qr , \ket{0}\bra{0}_\qr \right) \circ (X_{\qr}(\cdot)X_\qr^{\dag},X_\qr) \circ \\
& (\ket{0}\bra{0}_\qr\otimes \mathcal{I}_{\Gamma},\ket{0}_\qr\otimes I_{\Gamma}) \\
=&(0,0)
\end{align*}
\end{example}

\begin{example}[Discarding with respect to different bases]
We show that discard statements with respect to different bases are not observationally equivalent. Consider the following statement indexed by a unitary $U$, which discards the qubit $\q$:
 \[
  \text{D}_U \ \triangleq \ \q{*}{=}U ; \tb{discard}\ \q
 \]
The judgment $\q\vdash D_U \triangleright \emptyset$ is derivable. We have:
\begin{align*}
 \sem{\text{D}_U}_{\q} &= (Tr_\q, \bra{0}_\q)\circ (U_\q(\cdot)U^{\dag}_\q,U_\q) =(Tr_\q, \bra{0}_\q U_\q)
\end{align*}
Therefore $(\q;\text{D}_U)\approx (\q;\text{D}_V)$ if and only if $U^{\dag}\ket{0}=V^{\dag}\ket{0}$, meaning that discards with respect to different bases are generally not equivalent. This is different than the usual framework of quantum operations, where operations performed just before a discard can be ignored. Indeed, when $\sem{\text{D}_U}_{\q}\neq \sem{\text{D}_V}_{\q}$ only the transformation matrices differ, the quantum operations both being the usual traceout operation.
\end{example}

\section{Technical Results}
\label{app:lemmas}

In this section, we prove the lemmas and propositions stated in Sections~\ref{section:syntax},~\ref{section:operational_semantics}, and~\ref{section:denotational_semantics}.

\subsection{Bound Variables}
\label{app:bound_vars}

We prove that the set of bound variables of a well-formed statement (see Section~\ref{ss:wf}) is well defined. It is useful to introduce the sets of input and output variables of a well-formed statement:

\begin{definition}[Input and output variables]
 The sets of input and output variables of well-formed statements are defined inductively as follows:
 \[
 \begin{array}{lclclcl}
  \text{in}(\tb{skip}) &\triangleq & \emptyset & \quad &
  \text{out}(\tb{skip}) &\triangleq& \emptyset\\
  \text{in}(\tb{new qbit }\q) &\triangleq & \emptyset & \quad &
  \text{out}(\tb{new qbit }\q) &\triangleq& \{\q\} \\
  \text{in}(\tb{discard }\q)\ &\triangleq& \{\q\} & \quad &
  \text{out}(\tb{discard }\q)\ &\triangleq& \emptyset \\
  \text{in}(\q{*}{=}U) &\triangleq &\{\q\} &\quad &
  \text{out}(\q{*}{=}U) &\triangleq & \{\q\} 
  \end{array}
  \]
\[
 \begin{array}{lclc}  
  \text{in}(S_0;S_1) &\!\!\!\triangleq\!\!& \text{in}(S_0)\cup (\text{in}(S_1)\setminus \text{out}(S_0)) \\
  \text{out}(S_0;S_1) &\!\!\!\triangleq\!\!& \text{out}(S_1)\cup (\text{out}(S_0)\setminus \text{in}(S_1)) \\
  \text{in}(\tb{meas }\q \ (0\rightarrow S_0,1\rightarrow S_1)) &\!\!\!\triangleq\!\!& \{\q\}\cup \text{in}(S_0)\cup \text{in}(S_1)  \\
  \text{out}(\tb{meas }\q \ (0\rightarrow S_0,1\rightarrow S_1))\ && \\
  && \hspace{-4cm}\triangleq\ \left\{
    \begin{array}{ll}
        \text{out}(S_0) \cup \text{out}(S_1) & \mbox{if } \q\in \text{in} (S_0)\cup \text{in}(S_1) \\
        \{\q\}\cup \text{out}(S_0)\cup \text{out}(S_1) & \mbox{if } \q\notin \text{in}(S_0)\cup \text{in}(S_1)
    \end{array}
\right. \vspace{1.5mm}\\
  \text{in}(\tb{qcase }\q \ (0\rightarrow S_0,1\rightarrow S_1))&\!\!\!\triangleq\!\!& \{\q\}\cup \text{in}(S_0) \cup \text{in}(S_1) \\
  \text{out}(\tb{qcase }\q \ (0\rightarrow S_0,1\rightarrow S_1)) &\!\!\!\triangleq\!\!& \{\q\}\cup \text{out}(S_0) \cup \text{out}(S_1)\\
  \text{in}(\tb{while}\ \q \text{ do }S) &\!\!\!\triangleq\!\!& \{\q\}\cup \text{in}(S) \\
  \text{out}(\tb{while}\ \q \text{ do }S) &\!\!\!\triangleq\!\!& \{\q\}\cup \text{out}(S)
  \end{array}
  \]

\end{definition}

To show that the set of bound variables $BV(S)$ is well defined, we will show that for all well-formed program $(\Gamma;S)$, $\text{Var}(S)\setminus (\Gamma\cup \Gamma^S)= \text{Var}(S) \setminus (\text{in}(S)\cup\text{out}(S))$ and is thus independent of $\Gamma$. We show the following intermediate lemma:
\begin{lemma}
\label{lem:input_output_vars}
Let $(\Gamma;S)$ be a well-formed program. Then there exists an environment $\Sigma$ such that $\Gamma = \text{\emph{in}}(S),\Sigma$ and $\Gamma^S=\text{\emph{out}}(S),\Sigma$ with $\text{\emph{Var}}(S)\cap \Sigma = \emptyset$.
\end{lemma}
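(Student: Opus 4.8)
The plan is to argue by structural induction on the derivation of $\Gamma\vdash S\triangleright\Delta$, matching each well-formedness rule of Figure~\ref{fig:wellformedness} against the corresponding clause defining $\text{in}$ and $\text{out}$. Reading the comma as disjoint union, the goal at each node is to exhibit $\Sigma$ with $\Gamma=\text{in}(S)\sqcup\Sigma$, $\Delta=\text{out}(S)\sqcup\Sigma$, and $\Sigma\cap\text{Var}(S)=\emptyset$. The guiding remark is that $\Sigma$ is \emph{forced}: the first equation determines $\Sigma=\Gamma\setminus\text{in}(S)$ as soon as $\text{in}(S)\subseteq\Gamma$, so the content of each case is to check that this same set also witnesses the output equation and avoids $\text{Var}(S)$.

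The base cases ($\tb{skip}$, $\tb{new qbit }\q$, $\tb{discard }\q$, $\q*=U$) are immediate: each input environment already has the shape $\text{in}(S),\Gamma'$, one takes $\Sigma\triangleq\Gamma'$, and disjointness from $\text{Var}(S)$ comes from the freshness conditions implicit in the environments (e.g.\ $\q\notin\Gamma$ whenever $\q,\Gamma$ is written). In every inductive case I take the forced context $\Sigma\triangleq\Gamma\setminus\text{in}(S)$ and dispose of disjointness cheaply: the premises supply contexts $\Sigma_i$, each equal to its own input environment minus $\text{in}(S_i)$ and, by induction, disjoint from $\text{Var}(S_i)$; since $\Sigma$ is contained in each relevant $\Sigma_i$, disjointness of $\Sigma$ from $\text{Var}(S)$ (the union of the $\text{Var}(S_i)$, together with $\{\q\}$ for $\tb{meas}$, $\tb{qcase}$, and $\tb{while}$) follows directly from the inductive hypotheses, using $\q\notin\Gamma$ and, for $\tb{qcase}$, the side-condition $\q\notin\text{Var}(S_0)\cup\text{Var}(S_1)$.

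The substance is the output equation. For the sequence rule the premises $\Gamma\vdash S_0\triangleright\Gamma'$ and $\Gamma'\vdash S_1\triangleright\Gamma''$ give $\Gamma=\text{in}(S_0)\sqcup\Sigma_0$, $\Gamma'=\text{out}(S_0)\sqcup\Sigma_0=\text{in}(S_1)\sqcup\Sigma_1$, and $\Gamma''=\text{out}(S_1)\sqcup\Sigma_1$. The key step is to exploit the two decompositions of the shared environment $\Gamma'$: comparing them gives $\Sigma_0\cap\Sigma_1=\Sigma_0\setminus\text{in}(S_1)$ and $\Sigma_1=\bigl(\text{out}(S_0)\setminus\text{in}(S_1)\bigr)\sqcup(\Sigma_0\cap\Sigma_1)$, after which unfolding the definitions of $\text{in}(S_0;S_1)$ and $\text{out}(S_0;S_1)$ shows that the single context $\Sigma=\Sigma_0\cap\Sigma_1$ (which is exactly $\Gamma\setminus\text{in}(S_0;S_1)$) satisfies both equations. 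The $\tb{meas}$ case follows the same pattern but additionally needs the case split on whether $\q\in\text{in}(S_0)\cup\text{in}(S_1)$, which is precisely the split built into the definition of $\text{out}(\tb{meas}\ \q\ \ldots)$: the clause reintroducing $\q$ into the output is exactly what is required when neither branch reads $\q$. The $\tb{qcase}$ case is the parallel analogue, both branches sharing input $\Gamma$ and output $\Gamma'$ with $\q$ threaded as an input/output variable, and $\tb{while}$ reduces to a single application of the hypothesis to its body.

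I expect the sequence and measurement cases to be the main obstacle, since their $\text{in}$/$\text{out}$ clauses involve set differences and a case analysis: the delicate point is to verify that the \emph{single} context extracted from the premises fits the input and output sides simultaneously, which rests entirely on the bookkeeping relating $\Sigma_0$ and $\Sigma_1$ through the intermediate environment $\Gamma'$. All remaining steps are routine set algebra.
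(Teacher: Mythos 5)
Your proposal is correct and follows essentially the same route as the paper's proof: structural induction, with the witness in each binary case being $\Sigma_0\cap\Sigma_1$ (equivalently $\Gamma\setminus\mathrm{in}(S)$), the sequence and $\tb{meas}$ cases resolved by comparing the two disjoint decompositions of the shared intermediate environment, and disjointness from $\text{Var}(S)$ inherited from the premises. The only cosmetic difference is your "forced context" framing; the underlying set-algebra identities (e.g.\ $\Sigma_0\setminus\Sigma_1=\mathrm{in}(S_1)\setminus\mathrm{out}(S_0)$ inside $\Gamma'$) are exactly those the paper computes.
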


\begin{proof}
We show the result by structural induction on the statement $S$. The $\tb{new qbit}$, $\tb{discard}$, unitary and $\tb{skip}$ cases are straightforward.

For the sequence case, suppose that $(\Gamma;(S_0;S_1))$ is well formed. Let $\Gamma^S=\Gamma''$. The last step of the derivation is
\[\scalebox{1}{
\begin{prooftree}
 \hypo{\Gamma\vdash S_0 \triangleright \Gamma'}
 \hypo{\Gamma'\vdash S_1\triangleright \Gamma''}
 \infer2[
 ]{\Gamma \vdash S_0;S_1\triangleright \Gamma''}
\end{prooftree}}
\]
for some environment $\Gamma'$. By induction hypothesis, there exist $\Sigma$ and $\Sigma'$ such that:
\[
\begin{array}{cll}
 \Gamma &\!\! = \ \text{in}(S_0),\Sigma \\
 \Gamma' &\!\!=\ \text{out}(S_0),\Sigma & \!\!= \ \text{in}(S_1),\Sigma' \\
 \Gamma'' &\!\!=\ \text{out}(S_1),\Sigma'
\end{array}
\qquad
\begin{array}{ccc}
 \text{Var}(S_0)\cap \Sigma & \!\!= & \!\!\emptyset \\
 \text{Var}(S_1)\cap \Sigma' &\!\! = & \!\! \emptyset
\end{array}
\]
Therefore,
\begin{align*}
 \Gamma &= \text{in}(S_0) \uplus (\Sigma \cap \Sigma') \uplus (\Sigma \setminus \Sigma') \\
 & =  \text{in}(S_0) \uplus (\Sigma \cap \Sigma') \uplus (\text{in}(S_1)\setminus \text{out}(S_0)) \\
 & =   \text{in}(S_0;S_1),(\Sigma\cap\Sigma') \\
 \Gamma''& = \text{out}(S_1) \uplus (\Sigma'\cap\Sigma)\uplus (\Sigma'\setminus\Sigma) \\
 & = \text{out}(S_1) \uplus (\Sigma'\cap\Sigma)\uplus (\text{out}(S_0)\setminus\text{in}(S_1)) \\
 & = \text{out}(S_0;S_1),(\Sigma \cap\Sigma')
\end{align*}
which proves the sequence case.

Next, suppose that $S=\tb{meas}\ \q\ (0\rightarrow S_0,1\rightarrow S_1)$ and the last step of the derivation is
\[\scalebox{1}{
\begin{prooftree}
 \hypo{\q,\Gamma\vdash S_0 \triangleright \Gamma'}
 \hypo{\q,\Gamma\vdash S_1\triangleright \Gamma'}
 \infer2[
 ]{\q,\Gamma \vdash \tb{meas }\q\ (0\rightarrow S_0,1\rightarrow S_1)\triangleright\Gamma'}
\end{prooftree}}
\]
By induction hypothesis, there exist $\Sigma_0$ and $\Sigma_1$ such that
\[
 \begin{array}{ccccc}
  \q,\Gamma &\!\!=&\!\! \text{in}(S_0),\Sigma_0 &\!\!=&\!\!\text{in}(S_1),\Sigma_1 \\
  \Gamma' &\!\!= &\!\!\text{out}(S_0),\Sigma_0 &\!\!=&\!\!\text{out}(S_1),\Sigma_1 \\
 \end{array} \qquad
 \begin{array}{ccc}
 \text{Var}(S_0)\cap \Sigma_0&\!\!=&\!\!\emptyset \\
 \text{Var}(S_1)\cap \Sigma_1&\!\!=&\!\!\emptyset \\
 \end{array}
\]
\begin{itemize}
 \item If $\q\in \text{in}(S_0)\cup \text{in}(S_1)$ then:
 \begin{align*}
  \q,\Gamma & = \text{in}(S),\Sigma_0\cap\Sigma_1 \\
  \Gamma'&= \text{out}(S),\Sigma_0\cap\Sigma_1
 \end{align*}
 and the result holds, because $\q\notin \Sigma_0\cap\Sigma_1$.
  \item Otherwise, if $\q\notin \text{in}(S_0)\cup \text{in}(S_1)$:
  \begin{align*}
  \q,\Gamma & = \text{in}(S),(\Sigma_0\cap\Sigma_1)\setminus \{\q\} \\
  \Gamma'&= \text{out}(S),(\Sigma_0\cap\Sigma_1)\setminus \{\q\}
 \end{align*}
and the result holds.
\end{itemize}

Suppose that $S=\tb{qcase}\ \q\ (0\rightarrow S_0,1\rightarrow S_1)$ and the last step of the derivation is
\[\scalebox{1}{
 \begin{prooftree}
 \hypo{\Gamma \vdash S_0 \triangleright \Gamma'}
 \hypo{\Gamma \vdash S_1\triangleright \Gamma'}
 \hypo{\q\notin \text{Var}(S_0)\cup\text{Var}(S_1)}
\infer3{ \q,\Gamma \vdash \tb{qcase}\ \q\ (0\rightarrow S_0,1\rightarrow S_1)\triangleright \q,\Gamma'}
\end{prooftree}}
\]
By induction hypothesis, there exist $\Sigma_0$ and $\Sigma_1$ such that
\[
 \begin{array}{ccccc}
  \Gamma &\!\!=&\!\! \text{in}(S_0),\Sigma_0 &\!\!=&\!\!\text{in}(S_1),\Sigma_1 \\
  \Gamma' &\!\!= &\!\!\text{out}(S_0),\Sigma_0 &\!\!=&\!\!\text{out}(S_1),\Sigma_1 \\
 \end{array} \qquad
 \begin{array}{ccc}
 \text{Var}(S_0)\cap \Sigma_0&\!\!=&\!\!\emptyset \\
 \text{Var}(S_1)\cap \Sigma_1&\!\!=&\!\!\emptyset \\
 \end{array}
\]
Then:
\begin{align*}
 \q,\Gamma &= \text{in}(S),\Sigma_0\cap\Sigma_1 \\
 \q,\Gamma' &= \text{out}(S),\Sigma_0\cap\Sigma_1
\end{align*}
and the result holds.

Lastly, suppose that $S=\tb{while}\ \q \ \tb{do}\ S'$, and the last step of the derivation is
\[\scalebox{1}{
 \begin{prooftree}
 \hypo{\q,\Gamma\vdash S' \triangleright \q,\Gamma}
  \infer1[
  ]{\q,\Gamma\vdash \tb{while }\q \tb{ do }S' \triangleright\q,\Gamma}
 \end{prooftree}}
\]
By induction hypothesis, there exists $\Sigma$ such that
\begin{align*}
 \q,\Gamma=\text{in}(S'),\Sigma = \text{out}(S'),\Sigma  \qquad \qquad \text{Var}(S')\cap \Sigma = \emptyset
\end{align*}
Then:
\begin{align*}
 \q,\Gamma &= \left\{
    \begin{array}{lcll}
        \text{in}(S),\Sigma  &\!\!=&\!\! \text{out}(S),\Sigma &\quad \mbox{if } \q\in\text{in}(S') \\
        \text{in}(S),\Sigma\setminus \{\q\} &\!\!=&\!\! \text{out}(S),\Sigma\setminus \{\q\} & \quad \mbox{otherwise.}
    \end{array}
\right.
\end{align*}
which completes the proof.
\end{proof}

Consequently, we have BV$(S)=\text{Var}(S)\setminus (\text{in}(S)\cup\text{out}(S))$, which is clearly independent of $\Gamma$.

\subsection{Probabilistic Structure of the Operational Semantics}
\label{app:proba}

The probability of a transition $[S,\ket{\psi}]_\Gamma \stackrel{\nu}{\to} \ket{\psi'}$ is defined as $\frac{\|\ket{\psi'}\|^2}{\|\ket{\psi}\|^2}$  if $\ket{\psi}\neq 0$, and $0$ otherwise. We must check that the transition probabilities are well defined, that is, the coefficients $\frac{\|\ket{\psi'}\|^2}{\|\ket{\psi}\|^2}$ for transitions originating from the same configuration sum to at most one (when $\ket{\psi}\neq 0$).

We first recall the definition of multiset. Multisets are a similar concept to sets, where elements may appear multiple times. This will be used to reflect the fact that multiple derivations from the same configuration may lead to the same state/coefficient pair, but their transition probabilities must still be counted separately as they correspond to distinct program executions. Formally, a multiset over a set $A$ is a pair $(A,M)$ where $M:A\mapsto \mathbb{N}$ is the function that maps each element to its multiplicity.
We will write multisets in brackets $\{\}$ similarly to sets, repeating each element according to its multiplicity. For example, the multiset $(\{a,b,c\},(a\mapsto 1,b\mapsto 2,c\mapsto 0))$ will be written as $\{a,b,b\}$. The support of $M$ is defined as the subset of elements of $A$ that appear in $M$ at least once:
\[
 \text{Supp}(M)\triangleq\{a\in A\ |\ M(A)> 0\}
\]

Standard operations on sets are adapted to multisets as follows. Given multisets $M$ and $N$ over $A$, we say that $a\in M$ if $a\in \text{Supp}(M)$, and we say that $M\subseteq N$ if for all $a\in A$, $M(a)\leq N(a)$. The intersection of $M$ and $N$ is defined by $(M\cap N)(a)=\min(M(a),N(a))$ for all $a\in A$. The union of $M$ and $N$ is defined by $(M\cup N) (a)=\max(M(a),N(a))$ for all $a\in A$. The disjoint union of $M$ and $N$ is defined by $(M\uplus N) (a)= M(a)+N(a)$ for all $a\in A$, where copies of the same element in both $M$ and $N$ counted separately.
Finally, we will sometimes write multisets as indexed families, for instance $\{a_i\}_{i\in I}$. This allows us to define the notions of summability and sums over multisets in the standard way for indexed families (see for instance~\cite{topology}).

The output multiset $\mathscr{M}([S,\ket{\psi}]_\Gamma)$ is defined as the multiset of all possible output states with their corresponding bit $\nu$, that can be reached from $[S,\ket{\psi}]_\Gamma$ (see Definition~\ref{def:output_multiset}). The following lemma ensures both that transition probabilities are well defined, and that all multiplicities in $\mathscr{M}([S,\ket{\psi}]_\Gamma)$ are finite (and therefore that $\mathscr{M}([S,\ket{\psi}]_\Gamma)$ is well defined).

\begin{restatable}{lemma}{lemprobdistr}
\label{lem:prob_distr}
If $\mathscr{M}([S,\ket{\psi}]_\Gamma)=\{(\ket{\psi'_i},\nu_i)\}_{i\in I}$, with $\ket{\psi}\neq 0$, then:
\begin{itemize}
 \item $\left(\frac{\|\ket{\psi'_i}\|^2}{\|\ket{\psi}\|^2}\right)_i$ is summable and $\sum_i \frac{\|\ket{\psi'_i}\|^2}{\|\ket{\psi}\|^2} \leq 1$. Therefore this famlity forms a probability subdistribution.
\item There is exactly one index $i$ such that $\nu_i=1$.
\end{itemize}
\end{restatable}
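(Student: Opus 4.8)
The plan is to prove both items of Lemma~\ref{lem:prob_distr} simultaneously by structural induction on $S$, indexing the transitions out of $[S,\ket{\psi}]_\Gamma$ by their (finite) derivation trees rather than by the raw values. This sidesteps the well-definedness worry: the derivation trees with a fixed root form a set (each tree is finite), and finiteness of the multiplicities of $\mathscr{E}$ will follow a posteriori from the two claims (summability bounds the multiplicity of any value with nonzero norm, and the ``exactly one default'' claim bounds the multiplicity of the only surviving zero-norm value $(0,1)$). It is convenient to first record a one-line sublemma, proved by structural induction: every transition $[S,0]_\Gamma\rightarrow(\ket{\phi},\nu)$ has $\ket{\phi}=0$, since each rule acts on the input state by a linear map or an orthogonal projection. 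I would then prove the two statements as separate claims, the ``default'' claim first, as it is used in the sum estimate.

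For the claim that exactly one derivation carries $\nu=1$, I would argue by structural induction, crucially stating it for \emph{all} inputs $\ket{\psi}$ (including $\ket{\psi}=0$). The axioms (SK), (N), (U) each admit a unique derivation and it has $\nu=1$; for $\tb{discard}$ only (D$_0$) is a default; for (M$_0$)/(M$_1$) and (W$_0$)/(W$_1$) one rule forces $\nu=0$, so for $\tb{meas}$ the defaults are exactly those lifting the (unique) default of the relevant branch, and for $\tb{while}$ the unique default is the single axiom (W$_0$), with no recursion needed here. For (S) the bit is $\nu\mu$, so a default needs both factors to be defaults: the induction hypothesis on $S_0$ gives a unique default output $\ket{\psi'_*}$, and the hypothesis applied to $S_1$ on input $\ket{\psi'_*}$ — which may be $0$, hence the need for full generality in the input — gives a unique default, so the combined default is unique. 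The (Q) case is identical with $\nu_0\nu_1$ in place of $\nu\mu$.

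For the sum estimate I would show $\sum_{(\ket{\psi'},\nu)\in\mathscr{E}([S,\ket{\psi}])}\|\ket{\psi'}\|^2\le\|\ket{\psi}\|^2$, which yields summability and the bound $\le 1$ after dividing by $\|\ket{\psi}\|^2>0$. The base cases and the $\tb{discard}$, $\tb{meas}$ cases reduce to the orthogonal decomposition $\|\ket{\psi}\|^2=\|\bra{0}_\q\ket{\psi}\|^2+\|\bra{1}_\q\ket{\psi}\|^2$ together with the hypothesis on each branch. For (S) and (Q) the ensemble is indexed by pairs of sub-derivations; I would use the sublemma to drop pairs with a vanishing intermediate or branch state (they contribute $0$) and then interchange summations freely, the terms being non-negative. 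The key point in (Q) is that the two output components carry orthogonal control states $\ket{0}_\q,\ket{1}_\q$, so the output norm squared is $\nu_1\|\ket{\phi_0}\|^2+\nu_0\|\ket{\phi_1}\|^2$ (using $\nu\in\{0,1\}$); summing over the pairs and factoring, the cross-sums $\sum\nu_1$ and $\sum\nu_0$ each equal $1$ by the default claim, leaving $\sum\|\ket{\phi_0}\|^2+\sum\|\ket{\phi_1}\|^2\le\|\bra{0}_\q\ket{\psi}\|^2+\|\bra{1}_\q\ket{\psi}\|^2=\|\ket{\psi}\|^2$.

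The main obstacle is the $\tb{while}$ case of the sum estimate, since its derivations refer to the loop itself and are unbounded in depth. Here I would stratify the transitions of $[\tb{while }\q\tb{ do }S,\ket{\psi}]$ by the number $k$ of loop unfoldings (every finite derivation exits after finitely many iterations), write $N_k(\ket{\psi})$ for the total output norm squared of stratum $k$, and obtain from (W$_1$) through the sequence rule the recurrence $N_0(\ket{\psi})=\|\bra{0}_\q\ket{\psi}\|^2$ and $N_{k}(\ket{\psi})=\sum_{(\ket{\phi},\nu)\in\mathscr{E}([S,\ket{1}\bra{1}_\q\ket{\psi}])}N_{k-1}(\ket{\phi})$. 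I would then bound the partial sums $P_n(\ket{\psi})=\sum_{k=0}^n N_k(\ket{\psi})$ by an inner induction on $n$: the step uses the recurrence, the inner hypothesis $P_{n}(\ket{\phi})\le\|\ket{\phi}\|^2$, and the outer (structural) hypothesis on $S$ applied at $\ket{1}\bra{1}_\q\ket{\psi}$, giving $P_{n+1}(\ket{\psi})\le\|\bra{0}_\q\ket{\psi}\|^2+\|\bra{1}_\q\ket{\psi}\|^2=\|\ket{\psi}\|^2$. Since all partial sums are bounded by $\|\ket{\psi}\|^2$ and the terms are non-negative, the full series converges with the same bound, completing the estimate; the default claim for $\tb{while}$ needs no such refinement, as (W$_0$) is the unique default.
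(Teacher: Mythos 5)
Your proof is correct and follows essentially the same route as the paper's: structural induction on $S$, with the $\tb{while}$ case handled by stratifying executions by the number of loop unfoldings (the paper's auxiliary $\tb{while}_n$ construct) and bounding the partial sums by an inner induction on that count, using the outer hypothesis on the loop body. The only substantive addition is that you spell out the $(S)$ and $(Q)$ cases the paper dismisses as clear, correctly observing that the sum estimate for $(Q)$ needs the default claim (so that the cross-sums $\sum_i\nu_0$ and $\sum_j\nu_1$ equal $1$) together with the orthogonality of the control states — a dependency worth making explicit.
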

\begin{proof}
We show by induction on the structure of $S$ that for all $\Gamma$ and $\ket{\psi}$, both results holds. This is clear in all cases except for the $\tb{while}$ statement.
Consider a statement $\tb{while}\ \q\ \tb{do}\ S$, and an environment $\Gamma$ such that the judgment $\q, \Gamma\vdash \tb{while}\ \q\ \tb{do}\ S\, \triangleright\q,\Gamma$ is derivable.
Given $\ket{\psi}\in St(\Hilb_{\q,\Gamma})$, derivations of transitions from $[\tb{while}\ \q \ \tb{do}\ S,\ket{\psi}]_{\q,\Gamma}$ can have arbitrary depth regardless of $S$, because Rule (W$_1$) can be applied arbitrarily many times. In order to bound the number of iterations of the $\tb{while}$ loop, we modify our language by adding a $\tb{while}$ statement indexed by a maximum number of iterations $n\geq 1$ (counting the step during which the loop is exited). New rules are added to the operational semantics:
\[\scalebox{0.9}{
 \begin{prooftree}
 \infer0[(W$_{n,0}$)]{[\tb{while}_n\ \q \tb{ do}\ S,\ket{\psi}]_{\q,\Gamma}\stackrel{1}{\to} \ket{0}\bra{0}_{\q}\ket{\psi}}
 \end{prooftree}\quad \text{\raisebox{9pt}{for all $n\geq 1$}}}
\]
\[\scalebox{0.9}{
 \begin{prooftree}
\hypo{[S;\tb{while}_{n-1} \ \q \ \tb{do}\ S, \ket{1}\bra{1}_{\q}\ket{\psi}]_{\q,\Gamma}\stackrel{\nu}{\to} \ket{\psi'}}
\infer1[(W$_{n,1}$)]{[\tb{while}_n\ \q\ \tb{do}\ S,\ket{\psi}]_{\q,\Gamma}\stackrel{0}{\to} \ket{\psi'}}
 \end{prooftree}\quad \text{\raisebox{4pt}{for all $n\geq 2$}}}
\]
If for some $n$, $[\tb{while}_n \ \q\ \tb{do}\ S,\ket{\psi}]_{\q,\Gamma}\stackrel{\nu}{\to}\ket{\psi'}$ then $[\tb{while} \ \q\ \tb{do}\ S,\ket{\psi}]_{\q,\Gamma}\stackrel{\nu}{\to}\ket{\psi'}$ (this can be shown by induction on $n$). And conversely, if $[\tb{while} \ \q\ \tb{do}\ S,\ket{\psi}]_{\q,\Gamma}\stackrel{\nu}{\to}\ket{\psi'}$ then there exists $n$ such that \\ $[\tb{while}_n \ \q\ \tb{do}\ S,\ket{\psi}]_{\q,\Gamma}\stackrel{\nu}{\to}\ket{\psi'}$. Moreover, there is a one-to-one correspondence between derivations of these transitions.
Therefore:
\begin{align*}
\mathscr{M}([\tb{while}\ \q \ \tb{do}\ S,\ket{\psi}]_{\q,\Gamma}) = \bigcup_{n\geq 1}\mathscr{M}([\tb{while}_n\ \q \ \tb{do}\ S,\ket{\psi}]_{\q,\Gamma})
\end{align*}
We prove by induction on $n\geq 1$ that for all $\ket{\psi}\in St(\Hilb_{\q,\Gamma})$, if $\mathscr{M}([\tb{while}_n\ \q \ \tb{do}\ S,\ket{\psi}]_{\q,\Gamma}) = \{(\ket{\psi_i'},\nu_i)\}_i$ then $\left(\frac{\|\ket{\psi'_i}\|^2}{\|\ket{\psi}\|^2}\right)_i$ is summable and $\sum_i \frac{\|\ket{\psi'_i}\|^2}{\|\ket{\psi}\|^2} \leq 1$.

For $n=1$ the result is clear.
Suppose that the result holds for a fixed $n\geq 1$. For a multiset $\{(\ket{\phi_i},\mu_i)\}_i$ and $\lambda\in\mathbb{C}$, define $\{(\ket{\phi_i},\mu_i)\}_i\cdot \lambda \triangleq \{(\ket{\phi_i},\lambda \mu_i)\}_i$, where all occurrences of the element $(0,0)$ are removed.
We have for all $\ket{\psi}$:
\begin{align*}
& \mathscr{M}([\tb{while}_{n+1}\ \q \ \tb{do}\ S,\ket{\psi}]_{\q,\Gamma}) \\
&\quad = \left\{\left(\ket{0}\bra{0}_{\q}\ket{\psi},1\right)\right\}\uplus \mathscr{M}\left([S;\tb{while}_{n}\ \q \ \tb{do}\ S,\ket{1}\bra{1}_\q\ket{\psi}]_{\q,\Gamma}\right) \cdot 0 \\
\end{align*}
Let $\mathscr{M}\left([S,\ket{1}\bra{1}_\q\ket{\psi}]_{\q,\Gamma}\right) = \{\ket{\varphi_j},\mu_j\}_j$. Then
\begin{align*}
 &\mathscr{M}\left([S;\tb{while}_{n}\ \q \ \tb{do}\ S,\ket{1}\bra{1}_\q \ket{\psi}]_{\q,\Gamma}\right)  \\
 &\quad = \biguplus_j  \mathscr{M}\left([\tb{while}_{n}\ \q \ \tb{do}\ S,\ket{\varphi_j}]_{\q,\Gamma}\right) \cdot \mu_j
\end{align*}
Therefore:
\begin{align*}
&\mathscr{M}([\tb{while}_{n+1}\ \q \ \tb{do}\ S,\ket{\psi}]_{\q,\Gamma})
\\
& \quad = \left\{\left(\ket{0}\bra{0}_{\q}\ket{\psi},1\right)\right\}\uplus \biguplus_j  \mathscr{M}\left([\tb{while}_{n}\ \q \ \tb{do}\ S,\ket{\varphi_j}]_{\q,\Gamma}\right)\cdot 0
\end{align*}
For all $j$, let $\mathscr{M}\left([\tb{while}_{n}\ \q \ \tb{do}\ S,\ket{\varphi_j}]_{\q,\Gamma}\right) = \{(\ket{\phi_j^l},\beta_{j}^l)\}_l$. Then we have the following inequality (in $\mathbb{R}_+\cup \{+\infty\}$):
\begin{align*}
 & \|\ket{0}\bra{0}_q\ket{\psi}\|^2 + \sum_j\sum_l \|\ket{\phi_j^l}\|^2 \\
 &\quad \leq \|\ket{0}\bra{0}_q\ket{\psi}\|^2 + \sum_j\|\ket{\varphi_j}\|^2
 & \substack{\text{by induction hypothesis}\\\text{(inner induction)}} \\
 &\quad \leq \|\ket{0}\bra{0}_q\ket{\psi}\|^2 + \|\ket{1}\bra{1}_q\ket{\psi}\|^2
 & \substack{\text{by induction hypothesis}\\\text{(outer induction)}} \\
 &\quad \leq \|\ket{\psi}\|^2.
\end{align*}
which proves the result for $n+1$.

Therefore, by writing $\mathscr{M}([\tb{while}\ \q \ \tb{do}\ S,\ket{\psi}]_{\q,\Gamma}) = \{(\ket{\psi_i'},\nu_i)\}_i$, we have $\sum_i \frac{\|\ket{\psi'_i}\|^2}{\|\ket{\psi}\|^2} \leq 1$. To see that the second point holds, the derivation of any transition from $[\tb{while}\ \q \ \tb{do}\ S,\ket{\psi}]_{\q,\Gamma}$ must end with either Rule (W$_0$) or Rule (W$_1$) rule. Therefore there is exactly one outcome with $\nu_i=1$, obtained by applying Rule (W$_0$).
\end{proof}

The definition of the probability of termination $p(S,\ket{\psi})_\Gamma$ follows from this result (Definition~\ref{def:proba_termination}).
The second point can be interpreted as meaning that if no input is given, a program always terminates. In other words, there always exists a default transition.

\subsection{Suprema of Sets of Quantum States and Operations}

\propdcpodensmat*

\begin{proof}
For $N,P\in D(\Hilb_A)$, notice that $N\leq P$ is and only if for all $\ket{x}\in \Hilb_A$, $\bra{x}N\ket{x}\leq \bra{x}P\ket{x}$. Moreover, $\bra{x}N\ket{x}\leq \| \ket{x}\|^2$, using diagonalization and the fact that $Tr(N)\leq 1$. Then, given a directed subset $\{M_k\}_k$ of $D(\Hilb_A)$, let $M$ be the matrix defined by
\begin{align}
 \label{eq:def_sup}
\forall\, \ket{x} \in \Hilb_A, \ \bra{x}M\ket{x}=\sup_k \bra{x}M_k\ket{x}.
\end{align}
The supremum is clearly well defined for each $\ket{x}$. We check that $M$ is well defined: suppose $M$ is a linear map satisfying the above expression. Then for all $\ket{x},\ket{y}\in \Hilb_A$, we have $\bra{x}M\ket{y} = B(\ket{x},\ket{y})$ where:
\begin{align*}
 &B(\ket{x},\ket{y})\\
 &\quad=\frac{1}{2}\left[f(\ket{x}+\ket{y})+if(i\ket{x}+\ket{y})-(f(\ket{x})+f(\ket{y}))(1+i)\right]
\end{align*}
with $f(\ket{z})\triangleq\sup_k \bra{z}M_k\ket{z}$.  Conversely, $B$ defines a Hermitian sesquilinear form, for which the corresponding matrix $M$ satisfies Equation~\ref{eq:def_sup}.
To see that $B$ is sesquilinear, use the fact that the suprema can be replaced by limits for a same sequence of elements of $\{M_k\}_{k}$, due to it being a directed set.
Therefore $M$ is well defined.

Thus, $M$ is the supremum of $\{M_k\}_k$ in Herm($\Hilb_A$).
$M$ is also an element of $D(\Hilb_A)$: $M$ is a positive semi-definite Hermitian matrix by definition, and one can check that the trace is a Scott-continuous function from Herm$(\Hilb_A)$ to $\mathbb{R}$ (equipped with the usual ordering), and therefore $Tr(M)=Tr(\bigvee_k M_k)=\bigvee_k Tr(M_k) \leq 1$.

Finally, $(D(\Hilb_A),\leq)$ has least element 0 and is therefore a pointed DCPO.
\end{proof}

\begin{remark}
 \label{rem:topological_limit}
 If $M$ is the supremum of a directed subset $\{M_k\}_k$ of $D(\Hilb_A)$, then it is the topological limit of a sequence of elements in that directed set, with respect to the standard Euclidian topology on $\mathbb{C}^{d_A\times d_A}$, where $d_A$ is the dimension of $\Hilb_A$.
\end{remark}

\propdcpoqoperations*
\begin{proof}
Let $\{\mathcal{C}_k\}_k$ be a directed subset of $\tb{QO}(\Hilb_A, \Hilb_B)$. Then for all $\rho\in D(\Hilb_A)$, $\{\mathcal{C}_k(\rho)\}_k$ is a directed subset of $(D(\Hilb_B),\leq)$. Consider the pointwise supremum $\mathcal{C}:\rho \in D(\Hilb_A) \mapsto \bigvee_k \mathcal{C}_k(\rho)$, which is well defined by Proposition~\ref{prop:dcpo_dens_mat}. We show that $\mathcal{C}$ is the supremum of $\{\mathcal{C}_k\}_k$ in $\tb{QO}(\Hilb_A,\Hilb_B)$. First, $\mathcal{C}$ is a linear map from Herm$(\Hilb_A)$ to Herm($\Hilb_{B}$). Indeed, for all states $\rho, \sigma\in D(\Hilb_A)$ and $\lambda\in \mathbb{C}$ such that $\rho+\lambda \sigma\in D(\Hilb_A)$, we have:
\begin{align*}
 \mathcal{C}(\rho+\lambda \sigma) & = \bigvee_k \mathcal{C}_k(\rho +\lambda \sigma ) = \bigvee_k \mathcal{C}_k(\rho)+\lambda \mathcal{C}_k(\sigma)
 \\
 & \leq  \bigvee_{k,k'} \mathcal{C}_k(\rho)+\lambda \mathcal{C}_{k'}(\sigma)
\end{align*}
where the last supremum is well defined because $\{\mathcal{C}_k(\rho)+\lambda \mathcal{C}_{k'}(\rho)\}_{k,k'}$ is a directed subset of $(D(\Hilb_{B}),\leq)$. Moreover since $\{\mathcal{C}_k\}_k$ is a directed subset, for each pair $k,k'$, there exists and index $l$ such that $\mathcal{C}_l\geq \mathcal{C}_k$ and $\mathcal{C}_l\geq\mathcal{C}_{k'}$, and therefore $\mathcal{C}_l(\rho)+\lambda \mathcal{C}_l(\sigma)\geq\mathcal{C}_k(\rho)+\lambda \mathcal{C}_{k'}(\sigma)$.
Consequently, the last inequality is an equality, and $\mathcal{C}(\rho+\lambda \sigma) = \mathcal{C}(\rho)+\lambda \mathcal{C}(\sigma)$, and therefore $\mathcal{C}$ is linear on $D(\Hilb_A)$. Since $D(\Hilb_A)$ spans Herm($\Hilb_A$), $\mathcal{C}$ can be extended to a linear map on Herm$(\Hilb_A)$.
$\mathcal{C}$ is trace non-increasing: for all $\rho\in D(\Hilb_A)$, we have $Tr(\mathcal{C}(\rho))=Tr\left(\bigvee_k\mathcal{C}_k(\rho)\right) = \bigvee_k Tr(\mathcal{C}_k(\rho)) \leq Tr(\rho)$ by Scott-continuity of the trace.
$\mathcal{C}$ is completely positive: for all auxiliary Hilbert space $\Hilb_E$ and $\rho\in D(\Hilb_A\otimes \Hilb_E)$, we have $(\mathcal{C}\otimes \mathcal{I}_E)(\rho) = \bigvee_k (\mathcal{C}_k\otimes \mathcal{I}_E)(\rho)$.
By complete positivity of $\mathcal{C}_k$, $(\mathcal{C}_k\otimes \mathcal{I}_E)(\rho)$ is positive for all $k$, and therefore $\mathcal{C}$ is completely positive. Therefore $\mathcal{C}\in\tb{QO}(\Hilb_A, \Hilb_{B})$. To see that $\mathcal{C}$ is the supremum of $\{\mathcal{C}_k\}_k$, for all Hilbert space $\Hilb_E$ and $\rho\in D(\Hilb_A\otimes \Hilb_E)$ we have $(\mathcal{C}\otimes \mathcal{I}_E)(\rho) = \bigvee_k(\mathcal{C}_k\otimes \mathcal{I}_E)(\rho) \geq (\mathcal{C}_k\otimes \mathcal{I}_E)(\rho)$, and therefore $\mathcal{C}\geq\mathcal{C}_k$ for all $k$.

Lastly, $\tb{QO}(\Hilb_A,\Hilb_{B})$ has least element $0$, and is therefore a pointed DCPO.
\end{proof}

\begin{lemma}
 \label{lem:topological_limit_channel}
 If $\mathcal{C}$ is the supremum of a directed subset $\{\mathcal{C}_k\}_k$ of $\tb{QO}(\Hilb_A, \Hilb_{B})$, then it is the limit of a sequence of elements in that directed subset, with respect to the Euclidian topology.
\end{lemma}

\begin{proof}
 By Proposition~\ref{prop:dcpo_q_operations}, for all quantum state $\rho$, we have $\mathcal{C}(\rho)=\bigvee_k \mathcal{C}_k (\rho)$. Let $\rho_1,\ldots,\rho_N$ be a finite set of states that spans $\mathcal{L}(\Hilb_A)$.
 By Remark~\ref{rem:topological_limit}, for all $i\in\llbracket 1,N\rrbracket$, there exists a sequence of indices $(k_i(n))_n$ such that $\mathcal{C}(\rho_i)$ is given by the limit $\mathcal{C}(\rho_i)=\lim_{n\rightarrow +\infty} \mathcal{C}_{k_i(n)} (\rho_i)$ in the standard Euclidian topology. For all $n$, choose $k(n)$ such that $\forall i\in\llbracket 1,N\rrbracket : \mathcal{C}_{k(n)}\geq\mathcal{C}_{k_{i}(n)}$ (such an index always exists because $\{\mathcal{C}_k\}_k$ is a directed set). Then for all $i$, $\mathcal{C}(\rho_i)=\lim_{n\rightarrow +\infty} \mathcal{C}_{k(n)} (\rho_i)$. In $\mathcal{L}(\mathcal{L}(\Hilb_A), \mathcal{L}(\Hilb_{B}))$, a finite-dimensional Hilbert space, all norms are equivalent and thus induce the same topology (the Euclidian topology). To prove the result, we choose to show that $\mathcal{C}$ is the limit of $(\mathcal{C}_{k(n)})_n$ for the operator norm $\vertiii{\cdot}$ induced by the Frobenius norm $\|\cdot\|_2$.
 For all $n$:
 \[
  \vertiii{\mathcal{C} - \mathcal{C}_{k(n)}} = \sup_{\substack{X\in \mathcal{L}(\Hilb_A): \\ \|X\|_2=1}} \|(\mathcal{C} - \mathcal{C}_{k(n)})(X)\|_2
 \]
For all $X\in \mathcal{L}(\Hilb_A)$, with $\lambda_1,\ldots,\lambda_N\in \mathbb{C}$ such that $X=\sum_i\lambda_i \rho_i$, we have $\|(\mathcal{C} - \mathcal{C}_{k(n)})(X)\|_2 \leq \sum_{i=1}^N |\lambda_i| \|(\mathcal{C} - \mathcal{C}_{k(n)})(\rho_i)\|_2 \xrightarrow[n\rightarrow +\infty]{} 0$. Let $X_1,\ldots,X_N$ be an orthonormal basis of $\mathcal{L}(\Hilb_A)$, with respect to the Frobenius inner product. Then for all $X\in\mathcal{L}(\Hilb_A)$ with $\|X\|_2=1$, $\|(\mathcal{C} - \mathcal{C}_{k(n)})(X)\|_2\leq \sqrt{N} \max_i \|(\mathcal{C}-\mathcal{C}_{k(n)})(X_i)\|_2$.
Therefore $\mathcal{C}_{k(n)}\xrightarrow[]{\vertiii{\cdot}} \mathcal{C}$.
\end{proof}

\section{Choice of Coefficients}
\label{app:coeffs}

In this section we explain our choice to use binary coefficients $\nu$ (i.e., the extra bit) in the operational semantics (Section~\ref{section:operational_semantics}, Figure~\ref{fig:semantics_operational}). In principle, the coefficients $\nu$ could be defined as any complex amplitudes with norm at most 1. In such a case, the most general formulation of the rules for the probabilistic statements $\tb{discard}$, $\tb{meas}$ and $\tb{while}$ would be:
\[
 \begin{array}{c}
  \scalebox{0.9}{
  \begin{prooftree}
 \infer0[(D$_{0}$)]{[\tb{discard }\q,\ket{\psi}]\stackrel{\alpha}{\to} \bra{0}_{\q} \ket{\psi}}
 \end{prooftree}}
 \\[0.6cm]
 \scalebox{0.9}{
 \begin{prooftree}
 \infer0[(D$_{1}$)]{[\tb{discard }\q,\ket{\psi}]\stackrel{\beta}{\to} \bra{1}_{\q} \ket{\psi}}
 \end{prooftree}}
 \\[0.6cm]
 \scalebox{0.9}{
 \begin{prooftree}
 \hypo{[S_0,\ket{0}\bra{0}_{\q}\ket{\psi}]\stackrel{\nu}{\to} \ket{\psi'}}
 \infer1[(M$_0$)]{[\tb{meas }\q\ (0\rightarrow S_0,1 \rightarrow S_1),\ket{\psi}]\stackrel{\nu\gamma}{\to} \ket{\psi'}}
 \end{prooftree}}
 \\[0.6cm]
 \scalebox{0.9}{
 \begin{prooftree}
 \hypo{[S_1,\ket{1}\bra{1}_{\q}\ket{\psi}]\stackrel{\nu}{\to} \ket{\psi'}}
 \infer1[(M$_1$)]{[\tb{meas }\q\ (0\rightarrow S_0,1 \rightarrow S_1),\ket{\psi}]\stackrel{\nu\delta}{\to} \ket{\psi'}}
 \end{prooftree}}
 \\[0.6cm] 
 \scalebox{0.9}{
 \begin{prooftree}
 \infer0[(W$_0$)]{[\tb{while }\q \tb{ do}\ S,\ket{\psi}]\stackrel{\epsilon}{\to} \ket{0}\bra{0}_{\q}\ket{\psi}}
 \end{prooftree}}
 \\[0.6cm]
 \scalebox{0.9}{
 \begin{prooftree}
\hypo{[S;\tb{while} \ \q \ \tb{do}\ S,\ket{1}\bra{1}_{\q}\ket{\psi}]\stackrel{\nu}{\to} \ket{\psi'}}
\infer1[(W$_1$)]{[\tb{while}\ \q\ \tb{do}\ S,\ket{\psi}]\stackrel{\nu\eta}{\to} \ket{\psi'}}
 \end{prooftree}}
 \end{array}
\]
with $\alpha,\beta,\gamma,\delta,\epsilon,\eta\in\mathbb{C}$. 

For our results to hold, these coefficients must satisfy a few extra conditions:
\begin{itemize}
 \item Lemma~\ref{lem:prob_distr} gives a property of transitions originating from the same configuration. Namely, assuming that $\mathscr{M}([S,\ket{\psi}])=\{(\ket{\psi'_i},\nu_i)\}_{i\in I}$, with $\ket{\psi}\neq 0$, the result states that ``there is exactly one index $i$ such that $\nu_i=1$''. With these more general complex coefficients, the statement would have to be reformulated as ``$(|\nu_i|^2)_i$ is summable and $\sum_i |\nu_i|^2 =1$''. In order for this new lemma to hold, the coefficients must satisfy:
 \[
  |\alpha|^2+|\beta|^2=|\gamma|^2+|\delta|^2=|\epsilon|^2+|\eta|^2=1,\qquad
  \varepsilon\neq 0.
\]
The last condition $\varepsilon\neq 0$ is related to the $\tb{while}$ rule: with $\varepsilon =0$, all transitions from a $\tb{while}$ statement would have a coefficient 0, which is forbidden.

\item In order for adequacy (Theorem~\ref{thm:adequacy}) to hold, the denotational semantics must also be modified to match the operational semantics. This requires the following changes. First, in the adequacy statement itself, $F\ket{\psi} = \sum_i \nu_i\ket{\psi'_i}$ must be rewritten as $F\ket{\psi} = \sum_i \bar{\nu}_i\ket{\psi'_i}$. Then we must redefine the following operations in the denotational semantics:
\[
 \begin{array}{l}
\sem{\tb{discard }\q}_{\q,\Gamma}\ \triangleq\ \bigl(Tr_{\q},(\bar{\alpha}\bra{0}_\q+\bar{\beta}\bra{1}_\q)\otimes I_{\Gamma}\bigr)\\
 \overline{\tb{meas}}_{\q}[(\mathcal{C},F),(\mathcal{D},G)] \\
 \quad \triangleq \left(\mathcal{C}\circ \mathcal{P}_{0}^{\q} + \mathcal{D}\circ \mathcal{P}_{1}^{\q}, \gamma F\ket{0}\bra{0}_\q + \delta G\ket{1}\bra{1}_\q\right)
 \end{array}
\]
And, lastly, it is necessary that $\gamma=\epsilon$ and $\delta=\eta$ because the semantics of $\tb{while}$ is based on the same $\overline{\tb{meas}}_\q$ operation as the measurement.
\end{itemize}
Once these conditions are satisfied, our results hold for any choice of coefficients. In particular, universality (Theorem~\ref{thm:universality}) still holds.

The rules in figure~\ref{fig:semantics_operational} were obtained by choosing $\alpha=\gamma=\epsilon=1$ and $\beta=\delta=\eta=0$. While there is no canonical choice of coefficients, binary coefficients do enable some simplifications. For instance, they create a symmetry between the initialization and discard statements:
\begin{align*}
 \sem{\tb{new qbit }\q}_{\Gamma}\ & \triangleq\ \bigl(\ket{0}\bra{0}_{\q}\otimes \mathcal{I}_{\Gamma},\ket{0}_\q\otimes I_{\Gamma}\bigr) \\
 \sem{\tb{discard }\q}_{\q,\Gamma}\ & \triangleq\ \bigl(Tr_{\q},\bra{0}_\q\otimes I_{\Gamma}\bigr)
\end{align*}
That is, the $\tb{new qbit}$ statement always initializes qubits in the state $\ket{0}$, which also appears in the transformation matrix for $\tb{discard}$. In particular, this ensures that the statement
\[
 \tb{new qbit}\ \q\ ;\ \tb{discard}\ \q
\]
has the same denotational semantics as $\tb{skip}$, which is convenient and would not necessarily be the case for other choices of coefficients.

\begin{remark}
Notice that one could also consider a generalization of the language where the choice of $\nu_i$ is left to the programmer, e.g., $\tb{meas }\q\ (0\rightarrow^{\nu_0}  S_0,1 \rightarrow^{\nu_1} S_1)$ for arbitrary $\nu_0, \nu_1\in \mathbb C$ as long as $|\nu_0|^2+|\nu_1|^2=1$. This would not increase the expressive power of the language. For this reason, we opt for the simpler syntax that leaves these parameters implicit.
\end{remark}

\section{The Denotational Semantics is Well Defined}
\label{app:denotational}

\textbf{Kraus decompositions}. The Kraus representation is a convenient way to describe quantum channels and operations. Kraus's theorem states that any quantum operation $\mathcal{C}\in\tb{QO}(\Hilb_A, \Hilb_B)$ can be written as $\mathcal{C}:\rho\mapsto \sum_i K_i \rho K_i^{\dag}$, where the $K_i$ are linear maps in $\mathcal{L}(\Hilb_A, \Hilb_B)$ satisfying $\sum_{i}K_i^{\dag}K_i\leq I_A$. Moreover $\sum_{i}K_i^{\dag}K_i = I_A$ if and only if $\mathcal{C}$ is a quantum channel.

First, to justify the definitions of composition in $\tb{CQO}$ and of the operations $\overline{\tb{meas}}_\q$ and $\overline{\tb{qcase}}_\q$, we check in each case that given a pair of coherent quantum operations, the result is also a coherent quantum operation. To this end, we prove the following lemma:
\begin{lemma}
\label{lem:expr_vac_ext}
Let $(\mathcal{C},F)\in\tb{CQO}_{\Gamma;\Gamma'}$, $(\mathcal{D},G)\in\tb{CQO}_{\Gamma',\Gamma''}$, and $(\mathcal{C}_0,F_0)$, $(\mathcal{C}_1,F_1)\in\tb{CQO}_{\q,\Gamma;\Gamma'}$.
\begin{itemize}
 \item If $(\mathcal{E},H) =(\mathcal{D},G)\circ (\mathcal{C},F)$, then $\tilde{\mathcal{E}}^H = \tilde{\mathcal{D}}^G \circ \tilde{\mathcal{C}}^F $.
 \item If $(\mathcal{E},H)=\overline{\tb{meas}}_{\q}[(\mathcal{C}_0,F_0),(\mathcal{C}_1,F_1)]$, then for all $\rho\in D(\Hilb_{\q,\Gamma} \oplus \text{\emph{Vac}})$,
 \[\tilde{\mathcal{E}}^H (\rho) = \tilde{\mathcal{C}_0}^{F_0}\circ \mathcal{T}_0^\q (\rho) +  \tilde{\mathcal{C}_1}^{F_1}\circ\mathcal{T}_1^\q(\rho),\]
where:
\begin{align*}
 \mathcal{T}_0^\q(\rho) &\triangleq \left(\ket{0}\bra{0}_\q \oplus \ket{\text{\emph{vac}}}\bra{\text{\emph{vac}}}\right)\rho \left(\ket{0}\bra{0}_\q \oplus \ket{\text{\emph{vac}}}\bra{\text{\emph{vac}}}\right)
 \\
 \mathcal{T}_1^\q(\rho) &\triangleq \ket{1}\bra{1}_\q\, \rho \, \ket{1}\bra{1}_\q.
\end{align*}
\end{itemize}
\end{lemma}
The proof is straightforward. Lemma~\ref{lem:expr_vac_ext} justifies that composition in $\tb{CQO}$ matches the usual notion of composition for vacuum-extended operations. Moreover, the $\overline{\tb{meas}}_\q$ operation amounts to performing the measurement defined by the Kraus operators \\
$\left\{\ket{0}\bra{0}_\q \oplus \ket{\text{vac}}\bra{\text{vac}}, \ket{1}\bra{1}_\q \right\}$ on $\Hilb_{\q,\Gamma}\oplus \text{Vac}$, and applying $\tilde{\mathcal{C}}_0^{F_0}$ or $\tilde{\mathcal{C}}_1^{F_1}$ depending on the outcome. The fact that composition and measurement operations preserve coherent quantum operations follows from this: in both cases the expression of $\tilde{\mathcal{E}}^H$ defines a vacuum-extended operation.

Then regarding the operation $\overline{\tb{qcase}}_\q$, we prove the following lemma:
\begin{lemma}
 Let $(\mathcal{C}_0,F_0),(\mathcal{C}_1,F_1)\in\tb{CQO}_{\Gamma;\Gamma'}$ and $\q\notin\Gamma\cup\Gamma'$. Then $\overline{\tb{qcase}}_\q[(\mathcal{C}_0,F_0),(\mathcal{C}_1,F_1)]\in \tb{CQO}_{(\q,\Gamma);(\q,\Gamma')}$.
\end{lemma}
\begin{proof}
Let $\{K_i \oplus \nu_i \ket{\text{vac}}\bra{\text{vac}}\}_i$ and $\{L_j \oplus \mu_j \ket{\text{vac}}\bra{\text{vac}}\}_j$ be Kraus decompositions of $\tilde{\mathcal{C}_0}^{F_0}$ and $\tilde{\mathcal{C}_1}^{F_1}$, respectively (Kraus decompositions of vacuum extensions necessarily have this form~\cite{kristjansson}).
We write $(\mathcal{D},G)= \overline{\tb{qcase}}_\q[(\mathcal{C}_0,F_0),(\mathcal{C}_1,F_1)]$.
Then, one can check that
\[
\bigl\{\bigl(\ket{0}\bra{0} \otimes \mu_jK_i + \ket{1}\bra{1} \otimes \nu_iL_j\bigr)\oplus\nu_i\mu_j\ket{\text{vac}}\bra{\text{vac}}\bigr\}_{i,j}
\]
is a Kraus decomposition of $\tilde{\mathcal{D}}^G$. It follows that $\tilde{\mathcal{D}}^G\in\tb{QO}_{\text{\bf{vac}}}(\mathcal{H}_{\q,\Gamma},\mathcal{H}_{\q,\Gamma'})$ and therefore that $(\mathcal{D},G)\in\tb{CQO}_{(\q,\Gamma);(\q,\Gamma')}$.
\end{proof}

The following results will be used to justify the existence of the fixpoint in the denotational semantics.
\propdcpovacext*
\begin{proof}
Let $\{(\mathcal{C}_k,F_k)\}_k$ be a directed subset of $\tb{CQO}(\mathcal{H}_A,\mathcal{H}_B)$.
Then the set of corresponding vacuum-extended operations $\{\tilde{\mathcal{C}}_k^{F_k}\}_k$ is a directed subset of $\tb{QO}(\Hilb_A \oplus \text{Vac}, \Hilb_B\oplus \text{Vac})$, equipped with the Löwner ordering.
By Proposition~\ref{prop:dcpo_q_operations}, it has a supremum $\mathcal{Z}\in \tb{QO}(\Hilb_A \oplus \text{Vac},\Hilb_B\oplus \text{Vac})$, defined as the pointwise supremum with respect to the Löwner ordering on density matrices. We must show that $\mathcal{Z}$ is a vacuum-extended operation. For all $\rho \in D(\Hilb_A \oplus \text{Vac})$, $\mathcal{Z}(P_A\rho P_A) = \bigvee_k \tilde{\mathcal{C}}_k^{F_k}(P_A\rho P_{A})$, with $P_{A}$ the orthogonal projector onto $\Hilb_{A}$. For all $k$, since $\tilde{\mathcal{C}}_k^{F_k}$ is a vacuum extension, we have $\tilde{\mathcal{C}}_k^{F_k}(P_{A}\rho P_{A})\in \mathcal{L}(\Hilb_B)$. By Remark~\ref{rem:topological_limit}, $\mathcal{Z}(P_{A}\rho P_{A})$ is the limit of a sequence of elements of this form in the standard Euclidian topology. Since $\mathcal{L}(\Hilb_B)$ is a closed subset of $\mathcal{L}(\Hilb_B\oplus \text{Vac})$, we have $\mathcal{Z}(P_{A}\rho P_{A})\in D(\Hilb_B)$. Therefore, the restriction of $\mathcal{Z}$ to inputs in $St(\Hilb_A)$ is a quantum operation in $\tb{QO}(\Hilb_A, \Hilb_B)$. Moreover, we have $\mathcal{Z}(\ket{\text{vac}}\bra{\text{vac}})=\bigvee_k \tilde{\mathcal{C}}_k^{F_k}(\ket{\text{vac}}\bra{\text{vac}})=\ket{\text{vac}}\bra{\text{vac}}$.
Therefore $\mathcal{Z}$ is a vacuum extension, and its representation as a coherent quantum operation  $(\mathcal{C}_{\mathcal{Z}},F)$ (uniquely defined, see Section~\ref{section:vacuum}) is the supremum in $\tb{CQO}(\mathcal{H}_A,\mathcal{H}_B)$ of $\{(\mathcal{C}_k,F_k)\}_k$. Finally, the least element of $\tb{CQO}(\mathcal{H}_A,\mathcal{H}_B)$ is $(0,0)$.
\end{proof}

\begin{lemma}
\label{lem:scott_cont}
The following functions are Scott-continuous:
\begin{itemize}
  \item Fix $(\mathcal{D},G)\in\tb{CQO}_{\Gamma;\Gamma}$. The function $(\mathcal{C},F)\mapsto (\mathcal{C},F)\circ (\mathcal{D},G)$ from $\tb{CQO}_{\Gamma;\Gamma}$ to itself is Scott-continuous.
  \item Fix $(\mathcal{D},G)\in\tb{CQO}_{\q,\Gamma;\q,\Gamma}$. The function $(\mathcal{C},F)\mapsto \overline{\tb{meas}}_{\q}[(\mathcal{D},G),(\mathcal{C},F)]$ from $\tb{CQO}_{\q,\Gamma;\q,\Gamma}$ to itself is Scott-continuous.
 \end{itemize}
\end{lemma}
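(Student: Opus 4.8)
The plan is to show that each of the two maps is monotone and preserves suprema of directed subsets, which together constitute Scott-continuity. The crucial reduction is to pass from $\mathcal{K}$ to the level of vacuum extensions: by Lemma~\ref{lem:dcpo_vac_ext} (together with Lemma~\ref{lem:dcpo_q_operations}), the order $\leq$ on $\mathcal{K}_{\Gamma;\Delta}$ is precisely the Löwner order transported along $(\mathcal{C},F)\mapsto\tilde{\mathcal{C}}^F$, and the supremum of a directed set $\{(\mathcal{C}_k,F_k)\}_k$ is the pair whose vacuum extension is the pointwise Löwner supremum $\bigvee_k\tilde{\mathcal{C}}_k^{F_k}$. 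Hence it suffices to show that the corresponding operations on vacuum extensions are monotone and commute with pointwise suprema. For this I will use Lemma~\ref{lem:expr_vac_ext}, which expresses both $\circ$ and $\overline{\tb{meas}}_\q$ directly in terms of the vacuum extensions $\tilde{\mathcal{C}}^F$.

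For the first item, Lemma~\ref{lem:expr_vac_ext} gives $\widetilde{(\mathcal{C},F)\circ(\mathcal{D},G)} = \tilde{\mathcal{C}}^F\circ\tilde{\mathcal{D}}^G$, so the map in question corresponds, at the level of vacuum extensions, to post-composition $\mathcal{A}\mapsto\mathcal{A}\circ\tilde{\mathcal{D}}^G$ with the fixed quantum operation $\tilde{\mathcal{D}}^G$. Monotonicity follows from the definition of the Löwner order on $\tb{QO}$: for any $\Hilb_E$ and any $\rho$, one has $((\mathcal{A}\circ\tilde{\mathcal{D}}^G)\otimes\mathcal{I}_E)(\rho) = (\mathcal{A}\otimes\mathcal{I}_E)\bigl((\tilde{\mathcal{D}}^G\otimes\mathcal{I}_E)(\rho)\bigr)$, and $(\tilde{\mathcal{D}}^G\otimes\mathcal{I}_E)(\rho)$ is again positive semi-definite since $\tilde{\mathcal{D}}^G$ is completely positive; applying the inequality $\mathcal{A}\leq\mathcal{A}'$ at this operator yields $\mathcal{A}\circ\tilde{\mathcal{D}}^G\leq\mathcal{A}'\circ\tilde{\mathcal{D}}^G$. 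Preservation of suprema is immediate from the pointwise characterization: since $\tilde{\mathcal{D}}^G(\rho)$ is a density operator for every input $\rho$, we have $\bigl(\bigvee_k\tilde{\mathcal{C}}_k^{F_k}\bigr)(\tilde{\mathcal{D}}^G(\rho)) = \bigvee_k\tilde{\mathcal{C}}_k^{F_k}(\tilde{\mathcal{D}}^G(\rho))$, which is exactly the pointwise supremum of $\{\tilde{\mathcal{C}}_k^{F_k}\circ\tilde{\mathcal{D}}^G\}_k$ evaluated at $\rho$.

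For the second item, Lemma~\ref{lem:expr_vac_ext} writes the vacuum extension of $\overline{\tb{meas}}_\q[(\mathcal{D},G),(\mathcal{C},F)]$ as $\rho\mapsto \tilde{\mathcal{D}}^G\circ\mathcal{T}_0^\q(\rho) + \tilde{\mathcal{C}}^F\circ\mathcal{T}_1^\q(\rho)$, where $\mathcal{T}_0^\q$ and $\mathcal{T}_1^\q$ are the fixed completely positive, trace non-increasing maps given by conjugation with the projectors $\ket{0}\bra{0}_\q\oplus\ket{\text{vac}}\bra{\text{vac}}$ and $\ket{1}\bra{1}_\q$. Thus, viewed as a function of $(\mathcal{C},F)$, this map is the sum of the constant operation $\tilde{\mathcal{D}}^G\circ\mathcal{T}_0^\q$ and the post-composition $\tilde{\mathcal{C}}^F\mapsto\tilde{\mathcal{C}}^F\circ\mathcal{T}_1^\q$. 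The latter is monotone and sup-preserving by exactly the argument of the first item (with $\mathcal{T}_1^\q$ in place of $\tilde{\mathcal{D}}^G$). Adding the fixed quantum operation $\tilde{\mathcal{D}}^G\circ\mathcal{T}_0^\q$ preserves both properties, since addition of a fixed positive operator is an order isomorphism onto its image and hence commutes with pointwise suprema of density matrices.

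The only point requiring genuine care — and the place where the main obstacle lies — is the passage between $\mathcal{K}$ and its vacuum extensions. One must check that the image sets $\{(\mathcal{C}_k,F_k)\circ(\mathcal{D},G)\}_k$ and $\{\overline{\tb{meas}}_\q[(\mathcal{D},G),(\mathcal{C}_k,F_k)]\}_k$ remain in $\mathcal{K}$, so that $\leq$ and suprema are meaningful there; this is exactly the fact, established earlier, that $\circ$ and $\overline{\tb{meas}}_\q$ preserve $\mathcal{K}$. One then checks that these image sets are again directed, which follows from monotonicity, and that the supremum computed in $\mathcal{K}$ coincides with the pointwise Löwner supremum of the vacuum extensions, which is the content of Lemma~\ref{lem:dcpo_vac_ext}. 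Once these identifications are in place, Scott-continuity reduces, as above, to the elementary fact that composing with, or adding, a fixed completely positive trace non-increasing map commutes with directed pointwise suprema of density operators.
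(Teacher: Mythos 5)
Your proof is correct, and its overall architecture coincides with the paper's: both pass to the level of vacuum extensions via Lemma~\ref{lem:expr_vac_ext}, both treat the measurement case as a fixed summand $\tilde{\mathcal{D}}^G\circ\mathcal{T}_0^\q$ plus a post-composition with $\mathcal{T}_1^\q$, and both rely on Lemmas~\ref{lem:dcpo_vac_ext} and~\ref{lem:dcpo_q_operations} to identify suprema in $\mathcal{K}$ with pointwise Löwner suprema of vacuum extensions. The one place where you genuinely diverge is the mechanism for supremum preservation: the paper realizes $\bigvee_k\tilde{\mathcal{C}}_k^{F_k}$ as a topological limit of a sequence drawn from the directed set (Remark~\ref{rem:topological_limit_channel}), uses norm-continuity of $-\circ\tilde{\mathcal{D}}^G$ to obtain $\tilde{\mathcal{E}}^H\leq\bigvee_k\bigl(\tilde{\mathcal{C}}_k^{F_k}\circ\tilde{\mathcal{D}}^G\bigr)$, and then recovers the reverse inequality from monotonicity; you instead evaluate both sides at the density operator $\tilde{\mathcal{D}}^G(\rho)$ and invoke the pointwise-supremum characterization of Lemma~\ref{lem:dcpo_q_operations} to get equality in a single step. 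Your route is purely order-theoretic, dispenses with Remark~\ref{rem:topological_limit_channel} altogether, and is arguably the more economical argument given that the pointwise description of directed suprema in $\tb{QO}$ has already been established; the paper's topological route only needs continuity of the operations for a norm, which is a marginally weaker hypothesis. You also correctly flag the bookkeeping the paper leaves implicit, namely that the image families stay in $\mathcal{K}$ and remain directed. Both arguments are sound.
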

\begin{proof}

To show that composition is Scott-continuous on the left, consider a directed subset $\{(\mathcal{C}_k,F_k)\}_k$ of $\tb{CQO}_{\Gamma;\Gamma}$. Let $(\mathcal{C},F)=\bigvee_k (\mathcal{C}_k,F_k)$ (well defined by Proposition~\ref{prop:cqoDCPO}) and $(\mathcal{E},H)= (\mathcal{C},F)\circ(\mathcal{D},G)$. Then by Lemma~\ref{lem:expr_vac_ext}, $\tilde{\mathcal{E}}^H = \tilde{\mathcal{C}}^F \circ \tilde{\mathcal{D}}^G$. By definition, $\tilde{\mathcal{C}}^F=\bigvee_k \tilde{\mathcal{C}}_k^{F_k}$. By Lemma~\ref{lem:topological_limit_channel}, $\tilde{\mathcal{C}}^F$ is the limit of some sequence in $\{\tilde{\mathcal{C}}_k^{F_k}\}_k$, in the Euclidian topology. Then by continuity of the composition function $-\circ \tilde{\mathcal{D}^G}$, $\tilde{\mathcal{E}}^H$ is the topological limit of some sequence in $\{\tilde{\mathcal{C}}_k^{F_k}\circ \tilde{\mathcal{D}}^G\}_k$. Consequently, $\tilde{\mathcal{E}}^H \leq \bigvee_k\left( \tilde{\mathcal{C}}_k^{F_k} \circ \tilde{\mathcal{D}}^G\right)$.
 To obtain the reverse inequality, we use the fact that $-\circ \tilde{\mathcal{D}^G}$ is an increasing function. Therefore, as they correspond to the same vacuum-extended operation, $\left(\bigvee_k (\mathcal{C}_k,F_k)\right)\circ (\mathcal{D},G)=\bigvee_k\left( (\mathcal{C}_k,F_k)\circ (\mathcal{D},G)\right)$.

To show that measurement is Scott-continuous on the right, consider a directed subset $\{(\mathcal{C}_k,F_k)\}_k$ of $\tb{CQO}_{\q,\Gamma;\q,\Gamma}$. Let $(\mathcal{C},F)=\bigvee_k (\mathcal{C}_k,F_k)$ and $(\mathcal{E},H)= \overline{\tb{meas}}_{\q}[(\mathcal{D},G),(\mathcal{C},F)]$. Then by Lemma~\ref{lem:expr_vac_ext}, $\tilde{\mathcal{E}}^H = \tilde{\mathcal{D}}^G\circ \mathcal{T}_0^\q +  \tilde{\mathcal{C}}^F\circ\mathcal{T}_1^\q$, with $\mathcal{T}_0^\q,\mathcal{T}_1^\q$ defined as in Lemma~\ref{lem:expr_vac_ext}. Similarly to the composition case, by Lemma~\ref{lem:topological_limit_channel}, $\tilde{\mathcal{C}}^F$ is the limit of some sequence in $\{\tilde{\mathcal{C}}_k^{F_k}\}_k$ in the Euclidian topology. Then, by continuity of $(\tilde{\mathcal{D}}^G\circ \mathcal{T}_0^\q) +  (- \circ\mathcal{T}_1^\q)$, $\tilde{\mathcal{E}}^H$ is the limit of some sequence in $\{\tilde{\mathcal{D}}^{G}\circ \mathcal{T}_0^\q +  \tilde{\mathcal{C}}_k^{F_k}\circ\mathcal{T}_1^\q\}_k$. Therefore, $\tilde{\mathcal{E}}^H \leq \bigvee_k \tilde{\mathcal{D}}^{F}\circ \mathcal{T}_0^\q +  \tilde{\mathcal{C}}_k^{F_k}\circ\mathcal{T}_1^\q$. Again, the reverse inequality is obtained using the fact that $(\tilde{\mathcal{D}}^G\circ \mathcal{T}_0^\q) +  (- \circ\mathcal{T}_1^\q)$ is an increasing function. Therefore, $\overline{\tb{meas}}_{\q}[(\mathcal{D},G),\bigvee_k(\mathcal{C}_k,F_k)]=\bigvee_k \overline{\tb{meas}}_{\q}[(\mathcal{D},G),(\mathcal{C}_k,F_k)]$.
\end{proof}

\begin{remark}
The above lemma on gives the results necessary to prove the following proposition, but one can also show Scott-continuity of composition on the right and measurement on the left.
\end{remark}

\propdenotationalsemantics*
\begin{proof}
We prove the result by structural induction. For each statement $S$, we show that for all pair of environments $\Gamma,\Delta$, if $\Gamma\vdash S\triangleright \Delta$ is derivable, then $\sem{S}_{\Gamma}$ is well defined and in $\tb{CQO}_{\Gamma;\Delta}$.
This is clear for the $\tb{new qbit}$, $\tb{discard}$, unitary, and $\tb{skip}$ statements. For the sequential composition, $\tb{meas}$, and $\tb{qcase}$ statements, we use the fact that composition, $\overline{\tb{meas}}_\q$, and $\overline{\tb{qcase}}_\q$ preserve coherent quantum operations.

For the $\tb{while}$ statement, suppose that $\q,\Gamma\vdash \tb{while}\ \q \tb{ do }S\triangleright \q,\Gamma$ is derivable, for some environment $\Gamma$. The final rule of the derivation is necessarily the while rule:
\[\scalebox{1}{
 \begin{prooftree}
 \hypo{\q,\Gamma\vdash S \triangleright \q,\Gamma}
  \infer1[
  ]{\q,\Gamma\vdash \tb{while }\q \tb{ do }S\triangleright \q,\Gamma}
 \end{prooftree}}
\]
By induction hypothesis, $\sem{S}_{\q,\Gamma}$ is well defined in $\tb{CQO}_{\q,\Gamma;\q,\Gamma}$. We show that the least fixed point of $\mathscr{F}_{\q}^{S}$ is well defined. By Proposition~\ref{prop:cqoDCPO}, $(\tb{CQO}_{\q,\Gamma;\q,\Gamma},\sqsubseteq)$ is a pointed DCPO. Moreover, $\mathscr{F}_{\q}^{S}$ can be shown to be a Scott-continuous function from $\tb{CQO}_{\q,\Gamma;\q,\Gamma}$ to itself. First, $\mathscr{F}_{\q}^{S}$ preserves $\tb{CQO}_{\q,\Gamma;\q,\Gamma}$ because the composition and $\overline{\tb{meas}}_\q$ operations preserce $\tb{QCO}_{\q,\Gamma;\q,\Gamma}$.
Consider a directed subset $\{(\mathcal{C}_k,F_k)\}_k$ of $\tb{CQO}_{\q,\Gamma;\q,\Gamma}$. By Lemma~\ref{lem:scott_cont}, we have:
\begin{align*}
\mathscr{F}_{\q}^{S}\left(\bigvee_k (\mathcal{C}_k,F_k)\right) &= \overline{\tb{meas}}_{\q}\bigl[(\mathcal{I}_{\q,\Gamma},I_{\q,\Gamma}), \left(\bigvee_k (\mathcal{C}_k,F_k)\right)\circ \sem{S}_{\q,\Gamma}\bigr] \\
 &=\overline{\tb{meas}}_{\q}\bigl[(\mathcal{I}_{\q,\Gamma},I_{\q,\Gamma}), \bigvee_k \bigl((\mathcal{C}_k,F_k)\circ \sem{S}_{\q,\Gamma}\bigr)\bigr] \\
 & \quad \text{ \small by Scott continuity of composition} \\
 &=\bigvee_k \overline{\tb{meas}}_{\q}\bigl[(\mathcal{I}_{\q,\Gamma},I_{\q,\Gamma}), (\mathcal{C}_k,F_k)\circ \sem{S}_{\q,\Gamma}\bigr] \\
 & \quad \text{ \small by Scott continuity of }{\small \overline{\tb{meas}}_\q} \\
 &= \bigvee_k \mathscr{F}_{\q}^{S}(\mathcal{C}_k,F_k)
\end{align*}
which shows that $\mathscr{F}_{\q}^{S}$ is Scott-continuous. Therefore its least fixed point $\text{lfp}(\mathscr{F}_{\q}^{S})$ is well defined, in $\tb{CQO}_{\q,\Gamma;\q,\Gamma}$, and equal to the supremum of the directed set $\{(\mathscr{F}_{\q}^{S})^n(0,0)\}_{n\in \mathbb{N}}$.
\end{proof}

\section{Universality}
\label{app:universality_all}

\subsection{Sub-Unitary operations}
\begin{definition}[Sub-unitary matrices and operations]
A matrix $U$ is said to be \emph{sub-unitary} if there exist matrices $U_1,U_2,U_3$ such that
\[
  \begin{bmatrix}
U & U_1\\
U_2 & U_3
\end{bmatrix}
\]
is unitary. A quantum operation $\mathcal{C}$ is said to be sub-unitary if there exists a sub-unitary matrix $U$ such that $\mathcal{C} = U(\cdot)U^{\dag}$.
\end{definition}

We show that every vacuum-extended operation can be obtained by composing a vacuum extension of a sub-unitary operation with a vacuum extension of a traceout. This result is analogous to~\cite[Theorem 6.12]{QPL}.
\begin{lemma}
 \label{lem:factorize_vac_extension}
 Let $\mathcal{D}\in \tb{QO}_{\tb{vac}}(\Hilb_A, \Hilb_B)$. Then $\mathcal{D}$ can be factorized as $\mathcal{D}=\widetilde{Tr_C}\circ \tilde{\mathcal{G}}$ for some system $C$, where:
 \begin{itemize}
  \item $\tilde{\mathcal{G}}\in\tb{QO}_\tb{vac}(\Hilb_A, \Hilb_C\otimes \Hilb_B)$ is a vacuum extension of a sub-unitary operation $\mathcal{G}\in\tb{QO}(\Hilb_A, \Hilb_C\otimes \Hilb_B)$, and
  \item $\widetilde{Tr_C}\in\tb{QO}_{\tb{vac}}(\Hilb_C\otimes \Hilb_B, \Hilb_B)$ is a vacuum extension of the traceout channel $Tr_C\in\tb{QC}(\Hilb_C\otimes \Hilb_B, \Hilb_B)$.
\end{itemize}
\end{lemma}
\begin{proof}
Let $\{K_i \oplus \nu_i \ket{\text{vac}}\bra{\text{vac}}\}_{1\leq i\leq n}$ be a finite Kraus decomposition of $\mathcal{D}$, with $K_i\in\mathcal{L}(\Hilb_A, \Hilb_B)$ for all $i$ (Kraus decompositions of vacuum extensions necessarily have this form~\cite{kristjansson}).
Define the matrix $U$ as follows:
\begin{align}
\label{eq:matrix_U}
 U\triangleq\begin{bmatrix}
     K_1 \\
     \vdots \\
     K_n
    \end{bmatrix} \in\mathcal{L}(\Hilb_A, \Hilb_C\otimes \Hilb_B)
\end{align}
with $C$ an $n$-dimensional system. \cite[Lemma 6.11]{QPL} states that a matrix $V\in \mathbb{C}^{q\times p}$ is sub-unitary if and only if $VV^\dag \leq I_q$, if and only if $V^\dag V\leq I_p$. We have $U^\dag U = \sum_{i=1}^n K_i^{\dag}K_i \leq I_{d_A}$ ($d_A\triangleq\dim \Hilb_A$)
because $\{K_i\}_{1\leq i\leq n}$ also forms a Kraus decomposition. Therefore $U$ is sub-unitary. Let $\mathcal{G}=U (\cdot) U^{\dag}\in \tb{QO}(\Hilb_A, \Hilb_C\otimes \Hilb_B)$, and define $\tilde{\mathcal{G}} = (U \oplus\ket{\text{vac}} \bra{\text{vac}})(\cdot)(U^{\dag} \oplus\ket{\text{vac}} \bra{\text{vac}})$, which is clearly a vacuum extension of $\mathcal{G}$. Then, define $\widetilde{Tr_C}$ as the quantum operation with Kraus operators $\{(\bra{e_i}\otimes I_{B}) \oplus \nu_i\ket{\text{vac}}\bra{\text{vac}}\}_{1\leq i\leq n}$, where $(\ket{e_i})_{1\leq i\leq n}$ is the orthonormal basis of $\Hilb_C$ with respect to which $U$ is written in Equation~\ref{eq:matrix_U}, and $I_B$ is the identity on $\Hilb_B$. Then $\widetilde{Tr_C}$ is a vacuum extension of $Tr_C\in\tb{QC}(\Hilb_C\otimes \Hilb_B,\Hilb_B)$. Since for all $i$, $(\bra{e_i}\otimes I_B)U = K_i$, we have $\mathcal{D}=\widetilde{Tr_C}\circ \tilde{\mathcal{G}}$.
\end{proof}

Lemma~\ref{lem:factorize_vac_extension} can now be adapted from vacuum extension to coherent quantum operations (Section~\ref{section:vacuum}).
Given a coherent quantum operation $(\mathcal{C},F)\in\tb{CQO}_{\Gamma;\Delta}$, there exist a system $\Delta'$, a sub-unitary matrix $U\in\mathcal{L}(\Hilb_\Gamma,\Hilb_{\Delta',\Delta})$, and a normalized state $\ket{\psi}\in St(\Hilb_{\Delta'})$ such that:
\begin{equation}
\label{eq:decomp_vac_ext}
 (\mathcal{C},F)=(Tr_{\Delta'}, \bra{\psi}\otimes I_{\Delta}) \circ (U(\cdot) U^{\dag},U)
\end{equation}
where $Tr_{\Delta'}$ is defined as the traceout of each qubit of $\Delta'$.
Such a decomposition exists because the Hilbert space $\Hilb_C$ of Lemma~\ref{lem:factorize_vac_extension} can always be chosen with its dimension a power of 2.

\subsection{Proof of Theorem~\ref{thm:universality}}
\label{app:universality}

To prove universality, we will need the following two lemmas, showing in turn that all unitary and sub-unitary operations can be realized as interpretations of programs:
\begin{lemma}
 \label{lem:unitary_prgm}
 Let $\Gamma$ be an environment and $U\in\mathcal{L}(\Hilb_\Gamma)$ be a unitary map. Then there exists a program $(\Gamma;S_U)$ such that $\Gamma^{S_U}=\Gamma$ and $\sem{S_U}_{\Gamma}=(U(\cdot)U^{\dag},U)$.
\end{lemma}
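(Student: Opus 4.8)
The goal is to realize an arbitrary unitary $U \in \mathcal{L}(\Hilb_\Gamma)$ as the denotation of some program, producing exactly the pair $(U(\cdot)U^\dag, U)$. The plan is to proceed by induction on the structure of a circuit decomposition of $U$, reducing the problem to the primitive gates the language already provides and the $\tb{qcase}$-based CNOT construction from Examples~\ref{ex:cnot_op} and~\ref{ex:cnot_den}.

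\textbf{Overall strategy.} First I would invoke the standard universality of CNOT together with single-qubit unitaries~\cite{barenco}: any unitary $U$ on the qubits of $\Gamma$ can be written as a finite product $U = G_m \cdots G_1$, where each $G_j$ is either a single-qubit gate applied to some variable $\q_j \in \Gamma$, or a CNOT between two variables of $\Gamma$. The plan is then to translate this product into a sequence of statements $S_U \triangleq S_1; S_2; \cdots; S_m$, where a single-qubit gate $U_j$ on $\q_j$ is realized by the built-in statement $\q_j *= U_j$, and a CNOT with control $\qc$ and target $\qt$ is realized by the statement $S_{\text{CNOT}}^{\qc,\qt} = \tb{qcase}\ \qc\ (0 \rightarrow \tb{skip}, 1 \rightarrow \qt *= X)$ from Example~\ref{ex:cnot_op}. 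Well-formedness of the whole sequence follows from the typing rules: each primitive and each $\tb{qcase}$-CNOT is derivable with input and output environment $\Gamma$, so the sequencing rule gives $\Gamma \vdash S_U \triangleright \Gamma$.

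\textbf{Computing the denotation.} The key computational step relies on the compositionality of the denotational semantics and on Example~\ref{ex:cnot_den}, which already establishes $\sem{S_{\text{CNOT}}^{\qc,\qt}}_{\qc,\qt,\Gamma'} = (\text{CNOT}_{\qc,\qt}(\cdot)\text{CNOT}_{\qc,\qt}^\dag, \text{CNOT}_{\qc,\qt})$, and on the semantics of $\q_j *= U_j$, namely $(U_{j,\q_j}(\cdot)U_{j,\q_j}^\dag, U_{j,\q_j})$. By the sequencing rule, $\sem{S_U}_\Gamma = \sem{S_m}_\Gamma \circ \cdots \circ \sem{S_1}_\Gamma$, and since composition in $\mathcal{K}$ acts as $(\mathcal{D},G)\circ(\mathcal{C},F) = (\mathcal{D}\circ\mathcal{C}, GF)$, I would show by induction on $m$ that this equals $(V(\cdot)V^\dag, V)$ where $V = G_m \cdots G_1$. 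The first component is just the usual functoriality of conjugation, $(G_m\cdots G_1)(\cdot)(G_m\cdots G_1)^\dag$, and the second component multiplies the transformation matrices in the same order. Since $V = U$, this yields $\sem{S_U}_\Gamma = (U(\cdot)U^\dag, U)$ as required.

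\textbf{Main obstacle.} The chief subtlety is bookkeeping the transformation matrix (the second component), not the quantum operation. For ordinary QPL-style languages only the channel $U(\cdot)U^\dag$ would matter, but here the claim is the stronger one that the transformation matrices compose to exactly $U$ (no spurious phase or projection). This is precisely why Example~\ref{ex:cnot_den} is essential: it confirms that the $\tb{qcase}$-CNOT produces $\text{CNOT}$ itself as its transformation matrix, with no vacuum-amplitude degradation, because both branches ($\tb{skip}$ and $\qt *= X$) are genuine unitaries whose transformation matrices are themselves unitary. I must therefore be careful that the inductive step only ever composes pairs whose transformation matrices are unitary, so that the product stays unitary and the $\lambda = 1$ condition of $\mathcal{K}$ is maintained throughout; this is guaranteed since every $S_j$ is a unitary gate or a $\tb{qcase}$ of two unitaries. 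A final minor point is verifying that the implicit reordering of subsystems (the $\preceq$-ordering conventions and $\text{PERM}_\pi$ maps) does not interfere with the identification $V_{\q_j} \cdots = U$; this is routine but should be acknowledged, as the $U_{j,\q_j}$ notation already absorbs the tensoring with identity on the remaining variables.
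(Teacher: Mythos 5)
Your proposal follows exactly the same route as the paper's proof: decompose $U$ into CNOT and single-qubit gates via~\cite{barenco}, realize each CNOT with the $\tb{qcase}$ construction of Examples~\ref{ex:cnot_op} and~\ref{ex:cnot_den}, use the built-in single-qubit primitive for the rest, and compose with the sequencing rule. Your extra care about the transformation-matrix component composing to exactly $U$ is a correct and welcome elaboration of what the paper leaves implicit.
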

\begin{proof}
The unitary matrix $U$ (acting on an arbitrary number of qubits) can be decomposed into a product of CNOT and single-qubit gates~\cite{barenco}. The CNOT gates can be implemented with the $\tb{qcase}$ statement as was done in Example~\ref{ex:cnot_den}, and single-qubit unitaries are built into the language. Therefore a suitable statement $S_U$ can be constructed.
\end{proof}

We extend this result to sub-unitary maps as follows.\footnote{This lemma is analogous to~\cite[Lemma 6.13 (b)]{QPL}.}
\begin{lemma}
 \label{lem:sub_unitary_prgm}
 Let $\Gamma$ and $\Delta$ be environments and $U\in\mathcal{L}(\Hilb_\Gamma,\Hilb_\Delta)$ be a sub-unitary matrix. Then there exists a program $(\Gamma;S_U)$ such that $\Gamma^{S_U}=\Delta$ and $\sem{S_U}_{\Gamma}=(U(\cdot)U^{\dag},U)$.
\end{lemma}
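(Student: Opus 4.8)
The plan is to reduce the sub-unitary case to the unitary case already handled by Lemma~\ref{lem:unitary_prgm}, using a unitary dilation of $U$ together with a post-selection mechanism that discards the ``unwanted'' branches. Concretely, since $U$ is sub-unitary we have $U^{\dag}U\leq I_\Gamma$, so by \cite[Lemma 6.11]{QPL} $U$ can be completed to a unitary matrix; choosing the ambient dimension to be a power of two, I would produce a unitary $V\in\mathcal{L}(\Hilb_{\qr_1,\ldots,\qr_a,\Gamma})$ on $a$ fresh ancilla qubits together with $\Gamma$, such that $U$ is exactly the block of $V$ obtained by feeding $\ket{0}$ into the input ancillas and projecting the $b$ output ancillas onto $\ket{0}$ (reordering qubits so that the remaining output register is $\Delta$). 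Formally, writing $\qr'_1,\ldots,\qr'_b$ for the output qubits not in $\Delta$, I would arrange that $(\bra{0}_{\qr'_1\cdots\qr'_b}\otimes I_\Delta)\,V\,(\ket{0}_{\qr_1\cdots\qr_a}\otimes I_\Gamma)=U$. The candidate program is
\[
S_U\ \triangleq\ \tb{new qbit }\qr_1;\ \ldots;\ \tb{new qbit }\qr_a;\ S_V;\ T_1;\ \ldots;\ T_b,
\]
where $S_V$ is the unitary program for $V$ from Lemma~\ref{lem:unitary_prgm}, and each gadget $T_j\triangleq \tb{meas }\qr'_j\ (0\rightarrow \tb{skip},1\rightarrow \text{LOOP});\ \tb{discard }\qr'_j$ measures an output ancilla, runs the non-terminating $\text{LOOP}$ of Example~\ref{ex:loop_op} on outcome $1$, and discards the qubit.

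The key computation is to check that $\sem{S_U}_\Gamma=(U(\cdot)U^{\dag},U)$. Using $\sem{\text{LOOP}}=(0,0)$ (Example~\ref{ex:loop_den}) and the definition of $\overline{\tb{meas}}_{\qr'_j}$, the measurement-with-LOOP step evaluates to $\overline{\tb{meas}}_{\qr'_j}[(\mathcal{I},I),(0,0)]=(\mathcal{P}_0^{\qr'_j},\ket{0}\bra{0}_{\qr'_j})$; note that only the default ($0$) branch contributes to the transformation matrix, which is precisely what kills the other outcome. Composing with $\sem{\tb{discard }\qr'_j}=(Tr_{\qr'_j},\bra{0}_{\qr'_j})$ gives $\sem{T_j}=(\bra{0}_{\qr'_j}(\cdot)\ket{0}_{\qr'_j},\bra{0}_{\qr'_j})$, a single Kraus operator $\bra{0}_{\qr'_j}$ on both components. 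Since $\sem{\tb{new qbit }\qr_i}$ contributes the Kraus operator $\ket{0}_{\qr_i}$ and $\sem{S_V}=(V(\cdot)V^{\dag},V)$, the composition rule $(\mathcal{D},G)\circ(\mathcal{C},F)=(\mathcal{D}\circ\mathcal{C},GF)$ multiplies these out to the single Kraus operator $(\bra{0}_{\qr'_1\cdots\qr'_b}\otimes I_\Delta)\,V\,(\ket{0}_{\qr_1\cdots\qr_a}\otimes I_\Gamma)=U$, for both the operation and the transformation matrix. Hence $\sem{S_U}_\Gamma=(U(\cdot)U^{\dag},U)$, as required.

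I expect the main obstacle to be the dimension and variable bookkeeping in the first step: one must choose $a$ large enough for the completion to exist (e.g.\ $2^{a+|\Gamma|}\geq 2^{|\Gamma|}+2^{|\Delta|}$, a standard contraction-dilation fact), ensure the total register has power-of-two dimension so that it is a genuine qubit system, and reorder the output qubits (via permutation gates absorbed into $V$) so that the non-discarded register is exactly $\Delta$ with $U$ appearing as the claimed block. The semantic computation itself is then routine, the only subtle point being the reliance on the fact that $\overline{\tb{meas}}_\q$ retains only the transformation matrix of its default branch, which is exactly why $\text{LOOP}$ on the non-default branch yields the desired trace-decreasing single-Kraus behaviour rather than a convex mixture.
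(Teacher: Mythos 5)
Your proposal is correct and follows essentially the same route as the paper: complete $U$ to a unitary, initialize fresh ancillas in $\ket{0}$, apply the unitary program from Lemma~\ref{lem:unitary_prgm}, postselect the surplus output qubits onto $\ket{0}$ via $\tb{meas}$ with LOOP on the non-default branch, then discard them — the paper uses a single multi-qubit measurement where you use per-qubit gadgets $T_j$, but the crux is the same computation $\overline{\tb{meas}}_{\q}[(\mathcal{I},I),(0,0)]=(\mathcal{P}_0^{\q},\ket{0}\bra{0}_{\q})$. The one point you gloss over is that the surviving wires must carry exactly the variable names of $\Delta$ (environments are sets of names, not positions), which the paper resolves with an explicit $\tb{rename}$ statement at the end rather than a permutation absorbed into the unitary.
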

Unlike unitary operations, sub-unitary operations may be trace decreasing. This feature is captured by combining measurement and the non-terminating LOOP program from Example~\ref{ex:loop_op}.
\begin{proof}
First, we extend the $\tb{renaming}$ statement from Example~\ref{ex:renaming} to lists of variable names $\p_1,\ldots,\p_m$ and $\q_1,\ldots,\q_m$ as follows, where variable names are allowed to appear in both lists:
\begin{align*}
 \tb{rename }\p_1,\ldots,\p_m \rightarrow \q_1,\ldots,\q_m \ \triangleq\ & \tb{rename }\p_1 \rightarrow \qr_1 ; \\
 & \vdots \\
 & \tb{rename }\p_m \rightarrow \qr_m ; \\
 & \tb{rename }\qr_1 \rightarrow \q_1 ; \\
 & \vdots \\
 & \tb{rename }\qr_m \rightarrow \q_m \\
\end{align*}
with $\qr_1,\ldots,\qr_m$ fresh variables appearing neither in $\p_1,\ldots,\p_m$ nor in $\q_1,\ldots,\q_m$.
Then for all environment $\Sigma$ containing none of the variables $\p_1,\ldots,\p_m,\q_1,\ldots,\q_m,\qr_1,\ldots,\qr_m$, the judgment $\p_1,\ldots,\p_m,\Sigma \vdash \tb{rename} \ \p_1,\ldots,\p_m \rightarrow \q_1,\ldots,\q_m\triangleright \q_1,\ldots,\q_m,\Sigma$ is derivable.

Let $\Gamma = \p_1,\ldots,\p_k$ and $\Delta = \q_1,\ldots,\q_l$. There exist variables $\p_{k+1},\ldots,\p_n$ ($n\geq k,l$) and matrices $U_1,U_2,U_3$ such that
\[
 U'= \begin{bmatrix}
U & U_1\\
U_2 & U_3
\end{bmatrix} \in\mathcal{L}(\Hilb_{\bar{\Gamma}})
\]
is unitary, where $\bar{\Gamma}\triangleq\p_1,\ldots,\p_n$ is an extension of $\Gamma$.\footnote{Without loss of generality, $U'$ can be chosen in $\mathbb{C}^{2^n \times 2^n}$ for some integer $n$.} Then there exists a statement $S_{U'}$ that consists in applying the unitary map $U'$ to variables $\p_1,\ldots,\p_n$, as specified by Lemma~\ref{lem:unitary_prgm}

We define the statement $S_U$, with notations generalizing the $\tb{new qbit}$, $\tb{meas}$, and $\tb{discard}$ statements to multiple variables, as follows:
\begin{align*}
 S_U\ \triangleq\ &\tb{new qbit }\p_{k+1},\ldots,\p_n ; \\
 &S_{U'}; \\
 &\tb{meas}\ \p_{l+1},\ldots,\p_n \ (\\
 &\quad 0^{n-l} \rightarrow \tb{skip}, \\
 &\quad \tb{else} \rightarrow \text{LOOP} \\
 &); \\
 & \tb{discard }\p_{l+1},\ldots,\p_n; \\
 & \tb{rename}\ \p_1,\ldots,\p_l \rightarrow \q_1,\ldots,\q_l
\end{align*}
where LOOP is the non-terminating statement from Example~\ref{ex:loop_op}. Then, the judgment $\Gamma\vdash S_U\triangleright \Delta$ is derivable, and
\begin{align*}
 \sem{S_U}_{\Gamma} =&\  \sem{\tb{rename}\ \p_1,\ldots,\p_l \rightarrow \q_1,\ldots,\q_l}_{\p_1,\ldots,\p_l} \\
 & \circ \sem{\tb{discard }\p_{l+1},\ldots,\p_n}_{\bar{\Gamma}} \\
 & \circ \sem{ \tb{meas}\ \p_{l+1},\ldots,\p_n \ ( 0^{n-l} \rightarrow \tb{skip}, \tb{else} \rightarrow \text{LOOP})}_{\bar{\Gamma}} \\
 & \circ \sem{S_{U'}}_{\bar{\Gamma}} \circ \sem{\tb{new qbit }\p_{k+1},\ldots,\p_n}_{\Gamma}
\end{align*}
We have
\begin{align*}
  &\sem{ \tb{meas}\ \p_{l+1},\ldots,\p_n \ ( 0^{n-l} \rightarrow \tb{skip}, \tb{else} \rightarrow \text{LOOP})}_{\bar{\Gamma}} \\
 &\quad =\ \overline{\tb{meas}}_{\p_{l+1},\ldots,\p_n} \bigl[\sem{\tb{skip}}_{\bar{\Gamma}},\sem{\text{LOOP}}_{\bar{\Gamma}}\bigr] \\
 &\quad=\ \overline{\tb{meas}}_{\p_{l+1},\ldots,\p_n} \bigl[(\mathcal{I}_{\bar{\Gamma}},I_{\bar{\Gamma}}), (0,0)\bigr] \\
 &\quad =\ \Bigl((\ket{0}\bra{0})_{\p_{l+1},\ldots,\p_n}^{\otimes (n-l)} (\cdot)(\ket{0}\bra{0})_{\p_{l+1},\ldots,\p_n}^{\otimes (n-l)} , \\
 &\qquad \quad I_{\p_1,\ldots,\p_l} \otimes (\ket{0}\bra{0})_{\p_{l+1},\ldots,\p_n}^{\otimes (n-l)}\Bigr) 
\end{align*}
where $\overline{\tb{meas}}_{\p_{l+1},\ldots,\p_n}$ generalizes $\overline{\tb{meas}}$ to several variables, the first input corresponding to all qubits in the $\ket{0}$ state. Therefore:
\begin{align*}
 &\sem{S_U}_{\Gamma}\\
 & =(\mathcal{I}_{\p_1,\ldots,\p_l \rightarrow \q_1,\ldots,\q_l},I_{\p_1,\ldots,\p_l \rightarrow \q_1,\ldots,\q_l}) \\
 &\quad \circ \left(Tr_{\p_{l+1},\ldots,\p_n},I_{\p_1,\ldots,\p_l}\otimes \bra{0}_{\p_{l+1},\ldots,\p_n}^{\otimes(n-l)}\right)\\
 &\quad \circ \Bigl((\ket{0}\bra{0})_{\p_{l+1},\ldots,\p_n}^{\otimes (n-l)} (\cdot)(\ket{0}\bra{0})_{\p_{l+1},\ldots,\p_n}^{\otimes (n-l)} , \\
 & \quad \qquad I_{\p_1,\ldots,\p_l} \otimes (\ket{0}\bra{0})_{\p_{l+1},\ldots,\p_n}^{\otimes (n-l)}\Bigr) \\
 & \quad \circ \left(U'(\cdot)U'^{\dag},U'\right) \circ \left(\mathcal{I}_{\Gamma}\otimes (\ket{0}\bra{0})^{\otimes (n-k)}_{\p_{k+1},\ldots,\p_n},I_{\Gamma}\otimes \ket{0}^{\otimes (n-k)}_{\p_{k+1},\ldots,\p_n}\right)\\
 &=\ (\mathcal{I}_{\p_1,\ldots,\p_l \rightarrow \q_1,\ldots,\q_l},I_{\p_1,\ldots,\p_l \rightarrow \q_1,\ldots,\q_l}) \\
 &\quad \circ
 \Bigl(\bra{0}_{\p_{l+1},\ldots,\p_n}^{\otimes (n-l)}U'\ket{0}^{\otimes (n-k)}_{\p_{k+1},\ldots,\p_n} (\cdot) \bra{0}_{\p_{k+1},\ldots,\p_n}^{\otimes (n-k)}U'^{\dag}\ket{0}^{\otimes (n-k)}_{\p_{l+1},\ldots,\p_n}, \\
 & \qquad \bra{0}_{\p_{l+1},\ldots,\p_n}^{\otimes (n-l)}U'\ket{0}^{\otimes (n-k)}_{\p_{k+1},\ldots,\p_n} \Bigr) \\
 &=\ (U(\cdot)U^{\dag},U) \qedhere
\end{align*}
\end{proof}

We can now prove Theorem~\ref{thm:universality}:
\propuniversality*
\begin{proof}
 By Lemma~\ref{lem:factorize_vac_extension}, there exist an environment $\Delta'$, a sub-unitary matrix $U\in\mathcal{L}(\Hilb_\Gamma,\Hilb_{\Delta',\Delta})$ and a normalized state $\ket{\psi}\in St(\Hilb_{\Delta'})$ such that $ (\mathcal{C},F)=(Tr_{\Delta'}, \bra{\psi}\otimes I_{\Delta}) \circ (U(\cdot) U^{\dag},U)$. By Lemma~\ref{lem:sub_unitary_prgm}, there exists a program $(\Gamma;S_U)$ such that $\Gamma^{S_U}=\Delta',\Delta$ and $\sem{S_U}_{\Gamma}= (U(\cdot) U^{\dag},U)$. Let $\Delta' = \qr_1,\ldots,\qr_s$, and let $V\in \mathcal{L}(\Hilb_{\Delta'})$ denote a unitary matrix such that $V \ket{0}^{\otimes s} = \ket{\psi}$. By Lemma~\ref{lem:unitary_prgm}, there exists a program $((\Delta',\Delta);S_{V^{\dag}})$ such that $(\Delta',\Delta)^{S_{V^\dag}}= (\Delta',\Delta)$ and $\sem{S_{V^{\dag}}}_{\Gamma}=(V^{\dag}(\cdot)V\otimes \mathcal{I}_{\Delta},V^{\dag}\otimes I_{\Delta})$.
Then define
\begin{align*}
 S\ \triangleq\ S_U ;\  S_{V^{\dag}}
 ;\ \tb{discard }\qr_1, \ldots, \qr_s
\end{align*}
The judgment $\Gamma\vdash S\triangleright \Delta$ is derivable, and we have:
\begin{align*}
 \sem{S}_{\Gamma} &= \sem{\tb{discard}\ \qr_1,\ldots,\qr_s}_{\Delta',\Delta} \circ \sem{S_{V^{\dag}}}_{\Delta',\Delta}
 \circ \sem{S_U}_{\Gamma} \\
 &= (Tr_{\Delta'},\bra{0}^{\otimes s}_{\Delta'}\otimes I_{\Delta}) \circ (V^{\dag}(\cdot)V\otimes \mathcal{I}_{\Delta},V^{\dag}\otimes I_{\Delta})
 \circ (U (\cdot)U^{\dag},U) \\
 &= (Tr_{\Delta'},\bra{\psi}_{\Delta'}\otimes I_{\Delta}) \circ (U (\cdot)U^{\dag},U)\\
 &=(\mathcal{C},F)\qedhere
\end{align*}
\end{proof}

\subsection{Proof of Proposition~\ref{prop:universality}}

\propuniverality*
To prove the above proposition, we first show the existence of programs implementing all $Z$-rotations using global phase gates and $\tb{qcase}$.
\begin{lemma}
\label{lem:z-rotation}
 Let $\theta\in[0,2\pi[$ and $Z_\theta$ be the corresponding $Z$-rotation defined on computational basis states as $Z_\theta:\ket{x}\mapsto e^{i\theta x}\ket{x}$. Let $\q,\Gamma$ be an environment. Then there exists a program $(\q,\Gamma;S_\theta)$ such that $(\q,\Gamma)^{S_\theta}=\q,\Gamma$ and $\sem{S_{\theta}}_{\q,\Gamma}=((Z_\theta)_\q(\cdot) (Z_\theta)_\q^\dag, (Z_\theta)_\q)$, where $S_\theta$ uses only global phases.
\end{lemma}
\begin{proof}
Let $\qr\notin \q,\Gamma$. Define
$S_\theta \triangleq \tb{new qbit}\ \qr ; \tb{qcase}\ \q\ (0\rightarrow \tb{skip},1\rightarrow \qr{*}{=}P_\theta) ; \tb{discard}\ \qr$. We have:
\begin{align*}
 \sem{S_\theta}_{\q,\Gamma} = &\sem{\tb{discard}\ \qr}_{\qr,\q,\Gamma} \circ \sem{\tb{qcase}\ \q\ (0\rightarrow \tb{skip},1\rightarrow \qr{*}{=}P_\theta)}_{\qr,\q,\Gamma} \\
 &\circ \sem{\tb{new qbit}\ \qr}_{\q,\Gamma}\\
 =&(Tr_\qr,\bra{0}_\qr \otimes I_{\q,\Gamma}) \circ \overline{\tb{qcase}}_\q [\sem{\tb{skip}}_{\qr,\Gamma},\sem{\qr{*}{=}P_\theta}_{\qr,\Gamma}] \\
 &\circ (\ket{0}\bra{0}_\qr\otimes\mathcal{I}_{\q,\Gamma},\ket{0}_\qr\otimes I_{\q,\Gamma})
 \\
 =&(Tr_\qr,\bra{0}_\qr \otimes I_{\q,\Gamma})
 \\
 &\circ \overline{\tb{qcase}}_\q [(\mathcal{I}_{\qr,\Gamma},I_{\qr,\Gamma}),((P_{\theta})_\qr(\cdot)(P_{\theta})_\qr^\dag,(P_{\theta})_\qr)] \\
 &\circ (\ket{0}\bra{0}_\qr\otimes\mathcal{I}_{\q,\Gamma},\ket{0}_\qr\otimes I_{\q,\Gamma})
 \\
 =&\bigl(\mathcal{P}_0^\q \otimes \mathcal{I}_\Gamma + \mathcal{P}_1^\q\otimes\mathcal{I}_\Gamma
 + e^{-i\theta} (\ket{0}\bra{0}_\q\otimes I_\Gamma)(\cdot)(\ket{1}\bra{1}_\q\otimes I_\Gamma)
 \\
 &+ e^{i\theta} (\ket{1}\bra{1}_\q\otimes I_\Gamma)(\cdot)(\ket{0}\bra{0}_\q\otimes I_\Gamma), \\
 &\ket{0}\bra{0}_\q \otimes I_\Gamma + e^{i\theta} \ket{1}\bra{1}\otimes I_\Gamma \bigr) \\
 =& ((Z_\theta)_\q(\cdot) (Z_\theta)_\q^\dag, (Z_\theta)_\q) \qedhere
\end{align*}
\end{proof}

It follows that all unitary gates can be implemented using only $H$ and global phases:
\begin{lemma}
 Let $\Gamma$ be an environment and $U\in\mathcal{L}(\Hilb_\Gamma)$ be a unitary map. Then there exists a program $(\Gamma;S_U)$ such that $\Gamma^{S_U}=\Gamma$ and $\sem{S_U}_{\Gamma}=(U(\cdot)U^{\dag},U)$, where $S$ uses only $H$ and global phases.
\end{lemma}
\begin{proof}
The unitary matrix $U$ can be decomposed into a sequence of CNOT and single-qubit gates~\cite{barenco}. CNOT gates can be implemented using Example~\ref{ex:cnot_den}, replacing $X$ by $HZH$ (with $Z=Z_\pi$). Moreover, every single-qubit unitary $V$ can be written as
\[
 V=P_{\theta_0}Z_{\theta_1}HZ_{\theta_2}HZ_{\theta_3}\]
for some $\theta_0,\theta_1,\theta_2,\theta_3$ (see for instance~\cite{nielsen}). Using Lemma~\ref{lem:z-rotation} to implement $Z$-rotations using $\tb{qcase}$ and global phases, we can construct a suitable statement $S_U$ using only $H$ and global phases.
\end{proof}
Using this result, a proof of Proposition~\ref{prop:universality} can be obtained by proceeding as in the proof of Theorem~\ref{thm:universality}.

\subsection{Proof of Proposition~\ref{prop:approx_universality}}
\label{app:universality_approx}

The distance function $d:\mathcal{L}(\mathcal{H}_A,\mathcal{H}_A)\times \mathcal{L}(\mathcal{H}_A,\mathcal{H}_A) \rightarrow \mathbb{R}$ for linear maps is defined as follows
\cite{nielsen}:
\[
 d(U,V)\triangleq\max_{
 \ket{\psi}\in \Hilb_A:   \|\ket{\psi}\!\|= 1}
 \|(U-V)\ket{\psi}\|
\]
with $\|\cdot\|$ the Euclidian norm. In order to define approximations of coherent quantum operations, we similarly define a distance function for $\tb{CQO}(\mathcal{H}_A,\mathcal{H}_B)$. The diamond norm~\cite{watrous} is defined as follows for all Hermitian-preserving linear map $\mathcal{C}\in\mathcal{L}(\mathcal{L}(\Hilb_A),\mathcal{L}(\Hilb_B))$:
\[
 \|\mathcal{C}\|_{\diamond} \triangleq \max_{\ket{\psi}\in \Hilb_{A}\otimes \Hilb_A: \|\ket{\psi}\!\|=1} \|(\mathcal{I}_A\otimes\mathcal{C})(\ket{\psi}\bra{\psi})\|_1
\]
where $\|\cdot\|$ is the Euclidian norm and $\|\cdot\|_1$ the trace norm, defined by $\|X\|_{1}\triangleq Tr\Bigl(\sqrt{X^{\dag}X}\Bigr)$.
Then we define the distance $d:\tb{CQO}(\mathcal{H}_A,\mathcal{H}_B)\times \tb{CQO}(\mathcal{H}_A,\mathcal{H}_B) \rightarrow \mathbb{R}$:
\[
 d\bigl((\mathcal{C},F),(\mathcal{D},G)\bigr)\triangleq\bigl\|\tilde{\mathcal{C}}^F - \tilde{\mathcal{D}}^G\bigr\|_{\diamond}.
\]

We will need the two following useful formulas: for all pair of sub-unitary gates $U_1,U_2\in \mathcal{L}(\Hilb_\Gamma)$,
\begin{align}
\label{eq:approx_1}
 \|U_1(\cdot)U_1^{\dag}-U_2(\cdot)U_2^{\dag}\|_{\diamond}\leq 2d(U_1,U_2)
 \end{align}

 \vspace{-6mm}
 \begin{align}
 \label{eq:approx_2}
 d(U_1\oplus \ket{\text{vac}}\bra{\text{vac}},U_2\oplus \ket{\text{vac}}\bra{\text{vac}})=d(U_1,U_2)
\end{align}

To show Inequality~\ref{eq:approx_1}, we have:
\begin{align*}
 &\|U_1(\cdot)U_1^{\dag}-U_2(\cdot)U_2^{\dag}\|_{\diamond}
 \\
 &\quad = \max_{\ket{\psi}} \|(I\otimes U_1)\ket{\psi}\bra{\psi}(I\otimes U^{\dag}_1) -(I\otimes U_2)\ket{\psi}\bra{\psi}(I\otimes U^{\dag}_2) \|_1 \\
 &\quad \leq \max_{\ket{\psi}} \|(I\otimes U_1)\ket{\psi}\bra{\psi}(I\otimes (U^{\dag}_1-U_2^{\dag}))\|_1 \\
 &\qquad \qquad +  \|(I\otimes (U_1-U_2))\ket{\psi}\bra{\psi}(I\otimes U_2^{\dag})\|_1 \\
 &\quad \leq 2 d(I\otimes U_1,I\otimes U_2)
\end{align*}
where in the third line we used the formula $\|\ket{\psi}\bra{\varphi}\|_1=\|\ket{\psi}\|\|\ket{\varphi}\|$ and the fact that $I\otimes U_1$ and $I \otimes U_2$ are norm non-increasing.
We have $d(I\otimes U_1,I\otimes U_2)=\max_{\ket{\psi}}\|(I\otimes (U_1-U_2))\ket{\psi}\|$. For all $\ket{\psi}$, by the Schmidt decomposition there exist orthonormal bases $(\ket{e_i})_i$ and $(\ket{\epsilon_i})_i$ of $\Hilb_{\Gamma}$ such that $\ket{\psi}=\sum_i \lambda_i\ket{e_i}\ket{\epsilon_i}$, where the $\lambda_i$ are non-negative real numbers such that $\sum_i\lambda_i^2=1$. Therefore:
\begin{align*}
 \|(I\otimes (U_1-U_2))\ket{\psi}\|^2 \ &=\ \bra{\psi}(I\otimes (U_1^{\dag}-U_2^{\dag}))(I\otimes (U_1-U_2))\ket{\psi}\\
 &=\ \sum_i \lambda_i^2 \bra{\epsilon_i}(U_1^{\dag}-U_2^{\dag})(U_1-U_2)\ket{\epsilon_i} \\
  &=\ \sum_i \lambda_i^2\|(U_1-U_2)\ket{\epsilon_i}\|^2 \\
 &\leq\ d(U_1,U_2)^2
\end{align*}
Therefore $d(I\otimes U_1,I\otimes U_2)\leq d(U_1,U_2)$.\footnote{The reverse inequality is straightforward to prove.} From this, we obtain Inequality~\ref{eq:approx_1}. To show Equation~\ref{eq:approx_2}, we use the fact that $(U_1\oplus \ket{\text{vac}}\bra{\text{vac}})-(U_2\oplus \ket{\text{vac}}\bra{\text{vac}})=U_1-U_2$.

We summarize the proof of Proposition~\ref{prop:approx_universality} as follows. To prove the existence of $HT$-programs approximating all coherent quantum operations $(\mathcal{C},F)$, we first prove the existence of suitable approximations for unitary and sub-unitary operations. To approximate unitary operations, we invoke the universality of the set of CNOT, $H$ and $T$ gates~\cite{boykin}.
The approximation of sub-unitary operations works the same as the general case (Lemma~\ref{lem:sub_unitary_prgm}). To approximate general coherent quantum operations, we use the decomposition into a sub-unitary operation followed by a traceout (Equation~\ref{eq:decomp_vac_ext}), and show that it is sufficient to compose approximations of each component.

To begin, we show that coherent quantum operations for unitary channels can be approximated with arbitrary precision:
\begin{lemma}
 \label{lem:approx_unitary_prgm}
Let $\Gamma$ be an environment and $U\in\mathcal{L}(\Hilb_\Gamma)$ a unitary map. Then for all $\varepsilon>0$, there exists a $HT$-program $(\Gamma;S)$ such that $\Gamma^S=\Gamma$ and $d\bigl((U(\cdot)U^{\dag},U),\sem{S}_{\Gamma}\bigr)\leq \varepsilon$.
\end{lemma}
\begin{proof}
Fix $\varepsilon\geq 0$. There exists a unitary gate $V\in \mathcal{L}(\Hilb_\Gamma)$ consisting of a sequence of CNOT, $H$, and $T$ gates such that $d(U,V)\leq \varepsilon$ \cite{boykin}. Then only using unitaries $H$ and $T$, we can construct a statement $S$ implementing $V$, that is $\sem{S}_{\Gamma}=(V(\cdot)V^{\dag},V)$. We have:
\[
 \widetilde{(U(\cdot)U^{\dag})}^{U}=\bigl(U\oplus \ket{\text{vac}}\bra{\text{vac}}\bigr)(\cdot) \bigl(U^{\dag}\oplus \ket{\text{vac}}\bra{\text{vac}}\bigr),
\]
and similarly for $V$. Combining this with Equations~\ref{eq:approx_1} and~\ref{eq:approx_2}, we obtain:
\[
 d\bigl((U(\cdot)U^{\dag},U),\sem{S}_{\Gamma}\bigr)\leq 2d(U,V)\leq 2\varepsilon \qedhere
\] 
\end{proof}

\begin{lemma}
 \label{lem:approx_sub_unitary_prgm}
 Let $\Gamma$ and $\Delta$ be environments and $U\in\mathcal{L}(\Hilb_\Gamma,\Hilb_\Delta)$ be a sub-unitary matrix. Then for all $\varepsilon\geq 0$, there exists an $HT$-program $(\Gamma;S)$ such that $\Gamma^S=\Delta$ and $d\bigl((U(\cdot)U^{\dag},U),\sem{S}_{\Gamma}\bigr)\leq \varepsilon$.
\end{lemma}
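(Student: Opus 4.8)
The plan is to reuse verbatim the construction of Lemma~\ref{lem:sub_unitary_prgm}, replacing only the \emph{exact} implementation $S_{U'}$ of the unitary dilation by an \emph{approximate} one supplied by Lemma~\ref{lem:approx_unitary_prgm}, and then to control how the resulting error propagates through the fixed pre- and post-processing stages. Concretely, as in Lemma~\ref{lem:sub_unitary_prgm}, I would first dilate $U\in\mathcal{L}(\Hilb_\Gamma,\Hilb_\Delta)$ to a unitary $U'\in\mathcal{L}(\Hilb_{\bar\Gamma})$ on an enlarged environment $\bar\Gamma$. The computation in the exact proof shows that $\sem{S_U}_\Gamma=(U(\cdot)U^{\dag},U)$ decomposes as a $\circ$-composite $\mathcal{A}\circ (U'(\cdot)U'^{\dag},U')\circ \mathcal{B}$, where $\mathcal{B}=\sem{\tb{new qbit }\p_{k+1},\ldots,\p_n}_\Gamma$ and $\mathcal{A}$ is the $\circ$-composition of the interpretations of the measurement, discard and renaming blocks. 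Crucially, both $\mathcal{A}$ and $\mathcal{B}$ are fixed elements of $\mathcal{K}$ that do not depend on the chosen implementation of $U'$.

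Given $\varepsilon\geq 0$, Lemma~\ref{lem:approx_unitary_prgm} provides a program $(\bar\Gamma;S_{V'};\bar\Gamma)$ built only from $H$, $T$ and CNOT such that $E\bigl((U'(\cdot)U'^{\dag},U'),\sem{S_{V'}}_{\bar\Gamma}\bigr)\leq \varepsilon$. I would then define $S$ exactly as $S_U$ but with the block $S_{U'}$ replaced by $S_{V'}$, so that compositionality of the denotational semantics gives $\sem{S}_\Gamma=\mathcal{A}\circ \sem{S_{V'}}_{\bar\Gamma}\circ \mathcal{B}$. Thus $\sem{S}_\Gamma$ and the target $(U(\cdot)U^{\dag},U)$ are the very same composite, differing only in their middle factor.

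It then remains to bound $E\bigl((U(\cdot)U^{\dag},U),\sem{S}_\Gamma\bigr)=\bigl\|\widetilde{(U(\cdot)U^{\dag})}^{U}-\widetilde{\sem{S}_\Gamma}\bigr\|_{\diamond}$. By Lemma~\ref{lem:expr_vac_ext}, the vacuum extension of a $\circ$-composite in $\mathcal{K}$ is the ordinary composition of the corresponding vacuum extensions, so using linearity of pre- and post-composition this difference equals $\tilde{\mathcal{A}}\circ\bigl(\widetilde{(U'(\cdot)U'^{\dag})}^{U'}-\widetilde{\sem{S_{V'}}}\bigr)\circ\tilde{\mathcal{B}}$. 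Since $\mathcal{A},\mathcal{B}\in\mathcal{K}$, their vacuum extensions $\tilde{\mathcal{A}},\tilde{\mathcal{B}}$ are genuine quantum operations, hence diamond-norm contractions (i.e. $\|\cdot\|_{\diamond}\leq 1$); combining this with submultiplicativity of the diamond norm under composition~\cite{watrous} yields $E\bigl((U(\cdot)U^{\dag},U),\sem{S}_\Gamma\bigr)\leq \|\tilde{\mathcal{A}}\|_{\diamond}\cdot E\bigl((U'(\cdot)U'^{\dag},U'),\sem{S_{V'}}_{\bar\Gamma}\bigr)\cdot\|\tilde{\mathcal{B}}\|_{\diamond}\leq\varepsilon$.

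The main obstacle is precisely this last propagation step. I must verify that the difference $\widetilde{(U'(\cdot)U'^{\dag})}^{U'}-\widetilde{\sem{S_{V'}}}$ is Hermitian-preserving (so that the paper's diamond norm applies to it), that the diamond norm is submultiplicative under composition and non-increasing under pre- and post-composition with quantum operations, and that $\mathcal{A}$ and $\mathcal{B}$ genuinely lie in $\mathcal{K}$ — the latter following from the fact, established in Section~\ref{section:denotational_semantics_domain}, that $\circ$ and $\overline{\tb{meas}}_{\q}$ preserve $\mathcal{K}$. These are standard properties of the completely bounded trace norm, but they are exactly what guarantees that the single-gate approximation error from Lemma~\ref{lem:approx_unitary_prgm} passes \emph{unamplified} through the measurement, discard and renaming stages, so that choosing the approximation of $U'$ within $\varepsilon$ suffices.
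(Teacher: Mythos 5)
Your proof is correct, but it is not the route the paper takes. The paper does not propagate a diamond-norm error through the composite: it takes the circuit $V'$ (over $H$, $T$, CNOT) approximating the unitary dilation $U'$, extracts its top-left block $V$, observes that $V$ is itself sub-unitary, and applies Lemma~\ref{lem:sub_unitary_prgm} to obtain a program whose denotation is \emph{exactly} $(V(\cdot)V^{\dag},V)$. The error is then controlled purely at the matrix level, via $E(U,V)=\max_{\ket{\psi}}\|(U-V)\ket{\psi}\|\leq E(U',V')$ (pad the input with zeros) followed by Equations~\ref{eq:approx_1} and~\ref{eq:approx_2}, yielding a bound of $2E(U',V')$. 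You instead unfold the construction of Lemma~\ref{lem:sub_unitary_prgm} into a composite $\mathcal{A}\circ(\cdot)\circ\mathcal{B}$, swap the middle factor for the approximate program from Lemma~\ref{lem:approx_unitary_prgm}, and control the error using Lemma~\ref{lem:expr_vac_ext} together with submultiplicativity of the diamond norm and the fact that quantum operations are diamond-norm contractions --- exactly the toolkit the paper defers to the proof of Proposition~\ref{prop:approx_universality}. Both arguments are sound. Yours is more modular: it would survive a non-unitary middle factor, and it delivers the bound $\varepsilon$ directly rather than $2\varepsilon$ (the paper's version, like Lemma~\ref{lem:approx_unitary_prgm}, silently needs $\varepsilon$ rescaled to $\varepsilon/2$ to match the statement). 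The paper's version is more elementary and uses Lemma~\ref{lem:sub_unitary_prgm} as a black box applied to $V$, rather than reopening its proof to isolate $\mathcal{A}$ and $\mathcal{B}$.
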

\begin{proof}
Fix $\varepsilon\geq 0$. There exist matrices $U_1,U_2,U_3$ such that
\[
 U'= \begin{bmatrix}
U & U_1\\
U_2 & U_3
\end{bmatrix} \in\mathcal{L}(\Hilb_{\bar{\Gamma}})
\]
is unitary, where $\bar{\Gamma}$ is an extension of $\Gamma$. There exists a unitary gate $V'$ consisting of a sequence of CNOT, $H$, and $T$ gates such that $d(U',V')\leq \varepsilon$. We write:
\[
 V'= \begin{bmatrix}
V & V_1\\
V_2 & V_3
\end{bmatrix} \in\mathcal{L}(\Hilb_{\bar{\Gamma}})
\]
such that $U$ and $V$ have the same dimensions. By definition, $V$ is sub-unitary, and using the construction from the proof of Lemma~\ref{lem:sub_unitary_prgm}, there exists a statement $S$ such that $\sem{S}_{\Gamma}=(V(\cdot)V^{\dag},V)$. We have:
\begin{align*}
 d(U,V)&= \max_{\ket{\psi}} \|(U-V)\ket{\psi}\| \\
 & \leq \max_{\ket{\psi}} \left\|(U'-V')\begin{pmatrix}
                                        \ket{\psi} \\
                                        0
                                       \end{pmatrix}
\right\| \leq d(U',V') \leq \varepsilon
\end{align*}
Combining this with Equations~\ref{eq:approx_1} and~\ref{eq:approx_2}, we obtain:
\[
 d\bigl((U(\cdot)U^{\dag},U),\sem{S}_{\Gamma}\bigr)\leq 2d(U,V)\leq 2\varepsilon \qedhere
\]
\end{proof}

\propuniversalityapprox*
\begin{proof}
As shown in the proof of Theorem~\ref{thm:universality}, there exists an environment $\Delta'$, a sub-unitary map $U\in\mathcal{L}(\Hilb_\Gamma, \Hilb_{\Delta',\Delta})$, and a unitary map $V\in\mathcal{L}(\Hilb_{\Delta'})$ such that the statement
\[
 S\ \triangleq\ S_U ;\  S_{V^{\dag}}
 ;\ \tb{discard }\qr_1, \ldots, \qr_s
\]
satisfies $\sem{S}_{\Gamma}=(\mathcal{C},F)$. Here $S_U$ and $S_{V^\dag}$ can contain any single-qubit unitary gates. In order to approximate $(\mathcal{C},F)$ using only $H$ and $T$, it is sufficient to find statements whose interpretations approximate those of $S_U$ and $S_{V^\dag}$. Indeed, we show the following inequality for all quantum operations $\mathcal{C}_1,\mathcal{C}_2,\mathcal{D}_1,\mathcal{D}_2$ of compatible dimensions. (Here, $\mathcal{C}_1$ and $\mathcal{C}_2$ can be though of as approximations of $\mathcal{D}_2$ and $\mathcal{D}_2$, respectively.)
\begin{align*}
 \|\mathcal{C}_2 \circ \mathcal{C}_1 - \mathcal{D}_2 \circ\mathcal{D}_1 \|_{\diamond} &\leq \|\mathcal{C}_2 \circ (\mathcal{C}_1-\mathcal{D}_1)\|_{\diamond} + \|(\mathcal{C}_2-\mathcal{D}_2)\circ\mathcal{D}_1\|_{\diamond} \\
 &\leq \|\mathcal{C}_2 \|_{\diamond} \| \mathcal{C}_1-\mathcal{D}_1\|_{\diamond} + \|\mathcal{C}_2-\mathcal{D}_2\|_{\diamond}\|\mathcal{D}_1\|_{\diamond} \\
 &\leq \| \mathcal{C}_1-\mathcal{D}_1\|_{\diamond} + \|\mathcal{C}_2-\mathcal{D}_2\|_{\diamond}
\end{align*}
where we used the fact that the diamond norm is sub-multiplicative for sequential composition~\cite{aharonov_kitaev_nisan,watrous},
and that the diamond norm of quantum operations is at most 1.

Lemmas~\ref{lem:approx_unitary_prgm} and~\ref{lem:approx_sub_unitary_prgm} ensure the existence of suitable approximations for $S_U$ and $S_{V^{\dag}}$.
\end{proof}

\section{Adequacy: Proof of Theorem~\ref{thm:adequacy}}
\label{app:adequacy}

\thmadequacy*
This result is shown by structural induction on $S$. For the $\tb{while}$ statement, we will use the fact that the denotational semantics is given by the supremum of iterated applications of $\mathscr{F}_{\q}^{S}$ to $(0,0)$, and proceed by induction on the number of iterations of the $\tb{while}$ loop.
\begin{proof}
First, following from Lemma~\ref{lem:prob_distr} and absolute summability~\cite{topology}, the sums on the right hand sides of each equation are always well defined.

We show by induction on the structure of $S$ that for all environment $\Gamma$ such that some judgment $\Gamma\vdash S\triangleright \Delta$ is derivable, and for all subnormalized state $\ket{\psi}\in St(\Hilb_\Gamma)$, both equalities hold. The result is clear when $S$ is one of $\tb{new qbit}$, $\tb{discard}$, unitary application, or $\tb{skip}$ statements.
We treat the remaining cases in more detail.

Suppose that the statement has the form $S_0;S_1$. Let $\mathscr{M}([S_0,\ket{\psi}]_\Gamma)=\{\ket{\psi_i'},\nu_i\}_i$, and for all $i$, $\mathscr{M}([S_1,\ket{\psi_i'}]_{\Gamma^{S_0}})=\{\ket{\psi_{ij}''},\mu_{ij}\}_{j}$.
 Since the derivation of any transition from $[S_0;S_1,(\ket{\psi},\nu)]_\Gamma$ must end with Rule (S), we have $\mathscr{M}([S_0;S_1,\ket{\psi}]_\Gamma) = \{(\ket{\psi_{ij}''},\nu_i\mu_{ij})\}_{ij}$.
On the other hand, we have $\sem{S_0;S_1}_{\Gamma} = \sem{S_1}_{\Gamma^{S_0}}\circ \sem{S_0}_{\Gamma}$. Writing $\sem{S_0}_{\Gamma}=(\mathcal{C},F)$ and $\sem{S_1}_{\Gamma^{S_0}} = (\mathcal{D},G)$, we have $\sem{S_0;S_1}  = (\mathcal{D}\circ\mathcal{C},GF)$.  Therefore, by induction hypothesis on $S_0$ and $S_1$:
 \begin{align*}
   \mathcal{D}\circ\mathcal{C}(\ket{\psi}\bra{\psi}) &= \mathcal{D}\left(\sum_i\ket{\psi_i'}\bra{\psi_i'}\right)
  = \sum_{i,j} \ket{\psi_{ij}''}\bra{\psi_{ij}''} \\
   GF\ket{\psi} & = G\sum_i \nu_i\ket{\psi_i'} = \sum_{i,j}\nu_i\mu_j\ket{\psi_{ij}''}
 \end{align*}

Next, suppose that the statement has the form $\tb{meas}\ \q\ (0 \rightarrow S_0,1\rightarrow S_1)$. Let $\mathscr{M}([S_0, \ket{0}\bra{0}_{\q}\ket{\psi}]_{\q,\Gamma}) = \{(\ket{\varphi_i},\nu_i)\}_i$ and \\ $\mathscr{M}([S_1, \ket{1}\bra{1}_{\q}\ket{\psi}]_{\q,\Gamma}) = \{(\ket{\phi_j},\mu_j)\}_j$. Since the derivation of a transition from $[\tb{meas}\ \q\ (0 \rightarrow S_0,1\rightarrow S_1),\ket{\psi}]_{\q,\Gamma}$ must end with either Rule (M$_0$) or Rule (M$_1$), we have
\[
\mathscr{M}( [\tb{meas}\ \q\ (0 \rightarrow S_0,1\rightarrow S_1),\ket{\psi}]_{\q,\Gamma}) = \left\{\left(\ket{\varphi_i},\nu_i\right)\right\}_i \uplus \left\{\left(\ket{\phi_j},0\right)\right\}_j
\]
On the other hand, $\sem{\tb{meas }\q \ (0\rightarrow S_0,1\rightarrow S_1)}_{\q,\Gamma}=\overline{\tb{meas}}_{\q}[\sem{S_0}_{\q,\Gamma},\sem{S_1}_{\q,\Gamma}]$. With $\sem{S_0}_{\q,\Gamma}=(\mathcal{C}_0,F_0)$ and $\sem{S_1}_{\q,\Gamma}=(\mathcal{C}_1,F_1)$, we have:
\[\sem{\tb{meas }\q \ (0\rightarrow S_0,1\rightarrow S_1)}_{\q,\Gamma} = \left(\mathcal{C}_0\circ \mathcal{P}_{0}^{\q} + \mathcal{C}_1\circ \mathcal{P}_{1}^{\q},F_0\ket{0}\bra{0}_\q\right).
\]
By induction hypothesis on $S_0$ and $S_1$:
\begin{align*}
 \bigl(\mathcal{C}_0\circ \mathcal{P}_{0}^{\q} + \mathcal{C}_1\circ \mathcal{P}_{1}^{\q}\bigr) (\ket{\psi}\bra{\psi}) &= \mathcal{C}_0(\ket{0}\bra{0}_\q (\ket{\psi}\bra{\psi})\ket{0}\bra{0}_\q) \\
 &\quad+\mathcal{C}_1(\ket{1}\bra{1}_\q (\ket{\psi}\bra{\psi})\ket{1}\bra{1}_\q) \\
 &= \sum_i \ket{\varphi_i}\bra{\varphi_i}+\sum_j\ket{\phi_j}\bra{\phi_j} \\
  F_0\ket{0}\bra{0}_\q \ket{\psi} &  = \sum_i \nu_i\ket{\varphi_i}
\end{align*}

Next, suppose that the statement has the form $\tb{qcase}\ \q\ (0 \rightarrow S_0,1\rightarrow S_1)$. Let $\mathscr{M}([S_0,\bra{0}_\q\ket{\psi}]_\Gamma) = \{\ket{\varphi_i},\nu_i\}_i$ and $\mathscr{M}([S_1,\bra{1}_\q\ket{\psi}]_{\Gamma}) = \{\ket{\phi_j},\mu_j\}_j$. Since the derivation of a transition from $[\tb{qcase}\ \q\ (0 \rightarrow S_0,1\rightarrow S_1),\ket{\psi}]_{\q,\Gamma}$ must end with Rule (Q),
\begin{align*}
 & \mathscr{M}([\tb{qcase}\ \q\ (0 \rightarrow S_0,1\rightarrow S_1),\ket{\psi}]_{\q,\Gamma}) \\
 &\quad = \{(\mu_j \ket{0}_\q\otimes \ket{\varphi_i}_{\Gamma^{S_0}} + \nu_i\ket{1}_\q \otimes \ket{\phi_j}_{\Gamma^{S_1}},\nu_i\mu_j)\}_{i,j},
\end{align*}
where $\Gamma^{S_0}=\Gamma^{S_1}$.
Moreover, $\sem{\tb{qcase }\q \ (0\rightarrow S_0,1\rightarrow S_1)}_{\q,\Gamma}=\overline{\tb{qcase}}_{\q}[\sem{S_0}_{\Gamma},\sem{S_1}_{\Gamma}]$. With $\sem{S_0}_{\Gamma}=(\mathcal{C}_0,F_0)$ and $\sem{S_1}_{\Gamma}=(\mathcal{C}_1,F_1)$, we have:
\begin{align*}
  & \sem{\tb{qcase }\q \ (0\rightarrow S_0,1\rightarrow S_1)}_{\q,\Gamma}  \\
  &\quad = \bigl(\mathcal{P}_0^\q \otimes \mathcal{C}_0  + \mathcal{P}_1^\q \otimes \mathcal{C}_1 + (\ket{0}\bra{0}_\q \otimes F_0)(\cdot)(\ket{1}\bra{1}_\q\otimes F_1^{\dag}) \\
 & \qquad +(\ket{1}\bra{1}_\q \otimes F_1)(\cdot)(\ket{0}\bra{0}_\q\otimes F_0^{\dag}),\ket{0}\bra{0}_\q\otimes F_0 + \ket{1}\bra{1}_\q\otimes F_1\bigr)
\end{align*}
First, we have
\begin{align*}
 \sum_{i,j} & (\mu_j \ket{0}_\q\otimes \ket{\varphi_i}  + \nu_i\ket{1}_\q \otimes \ket{\phi_j})(\mu_j \bra{0}_\q\otimes \bra{\varphi_i} + \nu_i\bra{1}_\q \otimes \bra{\phi_j})  \\
&= \sum_{i,j}\mu_j \ket{0}\bra{0}_\q \otimes \ket{\varphi_i}\bra{\varphi_j} + \nu_i\ket{1}\bra{1}_\q\otimes \ket{\phi_j}\bra{\phi_j}\\
&\qquad + \mu_j\nu_i \ket{0}\bra{1}_\q \otimes \ket{\varphi_i}\bra{\phi_i} + \nu_i\mu_j\ket{1}\bra{0}_\q\otimes \ket{\phi_j}\bra{\varphi_i} \\
 &= \ket{0}\bra{0}_\q \otimes \sum_i \ket{\varphi_i}\bra{\varphi_i} + \ket{1}\bra{1}_\q\otimes \sum_j \ket{\phi_j}\bra{\phi_j} \\
 & \quad + \ket{0}\bra{1}_\q \otimes \Biggl(\sum_i \nu_i \ket{\varphi_i}\Biggr)\Biggl(\sum_j \mu_j \bra{\phi_j} \Biggr) \\
 &\quad + \ket{1}\bra{0}_\q \otimes \Biggl(\sum_j \mu_i\ket{\phi_j}\Biggr)\Biggl(\sum_i \nu_i \bra{\varphi_i} \Biggr) \\
 &= \ket{0}\bra{0}_\q \otimes \sum_i \ket{\varphi_i}\bra{\varphi_i} + \ket{1}\bra{1}_\q\otimes \sum_j \ket{\phi_j}\bra{\phi_j} \\
 & \quad + \ket{0}\bra{1}_\q \otimes F_0\bra{0}_\q \ket{\psi}\braket{\psi}{1}_\q F_1^{\dag}\\
 &\quad+ \ket{1}\bra{0}_\q \otimes F_1\bra{1}_\q \ket{\psi}\braket{\psi}{0}_\q F_0^{\dag}
\end{align*}
where, to obtain the second equality, we used Lemma~\ref{lem:prob_distr} and, to obtain the third equality, we used the induction hypothesis. Then, by induction hypothesis we also have:
\begin{align*}
 &\bigl[\mathcal{P}_0^\q \otimes \mathcal{C}_0+ \mathcal{P}_1^\q \otimes \mathcal{C}_1 + (\ket{0}\bra{0}_\q \otimes F_0)(\cdot)(\ket{1}\bra{1}_\q\otimes F_1^{\dag}) \\
 &+(\ket{1}\bra{1}_\q \otimes F_1)(\cdot)(\ket{0}\bra{0}_\q\otimes F_0^{\dag})\bigr] (\ket{\psi}\bra{\psi})  \\
 & \quad = \ket{0}\bra{0}_\q \otimes \mathcal{C}_0(\bra{0}_\q (\ket{\psi}\bra{\psi})\ket{0}_\q) \\
 &\qquad + \ket{1}\bra{1}_\q \otimes \mathcal{C}_1(\bra{1}_\q (\ket{\psi}\bra{\psi})\ket{1}_\q) \\
 &\qquad + \ket{0}\bra{1}_\q \otimes F_0 \bra{0}_\q (\ket{\psi}\bra{\psi}) \ket{1}_\q F_1^{\dag} \\
 &\qquad + \ket{1}\bra{0}_\q \otimes F_1 \bra{1}_\q (\ket{\psi}\bra{\psi}) \ket{0}_\q F_0^{\dag} \\
 & \quad =  \ket{0}\bra{0}_\q \otimes \sum_i \ket{\varphi_i}\bra{\varphi_i} + \ket{1}\bra{1}_\q\otimes \sum_j \ket{\phi_j}\bra{\phi_j}  \\
 & \qquad + \ket{0}\bra{1}_\q \otimes F_0 \bra{0}_\q (\ket{\psi}\bra{\psi}) \ket{1}_\q F_1^{\dag} \\
 &\qquad + \ket{1}\bra{0}_\q \otimes F_1 \bra{1}_\q (\ket{\psi}\bra{\psi}) \ket{0}_\q F_0^{\dag}
\end{align*}
which matches the previous expression, and
\begin{align*}
&\bigl( \ket{0}\bra{0}_\q\otimes F_0 + \ket{1}\bra{1}_\q\otimes F_1\bigr)\ket{\psi} \\
&\quad = \ket{0}_q\otimes F_0 \bra{0}_\q\ket{\psi} + \ket{1}_q\otimes F_1 \bra{1}_\q\ket{\psi} \\
&\quad =\ket{0}_\q \otimes \sum_i \nu_i\ket{\varphi_i} + \ket{1}_\q \otimes \sum_j \mu_j\ket{\phi_j} \\
&\quad =\sum_{ij}\nu_i\mu_j\bigl(\mu_j \ket{0}_\q\otimes \ket{\varphi_i} + \nu_i\ket{1}_\q \otimes \ket{\phi_j}\bigr)
\end{align*}
where to obtain the last equality we used Lemma~\ref{lem:prob_distr}.

Finally, suppose that the statement has the form $\tb{while} \ \q \ \tb{do} \ S$. On the one hand, the denotational semantics is given by the supremum of iterated applications of $\mathscr{F}_{\q}^{S}$ to $(0,0)$:
\begin{align}
\label{eq:while_semantics_2}
 \sem{\tb{while} \ \q\ \tb{do}\ S}_{\q,\Gamma} = \bigvee_{n\in \mathbb{N}}(\mathscr{F}_{\q}^{S})^n(0,0).
\end{align}
On the other hand, for all $\ket{\psi}$, let us consider $\mathscr{M}([\tb{while}\ \q \ \tb{do}\ S,\ket{\psi}]_{\q,\Gamma})$. Derivations of transitions from this configuration can have arbitrary depth regardless of $S$ because the Rule (W$_1$) can be applied arbitrarily many times. In order to bound the number of iterations of the $\tb{while}$ loop, we modify our language by adding a $\tb{while}$ statement indexed by a maximum number of iterations $n\geq 1$ (counting the step during which the loop is exited). New rules are added to the operational semantics:
\[\scalebox{0.9}{
 \begin{prooftree}
 \infer0[(W$_{n,0}$)]{[\tb{while}_n\ \q \tb{ do}\ S,\ket{\psi}]_{\q,\Gamma}\stackrel{1}{\to} \ket{0}\bra{0}_{\q}\ket{\psi}}
 \end{prooftree}\quad \text{\raisebox{9pt}{for all $n\geq 1$}}}
\]
\[\scalebox{0.9}{
 \begin{prooftree}
\hypo{[S;\tb{while}_{n-1} \ \q \ \tb{do}\ S, \ket{1}\bra{1}_{\q}\ket{\psi}]_{\q,\Gamma}\stackrel{\nu}{\to} \ket{\psi'}}
\infer1[(W$_{n,1}$)]{[\tb{while}_n\ \q\ \tb{do}\ S,\ket{\psi}]_{\q,\Gamma}\stackrel{0}{\to} \ket{\psi'}}
 \end{prooftree}\quad \text{\raisebox{4pt}{for all $n\geq 2$}}}
\]
If for some $n$, $[\tb{while}_n \ \q\ \tb{do}\ S,\ket{\psi}]_{\q,\Gamma}\stackrel{\nu}{\to}\ket{\psi'}$ then $[\tb{while} \ \q\ \tb{do}\ S,\ket{\psi}]_{\q,\Gamma}\stackrel{\nu}{\to}\ket{\psi'}$. This can be shown by induction on $n$. And, conversely, if $[\tb{while} \ \q\ \tb{do}\ S,\ket{\psi}]_{\q,\Gamma}\stackrel{\nu}{\to}\ket{\psi'}$ then there exists $n$ such that \\
$[\tb{while}_n \ \q\ \tb{do}\ S,\ket{\psi}]_{\q,\Gamma}\stackrel{\nu}{\to}\ket{\psi'}$. Moreover, there is a one-to-one correspondence between derivations of these transitions.
Therefore:
\begin{align}
\label{eq:while_outcomes_2}
\mathscr{M}([\tb{while}\ \q \ \tb{do}\ S,\ket{\psi}]_{\q,\Gamma}) = \bigcup_{n\geq 1}\mathscr{M}([\tb{while}_n\ \q \ \tb{do}\ S,\ket{\psi}]_{\q,\Gamma})
\end{align}
where $\cup$ denotes multiset union.
In order to show a correspondence between Equation~\ref{eq:while_semantics_2} and Equation~\ref{eq:while_outcomes_2}, we prove the following by induction on $n\geq 1$: for all $\ket{\psi}\in St(\Hilb_{\q,\Gamma})$, writing \\
$\mathscr{M}([\tb{while}_n\ \q \ \tb{do}\ S,\ket{\psi}]_{\q,\Gamma})=\{(\ket{\psi_i'},\nu_i)\}_i$ and $ (\mathscr{F}_{\q}^{S})^n(0,0)=(\mathcal{C}_n,F_n)$, we have:
\begin{align*}
\mathcal{C}_n(\ket{\psi}\bra{\psi}) = \sum_i \ket{\psi'_i}\bra{\psi_i'} \quad \text{ and }\quad
F_n\ket{\psi} = \sum_i \nu_i\ket{\psi'_i}
\end{align*}
For $n=1$, $\mathscr{M}([\tb{while}_1\ \q \ \tb{do}\ S,\ket{\psi}]_{\q,\Gamma}) = \left\{\left(\ket{0}\bra{0}_\q\ket{\psi},1\right)\right\}$
and $\mathscr{F}_{\q}^{S}(0,0)=\bigl(\mathcal{P}_0^\q,\ket{0}\bra{0}_\q\bigr)$,
and we can check that both equalities hold.
Suppose the result holds for a fixed $n\geq 1$. For all multiset $\{(\ket{\phi_i},\mu_i)\}_i$ and $\lambda\in\mathbb{C}$, define $\{(\ket{\phi_i},\mu_i)\}_i\cdot \lambda \triangleq \{(\ket{\phi_i},\lambda \mu_i)\}_i$, where all occurrences of the element $(0,0)$ are removed.
\begin{align}
\label{eq:while_indexed_outcomes_2}
\begin{split}
 \mathscr{M}&([\tb{while}_{n+1}\ \q\ \tb{do}\ S,\ket{\psi}]_{\q,\Gamma}) \\
 &= \left\{\left(\ket{0}\bra{0}_{\q}\ket{\psi},1\right)\right\} \uplus \mathscr{M}\bigl(\left[S;\tb{while}_n\ \q\ \tb{do}\ S,\ket{1}\bra{1}_\q\ket{\psi}\right]_{\q,\Gamma}\bigr)\cdot 0
 \end{split}
\end{align}
Let $\mathscr{M}([S, \ket{1}\bra{1}_\q\ket{\psi}]_{\q,\Gamma})=\{(\ket{\psi_i'},\nu_i)\}_i$ and for all $i$,\\ $\mathscr{M}([\tb{while}_n\ \q\ \tb{do}\ S, \ket{\psi'_i}]_{\q,\Gamma})=\{(\ket{\psi''_{ij}},\mu_{ij})\}_j$. Then, \\$ \mathscr{M}\bigl(\left[S;\tb{while}_n\ \q\ \tb{do}\ S,\ket{1}\bra{1}_\q\ket{\psi}\right]_{\q,\Gamma}\bigr) = \{(\ket{\psi''_{ij}},\mu_{ij})\}_{ij}$. Moreover,
\begin{align*}
 (\mathscr{F^}_{\q}^{S})^{n+1}(0,0)&=\mathscr{F}_{\q}^{S}(\mathcal{C}_n,F_n) \\
 &=\overline{\tb{meas}}_\q \left[(\mathcal{I}_{\q,\Gamma},I_{\q,\Gamma}),(\mathcal{C}_n,F_n)\circ \sem{S}_{\q,\Gamma}\right] \\
 &= \left(\mathcal{P}_0^\q + \mathcal{C}_n \circ \mathcal{D} \circ \mathcal{P}_1^\q,\ket{0}\bra{0}_\q \right)
\end{align*}
where we write $\sem{S}_{\q,\Gamma}=(\mathcal{D},G)$. Therefore:
\begin{align*}
 \mathcal{C}_{n+1}(\ket{\psi}\bra{\psi})
 &=\mathcal{P}_0^\q(\ket{\psi}\bra{\psi})+\mathcal{C}_n\circ\mathcal{D}\circ\mathcal{P}_1^\q(\ket{\psi}\bra{\psi}) \\
 &= \ket{0}\bra{0}_\q (\ket{\psi}\bra{\psi})\ket{0}\bra{0}_\q \\
 &\quad +\mathcal{C}_n\circ\mathcal{D}(\ket{1}\bra{1}_\q(\ket{\psi}\bra{\psi})\ket{1}\bra{1}_\q) \\
 &= \ket{0}\bra{0}_\q (\ket{\psi}\bra{\psi})\ket{0}\bra{0}_\q +\mathcal{C}_n \left(\sum_i \ket{\psi_i'}\bra{\psi_i'}\right) \\
 &\qquad \small \text{by induction hypothesis (outer induction)} \\
 &= \ket{0}\bra{0}_\q (\ket{\psi}\bra{\psi})\ket{0}\bra{0}_\q +\sum_{i,j} \ket{\psi_{ij}''}\bra{\psi_{ij}''} \\
 &\qquad \small \text{by induction hypothesis (inner induction)}
\end{align*}
Combining this with Equation~\ref{eq:while_indexed_outcomes_2}, we obtain the first result. Then, we have $F_{n+1}\ket{\psi}=\ket{0}\bra{0}_\q\ket{\psi}$, which gives us the second result.

We conclude the proof by taking the limit in $n\rightarrow +\infty$ and using the fact that if $\bigvee_n (\mathscr{F}_{\q}^{S})^n(0,0)=(\mathcal{C},F)$, then $\mathcal{C}(\ket{\psi}\bra{\psi})=\bigvee_n \mathcal{C}_n(\ket{\psi}\bra{\psi})$.
\end{proof}

\section{Full Abstraction: Proof of Theorem~\ref{thm:fullabstraction}}
\label{app:ct_eq}

To prove full abstraction, we will need the following lemmas:
\begin{lemma}
\label{lem:unused_vars_ct}
Suppose that $((\Gamma,\Sigma_1);S)$ and $((\Gamma,\Sigma_2);S)$ are well-formed programs with $\Sigma_1\cap\Sigma_2=\emptyset$. Then:
\begin{itemize}
 \item $S$ does not contain any of the variables in $\Sigma_1$ or $\Sigma_2$, i.e., $\text{\emph{Var}}(S)\cap (\Sigma_1,\Sigma_2)=\emptyset$.
\item With $\Delta=(\Gamma,\Sigma_1)^S\cap(\Gamma,\Sigma_2)^S$, we have $(\Gamma,\Sigma_1)^S=\Delta,\Sigma_1$ and $(\Gamma,\Sigma_2)^S=\Delta,\Sigma_2$.
\end{itemize}
\end{lemma}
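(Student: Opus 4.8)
The plan is to reduce everything to Lemma~\ref{lem:input_output_vars}, which canonically decomposes each input and output environment around the intrinsic sets $\text{in}(S)$ and $\text{out}(S)$. Before starting I would record two elementary facts, both provable by a one-line structural induction directly from the defining clauses of $\text{in}$ and $\text{out}$: namely $\text{in}(S)\subseteq\text{Var}(S)$ and $\text{out}(S)\subseteq\text{Var}(S)$ for every well-formed $S$. These let me transfer the disjointness $\text{Var}(S)\cap\Sigma_i'=\emptyset$ supplied by the lemma down to $\text{in}(S)$ and $\text{out}(S)$ whenever needed.

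Applying Lemma~\ref{lem:input_output_vars} to $((\Gamma,\Sigma_1);S;\Delta_1)$ produces an environment $\Sigma_1'$ with $\Gamma,\Sigma_1=\text{in}(S),\Sigma_1'$, $\Delta_1=\text{out}(S),\Sigma_1'$ and $\text{Var}(S)\cap\Sigma_1'=\emptyset$, and likewise an environment $\Sigma_2'$ for the second program. The crucial step is the inclusion $\text{in}(S)\subseteq\Gamma$: since $\text{in}(S)\subseteq\Gamma\cup\Sigma_1$ and $\text{in}(S)\subseteq\Gamma\cup\Sigma_2$, while the environment notation $\Gamma,\Sigma_i$ forces $\Gamma\cap\Sigma_i=\emptyset$ and $\Sigma_1\cap\Sigma_2=\emptyset$ by hypothesis, intersecting gives $\text{in}(S)\subseteq(\Gamma\cup\Sigma_1)\cap(\Gamma\cup\Sigma_2)=\Gamma$. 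With this in hand, any variable of $\Sigma_1$ lies in $\Gamma,\Sigma_1=\text{in}(S)\uplus\Sigma_1'$ but not in $\text{in}(S)\subseteq\Gamma$ (because $\Sigma_1\cap\Gamma=\emptyset$), so $\Sigma_1\subseteq\Sigma_1'$; hence $\text{Var}(S)\cap\Sigma_1\subseteq\text{Var}(S)\cap\Sigma_1'=\emptyset$, and symmetrically for $\Sigma_2$. This is exactly the first bullet.

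For the second bullet I would set $\Gamma_0\triangleq\Gamma\setminus\text{in}(S)$, so that $\Gamma=\text{in}(S)\uplus\Gamma_0$ and therefore $\Sigma_i'=\Gamma_0\uplus\Sigma_i$ and $\Delta_i=\text{out}(S)\uplus\Gamma_0\uplus\Sigma_i$. Taking $\Delta\triangleq\text{out}(S)\uplus\Gamma_0$, it remains to check $\Sigma_1\cap\Delta_2=\emptyset$ and $\Sigma_2\cap\Delta_1=\emptyset$; these follow from $\Sigma_1\cap\Sigma_2=\emptyset$, from $\Sigma_i\cap\Gamma_0=\emptyset$ (as $\Gamma_0\subseteq\Gamma$), and from $\Sigma_i\cap\text{out}(S)=\emptyset$, the last being the first bullet combined with $\text{out}(S)\subseteq\text{Var}(S)$. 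Consequently $\Delta_1\cap\Delta_2=\Delta$ and $\Delta_i=\Delta,\Sigma_i$, as claimed. I expect the only delicate point to be the bookkeeping of the disjoint-union decompositions — in particular making the inclusion $\text{in}(S)\subseteq\Gamma$ and the three disjointness checks fully explicit — rather than any conceptual difficulty.
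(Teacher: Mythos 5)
Your proof is correct and takes exactly the route the paper intends: the paper's entire proof of this lemma is the single sentence ``This is a consequence of Lemma~\ref{lem:input_output_vars}'', and your argument is a careful, accurate expansion of that reduction (the key step $\text{in}(S)\subseteq(\Gamma\cup\Sigma_1)\cap(\Gamma\cup\Sigma_2)=\Gamma$ and the resulting identification $\Sigma_i'=\Gamma_0\uplus\Sigma_i$ are precisely the bookkeeping the paper leaves implicit). No gaps.
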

\begin{proof}
 This follows from Lemma~\ref{lem:input_output_vars}.
\end{proof}

\begin{lemma}
\label{lem:unused_vars_id}
Let $((\Gamma,\Sigma);S)$ be a well-formed program such that none of the variables of $\Sigma$ appear in $S$, that is $\Sigma\cap \text{\emph{Var}}(S)=\emptyset$. Then $\sem{S}_{\Gamma,\Sigma}$ has the form $(\mathcal{C}\otimes \mathcal{I}_{\Sigma},F\otimes I_{\Sigma})$, where $\mathcal{C}\in \tb{QO}(\Hilb_\Gamma,\Hilb_{\Gamma^S})$ and $F\in\mathcal{L}(\Hilb_\Gamma,\Hilb_{\Gamma^S})$. Moreover, $\mathcal{C}$ and $F$ are independent of $\Sigma$.
\end{lemma}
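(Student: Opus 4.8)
The plan is to prove the statement by structural induction on $S$, showing simultaneously for all environments $\Gamma,\Sigma$ with $\Sigma\cap\text{Var}(S)=\emptyset$ that $\sem{S}_{\Gamma,\Sigma}=(\mathcal{C}\otimes\mathcal{I}_\Sigma,\,F\otimes I_\Sigma)$, where $(\mathcal{C},F)\triangleq\sem{S}_\Gamma$. First I would record that $(\Gamma;S;\Delta)$ is itself a well-formed program, so that $\sem{S}_\Gamma$ is well defined by Proposition~\ref{prop:denotational_semantics}: applying Lemma~\ref{lem:input_output_vars} to $((\Gamma,\Sigma);S;(\Delta,\Sigma))$ yields $\Sigma'$ with $\Gamma,\Sigma=\text{in}(S),\Sigma'$ and $\Delta,\Sigma=\text{out}(S),\Sigma'$ and $\text{Var}(S)\cap\Sigma'=\emptyset$; since $\text{in}(S),\text{out}(S)\subseteq\text{Var}(S)$ is disjoint from $\Sigma$, this places $\text{in}(S)$ inside $\Gamma$ and, by the same input/output analysis underlying weakening (Lemma~\ref{lem:weakening}), gives a derivation of $\Gamma\vdash S\triangleright\Delta$. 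This observation also settles the final ``independence of $\Sigma$'' claim at once, since $(\mathcal{C},F)=\sem{S}_\Gamma$ does not mention $\Sigma$. Throughout, I read $\mathcal{C}\otimes\mathcal{I}_\Sigma$ and $F\otimes I_\Sigma$ as the reordering-corrected tensor products dictated by $\preceq$, exactly as in the conventions fixed for $\ket{\psi}_\Gamma\otimes\ket{\phi}_\Delta$.

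The base cases ($\tb{skip}$, $\tb{new qbit}\ \q$, $\tb{discard}\ \q$, $\q*=U$) follow by inspection: each interpretation acts nontrivially only on the variables of $\text{Var}(S)$ and as the identity elsewhere, and since these variables are disjoint from $\Sigma$ the $\Sigma$-part factors out into the claimed form. For the inductive cases I would exploit that each semantic combinator commutes with $-\otimes\mathcal{I}_\Sigma$ whenever $\Sigma$ is disjoint from the variables being acted on. For $S_0;S_1$, Lemma~\ref{lem:input_output_vars} gives $(\Gamma,\Sigma)^{S_0}=\Gamma^{S_0},\Sigma$, and the interchange law of the symmetric monoidal category $\tb{QO}$ yields $(\mathcal{C}_1\otimes\mathcal{I}_\Sigma)\circ(\mathcal{C}_0\otimes\mathcal{I}_\Sigma)=(\mathcal{C}_1\circ\mathcal{C}_0)\otimes\mathcal{I}_\Sigma$, with the analogous identity $(F_1\otimes I_\Sigma)(F_0\otimes I_\Sigma)=(F_1F_0)\otimes I_\Sigma$ for the transformation matrices, matching the definition of $\circ$ on pairs. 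For $\tb{meas}$ and $\tb{qcase}$, the projectors $\mathcal{P}^\q_k$ and the off-diagonal blocks $\ket{0}\bra{0}_\q\otimes F_0$, $\ket{1}\bra{1}_\q\otimes F_1$ all act only on $\q$ and on the output of the branches, so the definitions of $\overline{\tb{meas}}_\q$ and $\overline{\tb{qcase}}_\q$ commute with $-\otimes\mathcal{I}_\Sigma$; combined with the induction hypothesis on the branches this closes these cases.

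The main obstacle is the $\tb{while}$ case, where the interpretation is a least fixed point. Write $\Phi$ for the embedding $(\mathcal{C},F)\mapsto(\mathcal{C}\otimes\mathcal{I}_\Sigma,\,F\otimes I_\Sigma)$ from $\mathcal{K}_{\q,\Gamma;\q,\Gamma}$ into $\mathcal{K}_{\q,\Gamma,\Sigma;\q,\Gamma,\Sigma}$, and let $\mathscr{F}^S_\q$ and $\mathscr{G}^S_\q$ denote the loop functionals of Figure~\ref{fig:semantics_denotational_minimal} over $\q,\Gamma$ and $\q,\Gamma,\Sigma$ respectively. Using the induction hypothesis on the body $S$ (which gives $\sem{S}_{\q,\Gamma,\Sigma}=\Phi(\sem{S}_{\q,\Gamma})$) together with the composition and measurement cases just treated, I would establish the intertwining $\mathscr{G}^S_\q\circ\Phi=\Phi\circ\mathscr{F}^S_\q$. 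Since $\Phi(0,0)=(0,0)$, an induction on $n$ then yields $(\mathscr{G}^S_\q)^n(0,0)=\Phi\bigl((\mathscr{F}^S_\q)^n(0,0)\bigr)$. Writing $X_n\triangleq(\mathscr{F}^S_\q)^n(0,0)$, it remains to prove $\bigvee_n\Phi(X_n)=\Phi\bigl(\bigvee_n X_n\bigr)$. For this I would argue that $\Phi$ is monotone (tensoring with the completely positive $\mathcal{I}_\Sigma$ preserves the Löwner ordering) and continuous for the Euclidean topology (on vacuum extensions it is a composite of linear maps between finite-dimensional spaces, via the linearity of Equation~\ref{eq:vacuum_extension_sem} in $\mathcal{C}$ and $F$), and that $\Phi$ preserves $\mathcal{K}$. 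Monotonicity gives $\bigvee_n\Phi(X_n)\le\Phi\bigl(\bigvee_n X_n\bigr)$; Remark~\ref{rem:topological_limit_channel} identifies $\bigvee_n X_n$ with the topological limit of the increasing chain $\{X_n\}_n$, so continuity of $\Phi$ gives $\Phi\bigl(\bigvee_n X_n\bigr)=\lim_n\Phi(X_n)=\bigvee_n\Phi(X_n)$, the last equality holding because $\{\Phi(X_n)\}_n$ is increasing and (again by Remark~\ref{rem:topological_limit_channel}) its supremum is its topological limit. This yields $\sem{\tb{while}\ \q\ \tb{do}\ S}_{\q,\Gamma,\Sigma}=\Phi(\text{lfp}(\mathscr{F}^S_\q))=\Phi\bigl(\sem{\tb{while}\ \q\ \tb{do}\ S}_{\q,\Gamma}\bigr)$, completing the induction.
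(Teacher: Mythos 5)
Your proposal is correct and follows essentially the same route as the paper's proof: structural induction on $S$, with the only delicate case being $\tb{while}$, handled by showing the iterates $(\mathscr{F}^S_\q)^n(0,0)$ factor as $(\mathcal{C}_n\otimes\mathcal{I}_\Sigma,F_n\otimes I_\Sigma)$ and then passing to the supremum via Remark~\ref{rem:topological_limit_channel} and continuity of tensoring. Your formulation via the embedding $\Phi$ and the intertwining $\mathscr{G}^S_\q\circ\Phi=\Phi\circ\mathscr{F}^S_\q$, and your upfront identification of $(\mathcal{C},F)$ with $\sem{S}_\Gamma$, are just slightly more systematic packagings of the same argument (the paper records the fact that $(\Gamma;S;\Delta)$ is a program, and hence $(\mathcal{C},F)=\sem{S}_\Gamma$, as a remark after the proof).
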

\begin{proof}
We show by structural induction on $S$ that for all environments $\Gamma,\Delta,\Sigma$ such that  $(\Gamma,\Sigma)\vdash S\triangleright (\Delta,\Sigma)$ is derivable and $\text{Var}(S)\cap \Sigma=\emptyset$, the result holds. All the cases are clear except the sequence and $\tb{while}$ statements.

For the sequence case, suppose that $S=S_1;S_2$. The derivation of the judgment $\Gamma,\Sigma\vdash S \triangleright \Gamma^S,\Sigma$ must have as its last step
\[\scalebox{1}{
\begin{prooftree}
 \hypo{\Gamma,\Sigma\vdash S_1 \triangleright \Lambda}
 \hypo{\Lambda\vdash S_2\triangleright \Gamma^S,\Sigma}
 \infer2[
 ]{\Gamma,\Sigma \vdash S_1;S_2\triangleright \Gamma^S,\Sigma}
\end{prooftree}}
\]
for some environment $\Lambda$. Since $S_1$ does not contain any of the variables of $\Sigma$, we can show by structural induction on $S_1$ that $\Lambda$ has the form $\Lambda',\Sigma$.
Therefore, by induction hypothesis, $\sem{S_1}_{\Gamma,\Sigma}=(\mathcal{C}_1\otimes \mathcal{I}_{\Sigma},F_1\otimes I_{\Sigma})$ (with  $\mathcal{C}_1\in \tb{QO}(\Hilb_\Gamma,\Hilb_{\Lambda'})$ and $F_1\in\mathcal{L}(\Hilb_\Gamma,\Hilb_{\Lambda'})$) and  $\sem{S_2}_{\Lambda',\Sigma}=(\mathcal{C}_2\otimes \mathcal{I}_{\Sigma},F_2\otimes I_{\Sigma})$ (with  $\mathcal{C}_2\in \tb{QO}(\Hilb_{\Lambda'},\Hilb_{\Gamma^S})$ and $F_2\in\mathcal{L}(\Hilb_{\Lambda'},\Hilb_{\Gamma^S})$); and $\mathcal{C}_1,\mathcal{C}_2,F_1,F_2$ are independent of $\Sigma$. Therefore, $\sem{S}_{\Gamma,\Sigma}=((\mathcal{C}_2\circ\mathcal{C}_1)\otimes \mathcal{I}_{\Sigma},F_2F_1\otimes I_{\Sigma})$.

Then, suppose that $S=\tb{while}\ \q \ \tb{do}\ S'$ and $(\q,\Gamma,\Sigma)\vdash S\triangleright(\q,\Gamma,\Sigma)$ is derivable with $\Sigma\cap \text{Var}(S')=\emptyset$.
We have $\sem{S}_{\q,\Gamma,\Sigma}=\bigvee_{n}F^n_{\q,S'}(0,0)$, where for all $(\mathcal{C},F)\in\tb{CQO}_{\q,\Gamma,\Sigma;\q,\Gamma,\Sigma}$:
\[
 \mathscr{F}_{\q}^{S'}(\mathcal{C},F)=\overline{\tb{meas}}_{\q}\bigl[(\mathcal{I}_{\q,\Gamma,\Sigma},I_{\q,\Gamma,\Sigma}), (\mathcal{C},F)\circ \sem{S'}_{\q,\Gamma,\Sigma}\bigr]
\]
By induction hypothesis, $\sem{S'}_{\q,\Gamma,\Sigma}=(\mathcal{D}\otimes \mathcal{I}_{\Sigma},G\otimes I_{\Sigma})$ (with  $\mathcal{D}\in \tb{QO}(\Hilb_{\q,\Gamma}, \Hilb_{\q,\Gamma})$ and $G\in\mathcal{L}(\Hilb_{\q,\Gamma},\Hilb_{\q,\Gamma})$); and $\mathcal{D}$ and $G$ are independent of $\Sigma$. Using this fact, we show by induction that for all $n$, $(\mathscr{F}_{\q}^{S'})^n(0,0)$ has the form $(\mathscr{F}_{\q}^{S'})^n(0,0)=(\mathcal{C}_n\otimes \mathcal{I}_{\Sigma},F_n\otimes I_{\Sigma})$, for some $\mathcal{C}_n,F_n$ which are also independent of $\Sigma$.
We write $\sem{S}_{\q,\Gamma, \Sigma}=(\mathcal{C},F)$. As a consequence of Lemma~\ref{lem:topological_limit_channel}, we have $\mathcal{C}=\lim_{n\rightarrow +\infty} \mathcal{C}_n\otimes \mathcal{I}_{\Sigma}$ and $F=\lim_{n\rightarrow +\infty} F_n\otimes I_{\Sigma}$.
By continuity of $\otimes$, $\mathcal{C}$ and $F$ have the forms $\mathcal{C}=\mathcal{C}'\otimes \mathcal{I}_\Sigma$ and $F=F'\otimes I_\Sigma$, where $\mathcal{C}'$ and $F'$ are independent of $\Sigma$.
\end{proof}

Under the assumptions of Lemma~\ref{lem:unused_vars_id}, we can show by induction that $(\Gamma;S)$ is also a well-formed program, and therefore $(\mathcal{C},F)=\sem{S}_{\Gamma}$.

\begin{lemma}
 \label{lem:interpretation_context}
 Suppose $(\Gamma_1;S)$, $(\Gamma_1;S')$, $(\Gamma_2;S)$, $(\Gamma_2;S')$ are well-formed programs with $\Gamma_1^S=\Gamma_1^{S'}$ and $\Gamma_2^S=\Gamma_2^{S'}$. Then $\sem{S}_{\Gamma_1}=\sem{S'}_{\Gamma_1}$ if and only if $\sem{S}_{\Gamma_2}=\sem{S'}_{\Gamma_2}$.
\end{lemma}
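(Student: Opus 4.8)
The plan is to reduce the whole statement to a single elementary move: \emph{adding one variable that is unused by both statements}. Since observational data depends only on the ``core'' behaviour of a statement on the variables it actually touches, I expect that changing the ambient environment from $\Gamma_1$ to $\Gamma_2$ only amounts to tensoring with identities on variables that neither $S$ nor $S'$ mentions. The first step is to record the key sublemma: if $\Gamma\vdash S\triangleright\Delta$ and $\Gamma\vdash S'\triangleright\Delta$ are derivable (note that by the Uniqueness lemma the shared output $\Delta$ is forced once the input is fixed) and $\q\notin\Gamma\cup\text{Var}(S)\cup\text{Var}(S')$, then $\sem{S}_\Gamma=\sem{S'}_\Gamma$ if and only if $\sem{S}_{\q,\Gamma}=\sem{S'}_{\q,\Gamma}$.

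To prove this sublemma I would apply Lemma~\ref{lem:weakening} to both statements, obtaining $\q,\Gamma\vdash S\triangleright\q,\Delta$ and $\q,\Gamma\vdash S'\triangleright\q,\Delta$ (so the outputs still agree), and then invoke Lemma~\ref{lem:unused_vars_id} with $\Sigma=\{\q\}$ together with the remark following it, which gives $\sem{S}_{\q,\Gamma}=(\mathcal{C}\otimes\mathcal{I}_\q,\,F\otimes I_\q)$ with $(\mathcal{C},F)=\sem{S}_\Gamma$, and likewise $\sem{S'}_{\q,\Gamma}=(\mathcal{C}'\otimes\mathcal{I}_\q,\,F'\otimes I_\q)$ with $(\mathcal{C}',F')=\sem{S'}_\Gamma$. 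The forward direction is then immediate, and the backward direction follows from the injectivity of tensoring with an identity: feeding $\rho\otimes\ket{0}\bra{0}$ into $\mathcal{C}\otimes\mathcal{I}_\q=\mathcal{C}'\otimes\mathcal{I}_\q$ yields $\mathcal{C}(\rho)=\mathcal{C}'(\rho)$ for all $\rho$, and evaluating $F\otimes I_\q=F'\otimes I_\q$ on $\ket{\phi}\otimes\ket{0}$ yields $F=F'$; hence $\sem{S}_\Gamma=\sem{S'}_\Gamma$.

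Next I would check that the variables distinguishing $\Gamma_1$ and $\Gamma_2$ are exactly of this ``unused'' kind. By Lemma~\ref{lem:input_output_vars}, for each valid input $\Gamma$ one has $\Gamma=\text{in}(S)\uplus\Sigma$ with $\Sigma\cap\text{Var}(S)=\emptyset$, and since $\text{in}(S)\subseteq\text{Var}(S)$ this forces $\text{in}(S)=\Gamma\cap\text{Var}(S)$. Applying this to both $\Gamma_1$ and $\Gamma_2$ gives $\Gamma_1\cap\text{Var}(S)=\Gamma_2\cap\text{Var}(S)=\text{in}(S)$, so any variable in the symmetric difference $\Gamma_1\triangle\Gamma_2$ lies outside $\text{Var}(S)$; the identical argument gives it outside $\text{Var}(S')$. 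Setting $V\triangleq\Gamma_1\cup\Gamma_2$, every variable of $V\setminus\Gamma_1$ and of $V\setminus\Gamma_2$ is therefore unused by both statements, and each intermediate environment obtained by adding such variables one at a time still contains $\text{in}(S)$ (resp. $\text{in}(S')$) and meets $\text{Var}(S)$ (resp. $\text{Var}(S')$) only in $\text{in}(S)$, so it remains a valid input for both statements, with matching outputs throughout because weakening enlarges both output environments by the same variable. Chaining the sublemma across these single additions gives $\sem{S}_{\Gamma_1}=\sem{S'}_{\Gamma_1}\iff\sem{S}_{V}=\sem{S'}_{V}$ and $\sem{S}_{\Gamma_2}=\sem{S'}_{\Gamma_2}\iff\sem{S}_{V}=\sem{S'}_{V}$, and combining these two equivalences yields the claim.

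The only genuinely delicate points are bookkeeping rather than conceptual: first, keeping track of the implicit reordering of subsystems by $\preceq$ when writing the tensor factorizations, which is already absorbed into the $(-)\otimes\mathcal{I}_\Sigma$ notation of Lemma~\ref{lem:unused_vars_id}; and second, ensuring that at every stage of the chain the two statements share a common output environment, which I guarantee by always applying weakening to $S$ and $S'$ simultaneously. The injectivity of tensoring with the identity is the one small computation to spell out, but it is routine. I do not anticipate any obstacle beyond verifying these validity conditions for the intermediate environments.
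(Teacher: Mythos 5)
Your proof is correct and rests on the same mechanism as the paper's: the variables on which $\Gamma_1$ and $\Gamma_2$ differ are unused by $S$ and $S'$ (a consequence of Lemma~\ref{lem:input_output_vars}, packaged in the paper as Lemma~\ref{lem:unused_vars_ct}), so Lemma~\ref{lem:unused_vars_id} factorizes each interpretation as a common core tensored with an identity, and injectivity of $(-)\otimes\mathcal{I}$ transfers equality of denotations across environments. The only difference is organizational — you pass upward to the union $\Gamma_1\cup\Gamma_2$ one weakening step at a time, whereas the paper passes downward to the intersection $\Gamma_1\cap\Gamma_2$ in a single application of Lemmas~\ref{lem:unused_vars_ct} and~\ref{lem:unused_vars_id} — and you make explicit the injectivity step that the paper leaves implicit.
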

\begin{proof}
We write $\Delta_1=\Gamma_1^S=\Gamma_1^{S'}$ and $\Delta_2=\Gamma_2^S=\Gamma_2^{S'}$. Let $\Gamma=\Gamma_1\cap\Gamma_2$ and $\Delta=\Delta_1 \cap \Delta_2$. We write $\Gamma_1=\Gamma,\Sigma_1$ and $\Gamma_2=\Gamma,\Sigma_2$. By Lemma~\ref{lem:unused_vars_ct}, $S$ and $S'$ do not contain any of the variables of $\Sigma_1$ and $\Sigma_2$, and $\Delta_1=\Delta,\Sigma_1$ and $\Delta_2=\Delta,\Sigma_2$. Therefore by Lemma~\ref{lem:unused_vars_id}, we have:
 \begin{align*}
\sem{S}_{\Gamma_1}&=(\mathcal{C}\otimes \mathcal{I}_{\Sigma_1},F\otimes I_{\Sigma_1}) & \sem{S'}_{\Gamma_1}&=(\mathcal{D}\otimes \mathcal{I}_{\Sigma_1},G\otimes I_{\Sigma_1})\\
\sem{S}_{\Gamma_2}&=(\mathcal{C}\otimes \mathcal{I}_{\Sigma_2},F\otimes I_{\Sigma_2}) & \sem{S'}_{\Gamma_2}&=(\mathcal{D}\otimes \mathcal{I}_{\Sigma_2},G\otimes I_{\Sigma_2})
 \end{align*}
for some $\mathcal{C},\mathcal{D}\in\tb{QO}(\Hilb_\Gamma, \Hilb_\Delta)$ and $F,G\in\mathcal{L}(\Hilb_\Gamma,\Hilb_\Delta)$. Consequently, $\sem{S}_{\Gamma_1}=\sem{S'}_{\Gamma_1}$ if and only if $\sem{S}_{\Gamma_2}=\sem{S'}_{\Gamma_2}$ if and only if $\mathcal{C}=\mathcal{D}$ and $F=G$.
\end{proof}

We can now prove Theorem~\ref{thm:fullabstraction}:
\thmfullabstraction*
\begin{proof}
Suppose that $\sem{S}_\Gamma = \sem{S'}_{\Gamma}$, and let $C^{\Gamma\rightarrow \Delta}$ be a context compatible with $(\Gamma;S)$ and $(\Gamma;S')$ (where $\Delta=\Gamma^S=\Gamma^{S'}$).
The derivation of $\emptyset \vdash C^{\Gamma\rightarrow\Delta} \triangleright \emptyset$ must contain a derivation of $\Gamma,\Sigma\vdash [\cdot]^{\Gamma\rightarrow \Delta}\,\triangleright\, \Delta,\Sigma$ for some environment $\Sigma$. The compatibility condition ensures that $\sem{S}_{\Gamma,\Sigma}$ and $\sem{S'}_{\Gamma,\Sigma}$ are both well defined, and by Lemma~\ref{lem:interpretation_context}, they are equal.
Then, by compositionality of the denotational semantics, we have $\sem{C[S]}_{\emptyset}=\sem{C[S']}_{\emptyset}$. Combined with Corollary~\ref{cor:proba}, this implies that $p(C[S],\ket{})=p(C[S'],\ket{})$. Since this holds for all suitable $C^{\Gamma\rightarrow \Delta}$, we have $(\Gamma;S)\approx (\Gamma;S')$.

Conversely, suppose that $\sem{S}_{\Gamma}\neq\sem{S'}_{\Gamma}$. Let $\sem{S}_{\Gamma}=(\mathcal{C},F)$ and $\sem{S'}_{\Gamma}=(\mathcal{D},G)$. The aim is to construct a context $C^{\Gamma\rightarrow \Delta}$ (where $\Delta=\Gamma^S=\Gamma^{S'}$) that results in different probabilities of termination. First suppose that $\mathcal{C}\neq\mathcal{D}$. Let $\ket{\psi}\in St(\Hilb_\Gamma)$ be a input state such that $\mathcal{C}(\ket{\psi}\bra{\psi})\neq\mathcal{D}(\ket{\psi}\bra{\psi})$.
Let $\mathcal{C}(\ket{\psi}\bra{\psi})-\mathcal{D}(\ket{\psi}\bra{\psi})=\sum_{i=0}^{2^m-1} \lambda_i \ket{\epsilon_i}\bra{\epsilon_i}$, where $(\ket{\epsilon_i})_{0\leq i\leq 2^m-1}$ is an orthonormal basis of $\Hilb_{\Delta}$. Without loss of generality, we assume that $\lambda_0\neq 0$. Let $U$ be a unitary matrix such that $U \ket{0}^{n}=\ket{\psi}$, and let $V$ be the unitary matrix defined by $V\ket{\epsilon_i}=\ket{i}$ for all $i$.
Writing $\Gamma=\p_1,\ldots,\p_n$ and $\Delta=\q_1,\ldots,\q_m$, we define $C^{\Gamma\rightarrow \Delta}$ as follows:
\begin{align*}
 C^{\Gamma\rightarrow \Delta} \ \triangleq \ & \tb{new qbit }\p_1,\ldots,\p_n ; \\
 & S_U ; [\cdot]^{\Gamma\rightarrow \Delta} ; S_V ; \\
 & \tb{meas}\  \q_1,\ldots,\q_m\ ( \\
 & \quad 0^m\rightarrow \tb{skip} , \\
 &\quad \tb{else}\rightarrow \text{LOOP} \\
 &); \\
 & \tb{discard} \ \q_1,\ldots,\q_m
\end{align*}
where $S_U$ and $S_V$ are statements implementing $U$ and $V$ (as constructed in Lemma~\ref{lem:unitary_prgm}), LOOP is the non-terminating statement from Example~\ref{ex:loop_op}, and notations generalizing the $\tb{new qbit}$, $\tb{meas}$, and $\tb{discard}$ statements to multiple variables are used. Then by Lemma~\ref{lem:input_output_vars}, $C^{\Gamma \rightarrow \Delta}$ is compatible with $(\Gamma;S)$ and $(\Gamma;S')$, and we have:
\begin{align*}
 \sem{C[S]}_{\emptyset}=\ & \sem{\tb{discard}\ \q_1,\ldots,\q_m}_{\Delta} \\
 &\circ \sem{\tb{meas}\  \q_1,\ldots,\q_m\ ( 0^m\rightarrow \tb{skip} , 1\rightarrow \text{LOOP})}_{\Delta}\\
 & \circ \sem{S_V}_{\Delta}\circ \sem{S}_{\Gamma}\circ\sem{S_U}_{\Gamma}\circ\sem{\tb{new qbit }\p_1,\ldots,\p_n}_{\emptyset}
\end{align*}
We have:
\begin{align*}
&\sem{ \tb{meas}\ \q_{1},\ldots,\q_m \ ( 0^{m} \rightarrow \tb{skip}, \tb{else} \rightarrow \text{LOOP})}_{\Delta}\\
&\quad= \overline{\tb{meas}}_{\q_{1},\ldots,\q_m} \bigl[\sem{\tb{skip}}_{\Delta},\sem{\text{LOOP}}_{\Delta}\bigr] \\
 &\quad= \overline{\tb{meas}}_{\q_{1},\ldots,\q_m} \bigl[(\mathcal{I}_{\Delta},I_{\Delta}), (0,0)\bigr] \\
 &\quad= \bigl((\ket{0}\bra{0})_{\Delta}^{\otimes m} (\cdot)(\ket{0}\bra{0})_{\Delta}^{\otimes m} , (\ket{0}\bra{0})_{\Delta}^{\otimes m}\bigr)
\end{align*}
where $\overline{\tb{meas}}_{\q_{1},\ldots,\q_m}$ generalizes $\overline{\tb{meas}}_{\q}$ to several variables, the first input corresponding to all qubits in the $\ket{0}$ state. Therefore:
\begin{align*}
 \sem{C[S]}_{\emptyset}=&\bigl(Tr_{\Delta},\bra{0}^{\otimes m}_{\Delta}\bigr)\circ \bigl((\ket{0}\bra{0})_{\Delta}^{\otimes m} (\cdot)(\ket{0}\bra{0})_{\Delta}^{\otimes m} , (\ket{0}\bra{0})_{\Delta}^{\otimes m}\bigr)\\
 &\circ (V(\cdot)V^{\dag},V)\circ (\mathcal{C},F)\circ(U(\cdot)U^{\dag},U)\\
 & \circ ((\ket{0}\bra{0})^{\otimes n}_{\Gamma},\ket{0}^{\otimes n}_{\Gamma}) \\
 =&\left(\bra{\epsilon_0}\mathcal{C}(\ket{\psi}\bra{\psi})\ket{\epsilon_0},\bra{\epsilon_0}F\ket{\psi}\right)
\end{align*}
And similarly, $\sem{C[S']}_{\emptyset}=\left(\bra{\epsilon_0}\mathcal{D}(\ket{\psi}\bra{\psi})\ket{\epsilon_0},\bra{\epsilon_0}G\ket{\psi}\right) $. Therefore, by Corollary~\ref{cor:proba}:
\begin{align*}
 p(C[S],\ket{})-p(C[S'],\ket{}) &= \bra{\epsilon_0}\mathcal{C}(\ket{\psi}\bra{\psi})\ket{\epsilon_0} - \bra{\epsilon_0}\mathcal{D}(\ket{\psi}\bra{\psi})\ket{\epsilon_0} \\
 &= \lambda_0\neq 0
\end{align*}
Therefore $(\Gamma;S)\not\approx (\Gamma;S')$.

Now suppose that $\mathcal{C}=\mathcal{D}$ and $F\neq G$. Let $R$ be the statement that performs the necessary qubit initializations and discards to achieve the judgment $\Gamma \vdash R \triangleright \Delta$:
\begin{align*}
 R\ \triangleq\ & \left. \tb{new qbit }\q ; \quad \ \, \right\} \text{for all }\q\in \Delta\setminus \Gamma \\
  & \left. \tb{discard }\p ; \qquad \right\} \text{for all }\p\in \Gamma\setminus \Delta
\end{align*}
Then we have:
\begin{align*}
 \sem{R}_\Gamma =& \left(Tr_{\Gamma\setminus \Delta},\bra{0}^{\otimes |\Gamma\setminus \Delta|}_{\Gamma\setminus\Delta}\otimes I_\Delta \right) \\
 &\circ \left((\ket{0}\bra{0})^{\otimes|\Delta\setminus\Gamma|}_{\Delta\setminus\Gamma}\otimes \mathcal{I}_{\Gamma}, \ket{0}^{\otimes|\Delta\setminus\Gamma|}_{\Delta\setminus\Gamma}\otimes I_{\Gamma}\right) \\
 =&\left(Tr_{\Gamma\setminus \Delta}\left( \cdot \right)\otimes (\ket{0}\bra{0})^{\otimes|\Delta\setminus\Gamma|}_{\Delta\setminus\Gamma},\bra{0}^{\otimes |\Gamma\setminus \Delta|}_{\Gamma\setminus\Delta}\otimes\ket{0}^{\otimes|\Delta\setminus\Gamma|}_{\Delta\setminus\Gamma}\otimes I_{\Gamma\cap\Delta}\right)
\end{align*}
We write this pair as $\sem{R}_{\Gamma}=(\mathcal{E},H)$. Let $\qr$ be a qubit variable that does not appear in $\Gamma,\Delta$, $S$ or $S'$. Consider the context:
\begin{align*}
C_0^{\Gamma\rightarrow \Delta}\ \triangleq \ \tb{qcase} \ \qr \ (0\rightarrow [\cdot], 1 \rightarrow R )
\end{align*}
Then,
\begin{align*}
 \sem{C_0[S]}_{\qr,\Gamma} & = \overline{\tb{qcase}}_{\qr}[(\mathcal{C},F),(\mathcal{E},J)] = (\mathcal{C}',F') \\
 \sem{C_0[S']}_{\qr,\Gamma} & = \overline{\tb{qcase}}_{\qr}[(\mathcal{D},G),(\mathcal{E},J)] = (\mathcal{D}',G')
\end{align*}
where
\begin{align*}
 \mathcal{C}' =& \mathcal{P}_0^\qr \otimes \mathcal{C} + \mathcal{P}_1^\qr \otimes \mathcal{E} + (\ket{0}\bra{0}_\qr \otimes F)(\cdot)(\ket{1}\bra{1}_\qr\otimes J^{\dag}) \\
 &+(\ket{1}\bra{1}_\qr \otimes J)(\cdot)(\ket{0}\bra{0}_\qr\otimes F^{\dag}) \\
 \mathcal{D}' =& \mathcal{P}_0^\qr \otimes \mathcal{D} + \mathcal{P}_1^\qr \otimes \mathcal{E} + (\ket{0}\bra{0}_\qr \otimes G)(\cdot)(\ket{1}\bra{1}_\qr\otimes J^{\dag}) \\
 & +(\ket{1}\bra{1}_\qr \otimes J)(\cdot)(\ket{0}\bra{0}_\qr\otimes G^{\dag})
\end{align*}
Then $\mathcal{C}'\neq \mathcal{D'}$ as, for instance, taking $\ket{\varphi}\in \Hilb_{\Gamma}$ such that $F\ket{\varphi}\neq G\ket{\varphi}$, we have 
\begin{align*}
&\mathcal{C}'\left((\ket{0}_\qr\otimes \ket{\varphi})(\bra{1}_\qr \otimes \bra{0}_{\Gamma}^{\otimes |\Gamma|})\right)
\\
&\quad= (\ket{0}_\qr\otimes F\ket{\varphi})(\bra{1}_\qr \otimes \bra{0}_{\Delta}^{\otimes |\Delta|}) \\
&\quad \neq (\ket{0}_\qr\otimes G\ket{\varphi})(\bra{1}_\qr \otimes \bra{0}_{\Delta}^{\otimes |\Delta|}) \\
&\quad = \mathcal{D}'\left((\ket{0}_\qr\otimes \ket{\varphi})(\bra{1}_\qr \otimes \bra{0}_{\Gamma}^{\otimes |\Gamma|})\right).
\end{align*}
Therefore we can proceed as in the first case. The resulting context will be compatible with $(\Gamma;S)$ and $(\Gamma;S')$, and therefore $(\Gamma;S)\not\approx (\Gamma;S')$.
\end{proof}

\end{document}